\newcommand{\mat}[1]{\mathbf{#1}}
\newcommand{\tensor}[1]{\bm{\mathscr{#1}}}
\newcommand{\vectorize}{\textnormal{vec}}
\newcommand{\frobenius}{\textnormal{F}}
\newcommand{\MM}{\textnormal{MM}\xspace}
\newcommand{\var}{\textnormal{Var}}
\newcommand{\RowSampling}{\textnormal{\texttt{SampleRows}}\xspace}
\newcommand{\KronMatMul}{\textnormal{\texttt{KronMatMul}}\xspace}
\newcommand{\FastKroneckerRegression}{\textnormal{\texttt{FastKroneckerRegression}}\xspace}
\newcommand{\FastFactorMatrixUpdate}{\textnormal{\texttt{FastFactorMatrixUpdate}}\xspace}
\newcommand{\ktimes}{\otimes}
\newcommand{\nnz}{\textnormal{nnz}}
\newcommand{\opt}{\textnormal{opt}}
\newcommand{\rank}{\textnormal{rank}}
\newcommand{\R}{\mathbb{R}}
\newcommand{\E}{\mathbb{E}}
\DeclareMathOperator*{\argmin}{arg\,min}
\DeclareMathOperator*{\defeq}{\overset{def}{=}}
\renewcommand{\epsilon}{\varepsilon}
\DeclarePairedDelimiter{\set}{\{}{\}}
\DeclarePairedDelimiter{\parens}{(}{)}
\DeclarePairedDelimiter{\bracks}{[}{]}
\DeclarePairedDelimiter{\ceil}{\lceil}{\rceil}
\DeclarePairedDelimiter{\norm}{\lVert}{\rVert}
\newcommand{\cG}{\mathcal{G}}
 \newcommand{\cP}{\mathcal{P}}
 \newcommand{\cR}{\mathcal{R}}
\theoremstyle{plain}
\newtheorem{theorem}{Theorem}[section]
\newtheorem{lemma}[theorem]{Lemma}
\newtheorem{corollary}[theorem]{Corollary}
\theoremstyle{definition}
\newtheorem{definition}[theorem]{Definition}
\newtheorem{remark}[theorem]{Remark}
\title{Subquadratic Kronecker Regression with Applications to Tensor Decomposition}
\author{%
  Matthew Fahrbach\thanks{Authors are listed alphabetically. A preliminary version of this work that focuses on efficient sketching for Tucker decomposition appears in \url{arXiv:2107.10654}~\cite{fahrbach2021fast}.}\\
  Google Research\\
  \texttt{\href{mailto:fahrbach@google.com}{fahrbach@google.com}}
  \And
  Gang Fu\\
  Google Research\\
  \texttt{\href{mailto:thomasfu@google.com}{thomasfu@google.com}}
  \And
  Mehrdad Ghadiri\\
  Georgia Tech\\
  \texttt{\href{mailto:ghadiri@gatech.edu}{ghadiri@gatech.edu}}
}
\begin{document}

\maketitle

\begin{abstract}

Kronecker regression is a highly-structured least squares problem $\min_{\mathbf{x}} \lVert \mathbf{K}\mathbf{x} - \mathbf{b} \rVert_{2}^2$, where the design matrix $\mathbf{K} = \mathbf{A}^{(1)} \otimes \cdots \otimes \mathbf{A}^{(N)}$ is a Kronecker product of factor matrices. This regression problem arises in each step of the widely-used alternating least squares (ALS) algorithm for computing the Tucker decomposition of a tensor. We present the first \emph{subquadratic-time} algorithm for solving Kronecker regression to a $(1+\varepsilon)$-approximation that avoids the exponential term $O(\varepsilon^{-N})$ in the running time. Our techniques combine leverage score sampling and iterative methods. By extending our approach to block-design matrices where one block is a Kronecker product, we also achieve subquadratic-time algorithms for (1) Kronecker ridge regression and (2) updating the factor matrices of a Tucker decomposition in ALS, which is not a pure Kronecker regression problem, thereby improving the running time of all steps of Tucker ALS. We demonstrate the speed and accuracy of this Kronecker regression algorithm on synthetic data and real-world image tensors.
\end{abstract}

\section{Introduction}
\label{sec:introduction}
Tensor decomposition has a rich multidisciplinary history
with countless applications in data mining, machine learning,
and signal processing~\cite{kolda2009tensor,rabanser2017introduction,sidiropoulos2017tensor,jang2021fast}.
The most widely-used tensor decompositions are the CP decomposition
and the Tucker decomposition.
Similar to the singular value decomposition of a matrix,
both decompositions have natural analogs of \emph{low-rank} structure.
Unlike matrix factorization, however,
computing the rank of a tensor and the best rank-one tensor are NP-hard~\cite{hillar2013most}.
Therefore, most low-rank tensor decomposition algorithms decide on the
rank structure in advance, and then optimize the variables of the decomposition
to fit the data.
While conceptually simple, this approach is extremely effective in practice for many applications.

The \emph{alternating least squares} (ALS) algorithm is the main workhorse
for low-rank tensor decomposition, e.g., it is the first algorithm mentioned in the MATLAB Tensor Toolbox~\cite{matlab}.
For both CP and Tucker decompositions,
ALS cyclically optimizes disjoint blocks of variables while keeping all others fixed.
As the name suggests, each step solves a linear regression problem.
The \emph{core tensor} update step in ALS for Tucker decompositions is
notoriously expensive but highly structured.
In fact, the design matrix of this regression problem is the
Kronecker product of the factor matrices of the Tucker decomposition
$\mat{K} = \mat{A}^{(1)} \ktimes \cdots \ktimes \mat{A}^{(N)}$.
Our work builds on a line of Kronecker regression algorithms~\cite{diao2018sketching,diao2019optimal,marco2019least}
to give the first \emph{subquadratic-time} algorithm for solving Kronecker
regression to a $(1+\varepsilon)$-approximation while avoiding an
exponential term of $O(\varepsilon^{-N})$ in the running time.

We combine leverage score sampling, iterative methods, and a novel way of multiplying sparsified Kronecker product matrices
to fully exploit the Kronecker structure of the design matrix.
We also extend our approach to block-design matrices where one block is a
Kronecker product, achieving subquadratic-time algorithms
for (1) Kronecker ridge regression and (2) updating the factor matrix of a Tucker
decomposition in ALS, which is not a pure Kronecker regression problem.
Putting everything together, this work improves the running time of all steps of
ALS for Tucker decompositions and
runs in time that is sublinear in the size of the input tensor,
linear in the error parameter $\varepsilon^{-1}$, and subquadratic in the number
of columns of the design matrix in each step.
Our algorithms support L2 regularization in the Tucker loss function, so the
decompositions can readily be used in downstream learning tasks,
e.g., using the factor matrix rows as embeddings for clustering~\cite{zhou2014decomposition}.
Regularization also plays a critical role in the more general
tensor completion problem to prevent overfitting when data is missing
and has applications in differential privacy~\cite{chaudhuri2011differentially,balu2016differentially}.

The current-fastest Kronecker regression algorithm of~\citet{diao2019optimal}
uses leverage score sampling and achieves the following running times
for $\mat{A}^{(n)}\in\mathbb{R}^{I_n\times R_n}$ with $I_n\geq R_n$, for all $n\in [N]$,
where $R=\prod_{n=1}^N R_n$ and $\omega < 2.373$
denotes the matrix multiplication exponent~\cite{alman2021refined}:
\begin{enumerate}
    \item $\tilde{O}(\sum_{n=1}^N (\text{nnz}(\mat{A}^{(n)}) \mkern-4mu + \mkern-4mu R_n^\omega) \mkern-4mu + \mkern-4mu R^{\omega} \epsilon^{-1})$
    by sampling $\tilde{O}(R \varepsilon^{-1})$
    rows of $\mat{K}$ by their leverage scores.
    \item $\tilde{O}(\sum_{n=1}^N (\text{nnz}(\mat{A}^{(n)}) + R_n^\omega\epsilon^{-1})+ R \epsilon^{-N})$
    by sampling $\tilde{O}(R_n \epsilon^{-1})$ rows from each factor matrix $\mat{A}^{(n)}$
    and taking the Kronecker product of the sampled factor matrices.
\end{enumerate}

Note that the second approach is linear in $R$,
but the error parameter has an exponential cost
in the number of factor matrices.
In this work, we show that the running time of the first approach can be improved to subquadratic in $R$
without increasing the running time dependence on~$\epsilon$ in the dominant term,
simultaneously improving on both approaches.

\begin{theorem}
\label{thm:kronecker-regression}
For $n\in[N]$, let $\mat{A}^{(n)}\in\mathbb{R}^{I_n\times R_n}$, $I_n\geq R_n$,
and $\mat{b}\in\mathbb{R}^{I_1\cdots I_n}$.
There is a $(1+\varepsilon)$-approximation algorithm for
solving $\argmin_{\mat{x}} \norm{(\mat{A}^{(1)}\otimes \cdots \otimes \mat{A}^{(N)})\mat{x} - \mat{b}}_2^2$
that runs in time
\vspace{-0.5em}
\begin{equation}
\label{eqn:main_theorem_expression}
\tilde{O}\parens*{
    \sum_{n=1}^N \parens*{\textnormal{nnz}(\mat{A}^{(n)})+R_n^\omega N^2 \epsilon^{-2}}+ \min_{S\subseteq [N]} \textnormal{MM}\parens*{
        \prod_{n\in S} R_n, R \varepsilon^{-1}, \prod_{n\in [N]\setminus S} R_n
    }
},
\end{equation}
where $\textnormal{MM}(a,b,c)$ is the running time of multiplying an
$a \times b$ matrix with a $b \times c$ matrix.
\end{theorem}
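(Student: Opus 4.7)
The plan is to combine leverage-score sampling on $\mat{K}$ with an iterative solve in which the Gram-matrix application exploits the fact that every sampled row, reshaped along any partition of the mode indices, is a rank-one matrix. This replaces the naive $R^\omega \varepsilon^{-1}$ cost of working with the sketched design directly by one rectangular matrix multiplication whose shape we optimize over $S \subseteq [N]$. First, I would compute a constant-factor approximation to the leverage scores of each factor $\mat{A}^{(n)}$ via a standard subspace embedding followed by a QR step, costing $\tilde{O}(\nnz(\mat{A}^{(n)}) + R_n^\omega)$ per factor. Because the leverage scores of $\mat{K} = \bigotimes_{n} \mat{A}^{(n)}$ factorize as products of the factor leverage scores, a row index $(i_1,\ldots,i_N)$ of $\mat{K}$ can be drawn by independently sampling each $i_n$ from the corresponding factor distribution, so $\mat{K}$ is never materialized. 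Drawing $m = \tilde{O}(R \varepsilon^{-1})$ samples with leverage-score reweighting yields an operator $\mat{S}$ such that $\mat{S}\mat{K}$ is a $(1+\varepsilon)$-spectral approximation of $\mat{K}$, and hence any minimizer of $\min_{\mat{x}}\|\mat{S}\mat{K}\mat{x} - \mat{S}\mat{b}\|_2^2$ is $(1+\varepsilon)$-approximate for the original problem.

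The main obstacle is solving this $m \times R$ sketched problem without paying $R^\omega \varepsilon^{-1}$ for its Gram matrix. Here I would use the following structural observation, which is the heart of the argument: the $i$-th row of $\mat{S}\mat{K}$, reshaped along a chosen partition $[N] = S \sqcup \overline{S}$, is the rank-one matrix $\mat{u}_i \mat{v}_i^\top$ with $\mat{u}_i = \bigotimes_{n\in S} \mat{A}^{(n)}_{i_n}$ of length $\prod_{n\in S} R_n$ and $\mat{v}_i = \bigotimes_{n\in \overline{S}} \mat{A}^{(n)}_{i_n}$ of length $\prod_{n\in\overline{S}} R_n$. Stacking the (weighted) $\mat{u}_i$'s and $\mat{v}_i$'s as the $m$ columns of matrices $\mat{U}$ and $\mat{V}$, the action of $(\mat{S}\mat{K})^\top(\mat{S}\mat{K})$ on any candidate vector, viewed as a matrix along the same partition, reduces to a single rectangular multiplication of the dimensions $\prod_{n\in S} R_n \times m \times \prod_{n\in \overline{S}} R_n$; optimizing over $S$ gives the MM term in \cref{eqn:main_theorem_expression}.

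Finally, I would plug this fast sketch application into a preconditioned iterative solver (Richardson or conjugate gradient), whose per-iteration cost is dominated by applying $\mat{K}$ and $\mat{K}^\top$ to vectors through mode-wise Kronecker matrix multiplication. Under a constant-factor spectral guarantee obtained from a slightly oversampled sketch, a polylogarithmic number of iterations suffices for $(1+\varepsilon)$ accuracy, and the per-iteration Kronecker matmul steps are absorbed into the first summand of \cref{eqn:main_theorem_expression}. The $R_n^\omega N^2 \varepsilon^{-2}$ per-factor term comes from bookkeeping the accuracy needed for the approximate leverage scores and for the preconditioner to propagate across all $N$ Kronecker factors without blowing up the guarantee; this error accounting, rather than the mechanics of the iterative solve itself, is the step I expect to be most delicate, since a naive union bound over $N$ modes would produce an unwanted exponential-in-$N$ dependence and must be replaced by a carefully coupled spectral argument.
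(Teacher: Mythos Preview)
Your proposal captures the paper's main ingredients: product-distribution leverage-score sampling on $\mat{K}$, solving the sketched system, and the rank-one reshaping of each sampled Kronecker row along a partition $S\sqcup\overline{S}$ to reduce sketched Gram--vector products to a single rectangular matrix multiplication of shape $\prod_{n\in S}R_n \times m \times \prod_{n\in\overline{S}}R_n$. This last observation is exactly the content of the paper's \Cref{thm:kron_mat_mul_sqrt_decomp}, and your attribution of the $R_n^\omega N^2\varepsilon^{-2}$ term to the per-factor accuracy needed so that the $N$-fold Kronecker spectral approximation does not blow up is also on target.

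The one real gap is the preconditioner, which you never pin down. You write that the per-iteration cost is ``dominated by applying $\mat{K}$ and $\mat{K}^\top$ to vectors'' and that these steps are ``absorbed into the first summand.'' Taken literally, applying $\mat{K}$ to a length-$R$ vector produces a length-$I$ vector, and $I=I_1\cdots I_N$ appears nowhere in the target bound; moreover the first summand $\sum_n(\nnz(\mat{A}^{(n)})+R_n^\omega N^2\varepsilon^{-2})$ has no dependence on $R$, so no per-iteration cost of order $R$ can be parked there. The paper's fix---which it flags explicitly as the key insight (\Cref{lemma:richardson_iteration} and the remark following it)---is to precondition the \emph{sketched} normal equations by the \emph{original} Gram matrix $\mat{K}^\intercal\mat{K}=\bigotimes_n({\mat{A}^{(n)}}^\intercal\mat{A}^{(n)})$. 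Because this is a Kronecker product of $R_n\times R_n$ blocks, its (approximate) inverse is applied via the factor SVDs in $O(R\sum_n R_n)$ time, entirely in dimension $R$; this cost is dominated by the $\MM$ term, not the first summand. Once you name this preconditioner, your outline coincides with the paper's proof.
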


If we do not use fast matrix multiplication (\citet{gall2018improved}~and~\citet{alman2021refined}),
the last term in \Cref{eqn:main_theorem_expression} is $\tilde{O}(R^2 \epsilon^{-1})$,
which is \emph{already an improvement} over the standard $\tilde{O}(R^3 \epsilon^{-1})$ running time.
With fast matrix multiplication, 
$\text{MM}(\prod_{n\in S} R_n, R \varepsilon^{-1}, \prod_{n\in [N]\setminus S} R_n)$
is subquadratic in $R$
for any nontrivial subset $S \not\in \{\emptyset, [N]\}$,
which is an improvement over $\tilde{O}(R^{\omega} \varepsilon^{-1}) \approx \tilde{O}(R^{2.373} \varepsilon^{-1})$.
If there exists a ``balanced'' subset $S$ such that
$\prod_{n\in S} R_n \approx \sqrt{R}$,
our running time goes as low as $\tilde{O}(R^{1.626} \varepsilon^{-1})$ using \cite{gall2018improved}.
For ease of notation,
we denote the subquadratic improvement by the constant $\theta^* > 0$,
where
$
R^{2 - \theta^*}= \min_{S\subseteq [N]} \text{MM}(\prod_{n\in S} R_n, R, \prod_{n\in [N]\setminus S} R_n)$.

Updating the core tensor in the ALS algorithm for Tucker decomposition is a pure Kronecker product regression as described in Theorem \ref{thm:kronecker-regression}, but updating the factor matrices is a regression problem of the form $\argmin_{\mat{x}}\norm{\mat{K}\mat{M}\mat{x}-\mat{b}}_2^2$, where $\mat{K}$ is a Kronecker product and $\mat{M}$ is a matrix without any particular structure.
We show that such problems can be converted to block regression problems
where one of the blocks is $\mat{K}$.
We then develop sublinear-time leverage score sampling techniques
for these block matrices,
which leads to the following theorem that accelerates all of the ALS steps.

\begin{theorem}
\label{thm:main_theorem}
There is an ALS algorithm for L2-regularized Tucker decompositions that
takes a tensor $\tensor{X} \in \R^{I_1 \times \cdots \times I_N}$
and returns $N$ factor matrices
$\mat{A}^{(n)} \in \R^{I_n \times R_n}$
and a core tensor $\tensor{G} \in \R^{R_1 \times \cdots \times R_n}$
such that each factor matrix and core update
is a $(1+\varepsilon)$-approximation to the optimum
with high probability.
The running times of each step are:
\begin{itemize}[leftmargin=2em]
    \item Factor matrix $\mat{A}^{(k)}$: $\tilde{O}(\sum_{n=1}^N (\textnormal{nnz}(\mat{A}^{(n)})+R_n^\omega N^2 \epsilon^{-2})+I_k R_{\ne k}^{2-\theta^*} \varepsilon^{-1} + I_k R\sum_{n=1}^N R_n +  R_{k}^\omega)$,
    \item Core tensor $\tensor{G}$: $\tilde{O}(\sum_{n=1}^N (\textnormal{nnz}(\mat{A}^{(n)})+R_n^\omega N^2 \epsilon^{-2})+R^{2-\theta^*} \varepsilon^{-1})$,
  \end{itemize}
  where $R = \prod_{n=1}^N R_n$, $R_{\ne k} = R / R_k$, and
  $\theta^* > 0$ is a constant derived from
  fast rectangular matrix multiplication.
\end{theorem}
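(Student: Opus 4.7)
The plan is to prove the two bullets separately, since the core tensor update is a pure Kronecker regression while the factor matrix update is not and requires an additional block-reformulation step.

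For the core tensor update, the L2-regularized problem is equivalent to an unregularized least squares whose design matrix stacks $\mat{K} = \mat{A}^{(1)} \ktimes \cdots \ktimes \mat{A}^{(N)}$ on top of $\sqrt{\lambda}\,\mat{I}_R$, with right-hand side obtained by appending zeros to $\vectorize(\tensor{X})$. I would extend the leverage-score sampling framework of \Cref{thm:kronecker-regression} to this block design by observing that the identity block has trivial (uniform) leverage scores, so only the Kronecker block needs the nontrivial sampling procedure. The sampled system is then solved by the same subquadratic rectangular matrix multiplication as in \Cref{thm:kronecker-regression}, matching the claimed bound.

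For the factor matrix update, the mode-$k$ unfolding rewrites the step as minimizing $\|\mat{A}^{(k)}\mat{M}^\top\mat{K}^\top - \mat{X}_{(k)}\|_\frobenius^2 + \lambda\|\mat{A}^{(k)}\|_\frobenius^2$, where $\mat{K} = \bigotimes_{n\ne k}\mat{A}^{(n)}$ has $R_{\ne k}$ columns and $\mat{M} = \mat{G}_{(k)}^\top$ is a small unstructured matrix with $R_k$ columns. Treating each row of $\mat{A}^{(k)}$ independently gives $I_k$ regressions sharing the design $\mat{K}\mat{M}$, which is not itself a Kronecker product. The key reduction, as foreshadowed in the introduction, is to lift each regression to a block design in which $\mat{K}$ appears as one block (via the auxiliary variable $\mat{y} = \mat{M}\mat{a}$), with the non-Kronecker contributions of $\mat{M}$ and $\sqrt{\lambda}\mat{I}$ appearing as small additional blocks of only $O(R_k)$ rows. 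Since the leverage scores of a composite block matrix are dominated row-by-row by those of its individual blocks, the Kronecker leverage scores of \Cref{thm:kronecker-regression} still drive the sample complexity, which remains $\tilde{O}(R_{\ne k}\varepsilon^{-1})$.

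With the sketch $\mat{S}$ in hand, the per-step cost decomposes into the four terms of the bullet: (i) leverage score precomputation inherited from \Cref{thm:kronecker-regression}, giving $\sum_n(\nnz(\mat{A}^{(n)}) + R_n^\omega N^2\varepsilon^{-2})$; (ii) forming the sparsified product $\mat{S}\mat{K}\mat{M}$ and solving the sketched regression across all $I_k$ right-hand sides via the rectangular matrix multiplication on sparsified Kronecker rows, contributing $I_k R_{\ne k}^{2-\theta^*}\varepsilon^{-1}$; (iii) applying $\mat{K}^\top$ to each of the $I_k$ right-hand sides using standard Kronecker-vector multiplication, yielding $I_k R \sum_n R_n$; and (iv) inverting the $R_k\times R_k$ normal-equation matrix once at cost $R_k^\omega$ with the $O(I_k R_k^2)$ back-substitution absorbed into (iii). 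The main obstacle I anticipate is proving that the block-regression reformulation preserves the $(1+\varepsilon)$-approximation and that the leverage scores of the composite block matrix are controlled by those of $\mat{K}$ plus a lower-order contribution from the $O(R_k)$-sized blocks, so that the same Kronecker sampling scheme from \Cref{thm:kronecker-regression} drives both updates; a secondary bookkeeping challenge is scheduling the $I_k$ simultaneous right-hand sides through the sparsified Kronecker multiplication while retaining the subquadratic $R_{\ne k}^{2-\theta^*}$ dependence.
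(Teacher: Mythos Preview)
Your core-tensor bullet is fine and matches the paper: the ridge term is absorbed as an identity block and \Cref{thm:fast_kronecker_regression} applies directly.

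The factor-matrix bullet has a genuine gap. You correctly lift via $\mat{y}=\mat{M}\mat{a}$ (the paper does exactly this in \Cref{lemma:constrained_least_squares}, with the range constraint enforced by a penalty in \Cref{lemma:constraint-reg}), and your observation that the block leverage scores are controlled by those of $\mat{K}$ is right and is the easy part (\Cref{lemma:approximate_block_regression}). The gap is in how you \emph{solve} the sketched lifted problem. After lifting, the optimization variable lives in $\R^{R_{\ne k}}$, so the sketched normal matrix is $R_{\ne k}\times R_{\ne k}$, not $R_k\times R_k$; you cannot ``invert the $R_k\times R_k$ normal-equation matrix once at cost $R_k^\omega$'' as claimed in your item (iv). Directly inverting would cost $R_{\ne k}^\omega$, which destroys the subquadratic bound.

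The paper's resolution is the step you are missing: use Richardson iteration on the sketched problem with the \emph{original} (non-sketched) normal matrix as preconditioner, so that every preconditioner application retains full Kronecker structure. The ridge block $\sqrt{\lambda}\,(\mat{G}_{(k)}^\intercal)^+$ and the penalty block $\sqrt{w}\,\mat{N}$ (which has $R_{\ne k}$ rows, not $O(R_k)$ as you wrote, though it is a rank-$R_k$ perturbation away from $\mat{I}$) together form a rank-$R_k$ update to $\mat{K}^\intercal\mat{K}+w\mat{I}$; the preconditioner inverse is then computed via the Woodbury identity, and the $R_k^\omega$ in the theorem is the cost of inverting the inner $R_k\times R_k$ Woodbury block, not a normal-equation matrix. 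Likewise, the $I_k R\sum_n R_n$ term is the cost of applying this Woodbury-factored preconditioner (via \Cref{lemma:kron_mat_mul}) inside each Richardson step for each of the $I_k$ rows---it is not, as your item (iii) suggests, the cost of applying the unsketched $\mat{K}^\intercal$ to the right-hand sides. In short, your anticipated ``main obstacle'' (leverage-score control for the block matrix) is the routine part; the actual obstacle is solving the $R_{\ne k}$-dimensional sketched system without paying $R_{\ne k}^\omega$, and your proposal does not address it.
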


For tensors of even modest order,
the superlinear term in $R$ is the bottleneck in many applications
since $R$ is exponential in the order of the tensor.
It follows that our improvements are significant in both theory and practice
as illustrated in our experiments in~\Cref{sec:experiments}.
{
\begin{table}[t]
\footnotesize
  \caption{
    Running times of \texttt{TuckerALS} 
    (\Cref{alg:alternating_least_squares})
    factor matrix and core tensor updates
    for ${\lambda = 0}$
    using different Kronecker regression methods.
    The factor matrices are denoted by $\mat{A}^{(n)} \in \R^{I_n \times R_n}$.
    The input tensor has size $I = I_1 \cdots I_N$
    and the core tensor has size $R = R_1 \cdots R_N$.
    Let $I_{\ne k} = I / I_k$ and $R_{\ne k} = R / R_k$.
    We use $\omega < 2.373$ for the matrix-multiplication exponent
    and the constant $\theta^* > 0$
    for the optimally balanced fast rectangular matrix
    multiplication as stated in~\Cref{thm:kron_mat_mul_sqrt_decomp},
    i.e.,
    $R^{2-\theta^*}  = 
    \min_{T \subseteq [N]} \text{MM}\parens{
      \prod_{n\in T} R_n, R, \prod_{n\notin T} R_n
    }$.
    Factors of $N$ are dropped for notational brevity.
  }
  \label{table:algorithm_comparison}
  \centering
  \setlength{\tabcolsep}{5pt}  
  \begin{tabular}{lcc}
    \toprule
    Algorithm & Factor matrix $\mat{A}^{(k)}$ & Core tensor $\tensor{G}$ \\
    \midrule
    Naive
    &
    $O\parens{I_k R R_{\ne k} + I_k R R_k
    + R_k^\omega + I R_{\ne k}}$
    &
    $O\parens{R^\omega + IR}$ \\
    This paper (\Cref{lemma:kron_mat_mul}) 
    &
    $O\parens{I_k R(\sum_{n=1}^N R_n) + R_k^\omega + I (\sum_{n \ne k} R_n) + I_k R_{k}^2}$
    &
    $O\parens{R^2 + I \sum_{n=1}^N R_n}$ \\
    This paper (\Cref{thm:main_theorem})
    &
    $\tilde{O}(I_k R_{\ne k}^{2-\theta^*} \varepsilon^{-1} + I_k R (\sum_{n=1}^N R_n) + R_{k}^\omega \varepsilon^{-2})$
    &
    $\tilde{O}(R^{2-\theta^*} \varepsilon^{-1})$ \\
    \citet{diao2019optimal} & --- & $\tilde{O}(R^{\omega} \varepsilon^{-2})$ \\
    \bottomrule
  \end{tabular}
\end{table}
}

\subsection{Our Contributions and Techniques}

We present several new results about approximate Kronecker regression
and the ALS algorithm for Tucker decompositions.
Below is a summary of our contributions:

\begin{enumerate}
    \item Our main technical contribution is the algorithm
    $\FastKroneckerRegression$ in~\Cref{sec:kronecker_regression}.
    This Kronecker regression
    algorithm builds on the block-sketching tools
    introduced in~\Cref{sec:row_sampling},
    and combines iterative methods with
    a fast novel Kronecker-matrix multiplication for sparse vectors and matrices and fast rectangular matrix multiplication to
    achieve a running time that is \emph{subquadratic} in the number of columns in the
    Kronecker matrix.
    A key insight is to use
    the original (non-sketched) Kronecker product
    as the preconditioner in the Richardson iterations
    when solving the sketched problem.
    This, by itself, improves the running time to quadratic. 
    Then to achieve subqudratic running time,
    we exploit the singular value decomposition of Kronecker products and present a novel method for multiplying a sparsified Kronecker product matrix (\Cref{lemma:kron_mat_mul} and \Cref{thm:kron_mat_mul_sqrt_decomp}).

  \item
  We generalize our Kronecker regression techniques
  to work for Kronecker ridge regression and
  the factor matrix updates in ALS for Tucker decomposition.
  We show that a factor matrix update is equivalent to solving an
  \emph{equality-constrained} Kronecker regression problem with a low-rank
  update to the preconditioner in the Richardson iterations.
  We can implement these new matrix-vector products nearly as fast
  by using the Woodbury matrix identity.
  Thus, we provably speed up each step of Tucker ALS, i.e., the core tensor and factor matrices.

\item
    We give a block-sketching toolkit in \Cref{sec:row_sampling}
    that states we can sketch blocks of a matrix by their leverage scores,
    i.e., their leverage scores in isolation, not with respect to the entire block matrix.
    This is one of the ways we exploit the Kronecker product structure
    of the design matrix.
    This approach can be useful for constructing spectral approximations
    and for approximately solving block regression problems.
    One corollary is that we can use the ``sketch-and-solve'' method
    for any ridge regression problem (\Cref{cor:approximate_ridge_regression}).

  \item
    We compare \FastKroneckerRegression with
    \citet[Algorithm 1]{diao2019optimal} on a synthetic Kronecker regression
    task studied in~\cite{diao2018sketching,diao2019optimal}
    and
    as a subroutine in ALS for computing the Tucker decomposition of
    image tensors~\cite{ma2021fast,nascimento2016spatial,nene1996columbia}.
    Our results show the importance of reducing the running time dependence
    on the number of columns in the Kronecker product.
\end{enumerate}

\subsection{Related Work}

\paragraph{Kronecker Regression.}
\citet{diao2018sketching} recently gave the first Kronecker regression algorithm
based on \texttt{TensorSketch}~\cite{pagh2013compressed}
that is faster than forming the Kronecker product.
\citet{diao2019optimal} improved this by removing the dependence on
$O(\text{nnz}(\mat{b}))$ from the running time, where $\mat{b} \in \R^{I_1 \cdots I_N}$ is
the response vector.
\citet*{reddy2022dynamic} recently initiated the study of \emph{dynamic} Kronecker regression,
where the factor matrices $\mat{A}^{(n)}$ undergo updates and the solution vector can be efficiently queried.
\citet*{marco2019least} studied the generalized Kronecker regression problem.
Very recently, \citet{ghadiri2023bit} analyzed the bit complexity of iterative methods with preconditioning for solving linear regression problems under fixed-point arithmetic. They show that the actual running time of such algorithms
(i.e., the number of bit operations) is at most a factor of $\log(\kappa) \cdot \log(1/\epsilon)$
more than the number of arithmetic operations, where $\kappa$ is the condition number of the design matrix and~$\epsilon$ is the error parameter.
This result applies to our paper, too;
however, for the rest of this exposition,
we discuss the number of arithmetic operations.
Finally, note that $\kappa(\mat{A} \otimes \mat{B})=\kappa(\mat{A}) \cdot \kappa(\mat{B})$.

\paragraph{Ridge Leverage Scores.}
\citet{ahmed2015fast} extended the notion of
statistical leverage scores to account for L2 regularization.
Sampling from approximate ridge leverage score distributions
has since played a key role in 
sparse low-rank matrix approximation~\cite{cohen2017input},
the Nystr\"om method~\cite{musco2017recursive},
bounding statistical risk in ridge regression~\cite{mccurdy2018ridge},
and ridge regression~\cite{chowdhury2018iterative, mccurdy2018ridge,li2019towards,kacham2022sketching}.
Fast recursive algorithms for computing
approximate leverage scores~\cite{cohen2015uniform} and
for solving overconstrained least squares~\cite{li2013iterative}
are also closely related.

\paragraph{Tensor Decomposition.}
\citet{cheng2016spals} and \citet{larsen2020practical}
used leverage score sampling to speed up ALS
for CP decomposition.\footnote{
The design matrix in each step of ALS for CP decomposition is a
Khatri--Rao product, not a Kronecker product.
CP decomposition does not suffer from a bottleneck
step like ALS for Tucker decomposition since it is a sparser decomposition,
i.e., CP decomposition does not have a core tensor---just factor matrices.
}
\citet{song2019relative}
gave a polynomial-time, relative-error approximation algorithm
for several low-rank tensor decompositions, which include CP and Tucker.
\citet{frandsen2020optimization} showed
that if the tensor has an exact Tucker decomposition,
then all local minima are globally optimal.
Randomized low-rank Tucker decompositions based on sketching have become
increasingly popular, especially in streaming applications:~\cite{malik2018low, traore2019singleshot, che2019randomized, sun2020low, jang2021fast, malik2021sampling, ma2021fast, ahmadi2021randomized}.
The more general problem of low-rank tensor completion
is also a fundamental approach for estimating the values of
missing data~\cite{acar2011scalable,liu2012tensor,jain2013low,jain2014provable,filipovic2015tucker}.
Fundamental algorithms for
tensor completion
are based on
ALS~\cite{zhou2013tensor,grasedyck2015variants,liu2020tensor},
Riemannian optimization~\cite{kressner2014low,kasai2016low,madhav2018dual},
or projected gradient methods~\cite{yu2016learning}.
Optimizing the core shape of a Tucker decomposition subject to a memory constraint
or reconstruction error guarantees has also been studied recently~\cite{ghadiri2023approximately,hashemizadeh2020adaptive,ehrlacher2021adaptive,xiao2021rank}.
\section{Preliminaries}
\label{sec:preliminaries}

\begin{wrapfigure}{R}{0.6\textwidth}
\begin{minipage}{0.6\textwidth}
\vspace{-0.82cm}
\begin{algorithm}[H]
\caption{\texttt{TuckerALS}}
\label{alg:alternating_least_squares}
\textbf{Input:} $\tensor{X} \in \R^{I_1 \times \cdots \times I_N}$,
 core shape $(R_1,R_2,\dots,R_N)$, $\lambda$

\begin{algorithmic}[1]
\State Initialize core tensor $\tensor{G} \in \R^{R_1 \times R_2 \times \dots \times R_n}$
\State Initialize factors
  $\mat{A}^{(n)} \in \R^{I_n \times R_n}$ for $n \in [N]$
    \Repeat
        \For{$n=1$ to $N$}
            \State $\mat{K} \gets \mat{A}^{(1)} \otimes \dots \otimes \mat{A}^{(n-1)} \otimes \mat{A}^{(n+1)} \otimes \dots \otimes \mat{A}^{(N)}$
            \State $\mat{B} \gets \mat{X}_{(n)}$
            \For{$i=1$ to $I_{n}$}
              \State $\mat{y}^* \hspace{-0.08cm}\gets\hspace{-0.04cm} \argmin_{\mat{y}} \norm{\mat{K}\mat{G}_{(n)}^\intercal\mat{y} - \mat{b}_{i:}^\intercal}_2^2 + \lambda\norm{\mat{y}}_2^2$
              \State Update factor row $\mat{a}^{(n)}_{i:} \gets
                {\mat{y}^*}^\intercal$
            \EndFor
        \EndFor
        \State $\mat{K} \gets \mat{A}^{(1)} \ktimes \mat{A}^{(2)} \ktimes \cdots \ktimes \mat{A}^{(N)}$ 
        \State $\mat{g}^* \gets \argmin_{\mat{g}}
             \norm{\mat{K} \mat{g} - \vectorize(\tensor{X})}_{2}^2
             + \lambda \norm{\mat{g}}_{2}^2$
        \State Update core tensor $\cG \gets \vectorize^{-1}(\mat{g}^*)$
    \Until{convergence}
    \State \textbf{return}
      $\tensor{G}, \mat{A}^{(1)}, \mat{A}^{(2)}, \dots, \mat{A}^{(N)}$ 
\end{algorithmic}
\end{algorithm}
\vspace{-1.20cm}
\end{minipage}
\end{wrapfigure}

\paragraph{Notation.}
The \emph{order} of a tensor is the number of its dimensions.
We denote scalars by normal lowercase letters $x \in \R$,
vectors by boldface lowercase letters $\mat{x} \in \R^{n}$,
matrices by boldface uppercase letters $\mat{X} \in \R^{m \times n}$,
and higher-order tensors by boldface script letters
$\tensor{X} \in \R^{I_1 \times I_2 \times \cdots \times I_N}$.
We use normal uppercase letters to denote
the size of an index set (e.g., $[N] = \{1,2,\dots,N\}$).
The $i$-th entry of a vector $\mat{x}$ is denoted by $x_i$,
the $(i,j)$-th entry of a matrix $\mat{X}$ by $x_{ij}$,
and the $(i,j,k)$-th entry of a third-order tensor $\tensor{X}$ by $x_{ijk}$.

\paragraph{Linear Algebra.}
Let $\mat{I}_{n}$ denote the $n \times n$ identity matrix and
$\mat{0}_{m \times n}$ denote the $m \times n$ zero matrix.
The transpose of $\mat{A} \in \R^{m \times n}$
is $\mat{A}^\intercal$, the Moore--Penrose inverse (also called pseudoinverse) is $\mat{A}^+$, and the spectral norm is $\norm{\mat{A}}_2$.
The singular value decomposition (SVD) of $\mat{A}$ is a factorization
of the form $\mat{U} \mat{\Sigma} \mat{V}^\intercal$,
where $\mat{U} \in \R^{m \times m}$ and $\mat{V} \in \R^{n \times n}$
are orthogonal matrices,
and $\mat{\Sigma} \in \R^{m \times n}$ is a non-negative real diagonal matrix.
The entries $\sigma_{i}(\mat{A})$ of $\mat{\Sigma}$
are the singular values of $\mat{A}$,
and the number of non-zero singular values is equal to $r = \rank(\mat{A})$.
The \emph{compact SVD} is a related decomposition 
where $\mat{\Sigma} \in \R^{r \times r}$ is a
diagonal matrix containing the non-zero singular values.
The Kronecker product of two matrices $\mat{A} \in \R^{m \times n}$
and $\mat{B} \in \R^{p \times q}$ is denoted by
$\mat{A} \ktimes \mat{B} \in \R^{(mp)\times(nq)}$.

\paragraph{Tensor Products.}
\emph{Fibers} of a tensor are the vectors we get by fixing all but one index.
If $\tensor{X}$ is a third-order tensor,
we denote the column, row, and tube fibers by
$\mat{x}_{:jk}$, $\mat{x}_{i:k}$, and $\mat{x}_{ij:}$,
respectively.
The \emph{mode-$n$ unfolding} of a tensor
$\tensor{X} \in \R^{I_1 \times I_2 \times \cdots \times I_N}$
is the matrix $\mat{X}_{(n)} \in \R^{I_n \times (I_1 \cdots I_{n-1}I_{n+1}\cdots I_N)}$
that arranges the mode-$n$ fibers of $\tensor{X}$ as columns of
$\mat{X}_{(n)}$ ordered lexicographically by index.
The \emph{vectorization} of $\tensor{X} \in \R^{I_1 \times I_2 \times \cdots \times I_N}$
is the vector $\vectorize(\tensor{X}) \in \R^{I_1 I_2 \cdots I_N}$
formed by vertically stacking the entries of $\tensor{X}$
ordered lexicographically by index.
For example, this transforms $\mat{X} \in \R^{m \times n}$ into a
tall vector $\vectorize(\mat{X})$ by stacking its columns.
We use $\vectorize^{-1}(\mat{x})$ to undo this
operation when it is clear from context what the shape of the output tensor should be.

The \emph{$n$-mode product} of tensor $\tensor{X} \in \R^{I_1\times I_2 \times \cdots \times I_N}$
and matrix $\mat{A} \in \R^{J \times I_n}$ is denoted by
$\tensor{Y} = \tensor{X} \times_{n} \mat{A}$ where
$\tensor{Y} \in \R^{I_1\times \cdots \times I_{n-1} \times J \times I_{n+1} \times \cdots \times I_N}$.
This operation multiplies each mode-$n$ fiber of
$\tensor{X}$ by the matrix $\mat{A}$.
This operation is expressed elementwise as
\[\textstyle{
    \parens*{\tensor{X} \times_{n} \mat{A}}_{i_1\dots i_{n-1} j i_{n+1} \dots i_{N}}
    =
    \sum_{i_n=1}^{I_n} x_{i_1 i_2 \dots i_N} a_{j i_n}.}
\]
The Frobenius norm $\norm{\tensor{X}}_{\frobenius}$
of a tensor $\tensor{X}$ is the square root of the sum of
the squares of its entries.

\paragraph{Tucker Decomposition.}
The \emph{Tucker decomposition} decomposes tensor
$\tensor{X} \in \R^{I_1 \times I_2 \times \cdots \times I_N}$
into a \emph{core tensor}
$\tensor{G} \in \R^{R_1 \times R_2 \times \cdots \times R_N}$
and $N$ \emph{factor matrices}
$\mat{A}^{(n)} \in \R^{I_n \times R_n}$.
Given a regularization parameter $\lambda\in\mathbb{R}_{\geq 0}$,
we compute a Tucker decomposition
by minimizing the nonconvex loss function
\begin{align*}
  L\parens{\tensor{G}, \mat{A}^{(1)}, \dots, \mat{A}^{(N)} ; \tensor{X}}
  &=
  \norm{\tensor{X} - \tensor{G} \times_{1}\mat{A}^{(1)}
     \cdots
     \times_{N} \mat{A}^{(N)}
  }_{\frobenius}^2
  + \lambda \parens*{
    \norm{\tensor{G}}_{\frobenius}^2
    +
    \sum_{n=1}^N
      \norm{\mat{A}^{(n)}}_{\frobenius}^2
  }.
\end{align*}
Entries of the reconstructed tensor
$\widehat{\tensor{X}} \defeq \tensor{G} \mkern-3mu \times_{1} \mkern-3mu \mat{A}^{(1)} \mkern-3mu
     \times_{2} \mkern-3mu \cdots
     \times_{N} \mat{A}^{(N)}$
are
\begin{align}
\label{eqn:tucker_decomposition_elementwise}
  \widehat{x}_{i_1 i_2 \dots i_N}
  =
  \sum_{r_1=1}^{R_1} 
  \cdots \sum_{r_N=1}^{R_N}
  g_{r_1 r_2 \dots r_N}
  a_{i_1 r_1}^{(1)} 
  \cdots
  a_{i_N r_N}^{(N)}.
\end{align}
Equation~\Cref{eqn:tucker_decomposition_elementwise}
demonstrates that $\widehat{\tensor{X}}$ is the sum of $R_1 \cdots R_N$ rank-1 tensors.
The tuple $(R_1, R_2, \dots, R_N)$ is the \emph{multilinear rank} of the
decomposition. The multilinear rank is typically chosen in advance
and much smaller than the dimensions of~$\tensor{X}$.

\paragraph{Alternating Least Squares.}
We present \texttt{TuckerALS}
in \Cref{alg:alternating_least_squares}
and highlight its connections to Kronecker regression.
The core tensor update (Lines~{10--12})
is a ridge regression problem
where the design matrix
$\mat{K}_{\text{core}} \in \R^{I_1 \cdots I_N \times R_1 \cdots R_N}$
is a Kronecker product of the factor matrices.
Each factor matrix update (Lines~{5--9})
also has Kronecker product structure,
but there are additional subspace constraints we must account for.
We describe these constraints in more detail in \Cref{sec:algorithm}.
\section{Row Sampling and Approximate Regression}
\label{sec:row_sampling}

Here we establish our sketching toolkit.
The \emph{$\lambda$-ridge leverage score} of the $i$-th row of 
$\mat{A} \in \R^{n \times d}$ is
\begin{equation}
\label{eqn:ridge_leverage_score_def}
    \ell_{i}^{\lambda}\parens*{\mat{A}}
    \defeq
    \mat{a}_{i:}\parens*{\mat{A}^\intercal \mat{A} + \lambda\mat{I}}^+ \mat{a}_{i:}^\intercal.
\end{equation}
The matrix of \emph{cross $\lambda$-ridge leverage scores} is
$\mat{A}(\mat{A}^\intercal \mat{A} + \lambda\mat{I})^+\mat{A}^\intercal$.
We denote its diagonal by $\bm{\mat{\ell}}^\lambda(\mat{A})$
because it contains the $\lambda$-ridge leverage scores of $\mat{A}$.
Ridge leverage scores generalize \emph{statistical leverage scores}
in that setting $\lambda = 0$ gives the
leverage scores of $\mat{A}$.
We denote the vector of statistical leverage scores by
$\bm{\mat{\ell}}(\mat{A})$.
If $\mat{A} = \mat{U} \mat{\Sigma}\mat{V}^\intercal$
is the compact SVD of $\mat{A}$, then for all $i \in [n]$, we have
\begin{align}
\label{eqn:ridge_leverage_score_svd_def}
  \textstyle{
  \ell_{i}^\lambda\parens*{\mat{A}}
  =
  \sum_{k=1}^{r}
    \frac{\sigma_{k}^2\parens*{\mat{A}}}{ \sigma_{k}^2\parens*{\mat{A}} + \lambda }
    u_{ik}^2,}
\end{align}
where $r = \rank(\mat{A})$.
It follows that every $\ell_{i}^\lambda(\mat{A}) \le 1$ since
$\mat{U}$ is an orthogonal matrix.
We direct the reader to
\citet{ahmed2015fast} or \citet{cohen2015uniform} for further details.

The main results in this paper
build on approximate leverage score sampling for block matrices.
The $\lambda$-ridge leverage scores of $\mat{A} \in \R^{n \times d}$
can be computed by appending $\sqrt{\lambda} \mat{I}_d$ to the bottom of $\mat{A}$
to get $\overline{\mat{A}} \in \R^{(n+d) \times d}$
and considering the leverage scores of $\overline{\mat{A}}$,
so we state the following results
in terms of statistical leverage scores without loss of generality.

\begin{definition}
For any $\mat{A} \in \R^{n \times d}$,
the vector $\hat{\bm{\ell}}(\mat{A}) \in \R^n$
is a \emph{$\beta$-overestimate} for the leverage score distribution of
$\mat{A}$ if, for all $i \in [n]$, it satisfies
\[
  \frac{\hat{\ell}_{i} \parens*{\mat{A}}}{
    \norm{\hat{\bm{\ell}} \parens*{\mat{A}}}_{1} }
  \ge
  \beta 
  \frac{\ell_{i} \parens*{\mat{A}}}{
    \norm{\bm{\ell} \parens*{\mat{A}}}_{1} }
  =
  \beta
  \frac{\ell_{i} \parens*{\mat{A}}}{\rank(\mat{A})}.
\]
\end{definition}

Next we describe the approximate leverage score sampling
algorithm in~\citet[Section 2.4]{woodruff2014sketching}.
The core idea here is that if we sample
$\tilde{O}(d / \beta)$ rows and reweight them appropriately,
this smaller \emph{sketched} matrix can be used instead of $\mat{A}$
to give provable guarantees for many problems.

\begin{definition}[Leverage score sampling]
\label{def:sample_rows_alg}
Let $\mat{A} \in \R^{n \times d}$ and $\mat{p} \in [0,1]^n$ be
a $\beta$-overestimate for the leverage score distribution of~$\mat{A}$ such that $\norm{\mat{p}}_1 = 1$.
$\RowSampling(\mat{A}, s, \mat{p})$ denotes the following procedure.
Initialize sketch matrix $\mat{S} = \mat{0}_{s \times n}$.
For each row $i$ of $\mat{S}$, independently and with replacement,
  select an index $j \in [n]$ with probability $p_j$
  and set $s_{ij} = 1/\sqrt{p_j s}$.
Return sketch $\mat{S}$.
\end{definition}

The main result in this section is that we can choose to sketch a single block
of a matrix by the leverage scores of that block in isolation.
This sketched submatrix can then be used with the other (non-sketched)
block to give a spectral approximation to the original matrix
or for approximate linear regression.
The notation $\mat{A} \preccurlyeq \mat{B}$ is the Loewner order
and means $\mat{B} - \mat{A}$ is positive semidefinite.

\begin{restatable}{lemma}{BlockSketchIsSpectralApprox}
\label{lemma:block_sketch_is_spectral_approx}
Let $\mat{A} = \begin{bmatrix} \mat{A}_{1} ; \mat{A}_{2} \end{bmatrix}$ be vertically
stacked
with $\mat{A}_1 \in \R^{n_1 \times d}$ and $\mat{A}_2 \in \R^{n_2 \times d}$.
Let $\mat{p} \in [0,1]^{n_1}$ be a $\beta$-overestimate for the leverage
score distribution of $\mat{A}_1$.
If $s > 144d \ln(2d / \delta)/(\beta \varepsilon^2)$,
the sketch $\mat{S}$ returned by
$\RowSampling(\mat{A}_1,s,\mat{p})$
guarantees, with probability at least $1-\delta$, that
\[
  (1-\varepsilon) \mat{A}^\intercal \mat{A}
  \preccurlyeq
  \parens*{\mat{S}\mat{A}_1}^\intercal \mat{S}\mat{A}_1
  + \mat{A}_{2}^\intercal \mat{A}_{2}
  \preccurlyeq
  (1+\varepsilon) \mat{A}^\intercal \mat{A}.
\]
\end{restatable}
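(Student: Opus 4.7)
The plan is to reduce the statement to a standard matrix Chernoff bound applied to a whitened sum of rank-one random matrices. Write $\mat{M} \defeq \mat{A}^\intercal \mat{A} = \mat{A}_1^\intercal \mat{A}_1 + \mat{A}_2^\intercal \mat{A}_2$. Since $\mat{A}_2^\intercal \mat{A}_2$ is deterministic, proving the sandwich relation is equivalent to showing
\[
  \norm{\mat{M}^{+/2} \bigl((\mat{S}\mat{A}_1)^\intercal \mat{S}\mat{A}_1 - \mat{A}_1^\intercal \mat{A}_1\bigr) \mat{M}^{+/2}}_2 \le \varepsilon
\]
on the range of $\mat{M}$, where $\mat{M}^{+/2}$ is the square root of $\mat{M}^+$. (One should first reduce to the subspace spanned by the range of $\mat{M}$, which contains the rows of $\mat{A}_1$ because $\mat{A}_1^\intercal \mat{A}_1 \preccurlyeq \mat{M}$.)

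Next, I would expand $(\mat{S}\mat{A}_1)^\intercal \mat{S}\mat{A}_1 = \sum_{t=1}^s \frac{1}{s\, p_{j_t}} \mat{a}_{1,j_t}^\intercal \mat{a}_{1,j_t}$, where $j_1, \dots, j_s$ are the i.i.d.\ sampled indices from the distribution $\mat{p}$. Define
\[
  \mat{Y}_t \defeq \mat{M}^{+/2}\!\left(\tfrac{1}{s\, p_{j_t}} \mat{a}_{1,j_t}^\intercal \mat{a}_{1,j_t} - \tfrac{1}{s}\mat{A}_1^\intercal \mat{A}_1\right)\!\mat{M}^{+/2},
\]
which are i.i.d.\ mean-zero symmetric random matrices whose sum is the object we need to bound. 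The key algebraic step is bounding $\norm{\mat{Y}_t}_2$. The positive part contributes at most
\[
  \tfrac{1}{s\, p_{j_t}}\, \mat{a}_{1,j_t} \mat{M}^{+}\, \mat{a}_{1,j_t}^\intercal \le \tfrac{1}{s\, p_{j_t}}\, \mat{a}_{1,j_t} (\mat{A}_1^\intercal \mat{A}_1)^{+}\, \mat{a}_{1,j_t}^\intercal = \tfrac{\ell_{j_t}(\mat{A}_1)}{s\, p_{j_t}} \le \tfrac{\rank(\mat{A}_1)}{s\, \beta} \le \tfrac{d}{s\beta},
\]
using $\mat{M}^+ \preccurlyeq (\mat{A}_1^\intercal \mat{A}_1)^+$ on the range of $\mat{A}_1^\intercal$ (a consequence of $\mat{M} \succeq \mat{A}_1^\intercal \mat{A}_1$) followed by the $\beta$-overestimate property $p_{j_t} \ge \beta\, \ell_{j_t}(\mat{A}_1)/\rank(\mat{A}_1)$. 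The negative part is even smaller since $\mat{A}_1^\intercal \mat{A}_1 \preccurlyeq \mat{M}$, so $\norm{\mat{M}^{+/2}\mat{A}_1^\intercal \mat{A}_1 \mat{M}^{+/2}}_2 \le 1$.

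Finally, I would invoke a matrix Chernoff / Bernstein inequality (e.g.\ Tropp) for the independent sum $\sum_t \mat{Y}_t$: with per-term spectral-norm bound $L = d/(s\beta)$ and a variance bound of the same order $d/(s\beta)$ (obtained by a similar calculation on $\mathbb{E}[\mat{Y}_t^2]$), the choice $s > 144\, d \ln(2d/\delta)/(\beta \varepsilon^2)$ is exactly what is needed to get deviation at most $\varepsilon$ with probability at least $1 - \delta$. The main technical obstacle is the pseudoinverse subtlety in the step $\mat{M}^+ \preccurlyeq (\mat{A}_1^\intercal \mat{A}_1)^+$, which requires restricting attention to the correct subspace before applying operator monotonicity; once this is handled, the rest is a routine Chernoff argument whose constants align with the stated sample complexity.
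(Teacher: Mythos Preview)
Your argument is correct, but it takes a more laborious route than the paper. The paper never whitens by the full Gram matrix $\mat{M} = \mat{A}^\intercal \mat{A}$. Instead it observes that the sampling is done according to (an overestimate of) the leverage scores of $\mat{A}_1$ \emph{alone}, so one can invoke the standard subspace-embedding theorem (\citet[Theorem~17]{woodruff2014sketching}) directly on $\mat{A}_1$: with the stated number of samples, $(1-\varepsilon)\mat{I} \preccurlyeq (\mat{S}\mat{U}_{\mat{A}_1})^\intercal \mat{S}\mat{U}_{\mat{A}_1} \preccurlyeq (1+\varepsilon)\mat{I}$, which after conjugating by $\mat{\Sigma}_{\mat{A}_1}\mat{V}_{\mat{A}_1}^\intercal$ gives $(1-\varepsilon)\mat{A}_1^\intercal \mat{A}_1 \preccurlyeq (\mat{S}\mat{A}_1)^\intercal \mat{S}\mat{A}_1 \preccurlyeq (1+\varepsilon)\mat{A}_1^\intercal \mat{A}_1$. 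Adding the deterministic block $\mat{A}_2^\intercal \mat{A}_2$ to all three terms then yields the claim immediately, since $\mat{A}^\intercal \mat{A} = \mat{A}_1^\intercal \mat{A}_1 + \mat{A}_2^\intercal \mat{A}_2$.

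The difference is that you re-derive the concentration step from scratch and, in doing so, whiten by the larger matrix $\mat{M}$; this forces you to argue $\mat{M}^+ \preccurlyeq (\mat{A}_1^\intercal \mat{A}_1)^+$ on the range of $\mat{A}_1^\intercal$ (equivalently, that leverage scores only decrease when rows are appended). That step is fine, but it is unnecessary here: because the spectral approximation is first established for $\mat{A}_1^\intercal \mat{A}_1$ and only afterwards is $\mat{A}_2^\intercal \mat{A}_2$ added, no comparison of pseudoinverses or subspace restriction is needed. Your approach buys self-containment (no black-box subspace-embedding lemma), while the paper's approach buys brevity and avoids all the pseudoinverse caveats you flagged as the ``main technical obstacle.''
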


\begin{restatable}[Approximate block regression]{lemma}{ApproximateBlockRegression}
\label{lemma:approximate_block_regression}
Consider the problem
$\argmin_{\mat{x} \in \R^{d}} \norm{\mat{A} \mat{x} - \mat{b}}_{2}^2$
where
$\mat{A} = \begin{bmatrix} \mat{A}_1 ; \mat{A}_2 \end{bmatrix}$
and
$\mat{b} = \begin{bmatrix} \mat{b}_1 ; \mat{b}_2 \end{bmatrix}$
are vertically stacked 
and $\mat{A}_1 \in \R^{n_1 \times d}$,
$\mat{A}_2 \in \R^{n_2 \times d}$, $\mat{b}_1 \in \R^{n_1}$,~$\mat{b}_2 \in \R^{n_2}$.
Let $\mat{p} \in [0,1]^{n_1}$ be a $\beta$-overestimate for the leverage
score distribution of $\mat{A}_1$.
Let $s \ge 1680 d \ln(40 d) / (\beta \varepsilon)$
and let $\mat{S}$ be the output of $\RowSampling(\mat{A}_1,s,\mat{p})$.
If
\[\textstyle{
  \mat{\tilde{x}}^*
  =
  \argmin_{\mat{x} \in \R^d}
  \parens*{
    \norm*{\mat{S}(\mat{A}_1 \mat{x} - \mat{b}_1)}_{2}^2
  +
  \norm*{\mat{A}_2\mat{x} - \mat{b}_2}_{2}^2
  },}
\]
then, with probability at least $9/10$, we have
\[
  \norm*{\mat{A} \mat{\tilde{x}}^* - \mat{b}}_{2}^2
  \le
  (1+\varepsilon)
  \min_{\mat{x} \in \R^{d}} \norm*{\mat{A} \mat{x} - \mat{b}}_{2}^2.
\]
\end{restatable}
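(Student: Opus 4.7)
The plan is to run a standard Pythagorean/normal-equations argument, using two distinct properties of the sketch $\mat{S}$: a constant-factor subspace embedding, which is supplied by \Cref{lemma:block_sketch_is_spectral_approx}, and an approximate-matrix-multiplication (AMM) guarantee that controls $\mat{U}_1^\intercal \mat{S}^\intercal \mat{S} \mat{e}_1 - \mat{U}_1^\intercal \mat{e}_1$, where $\mat{U}_1$ is an orthonormal basis for the column span of $\mat{A}_1$ and $\mat{e}_1 \DEF \mat{A}_1 \mat{x}^* - \mat{b}_1$ is the first-block residual at the true optimum $\mat{x}^*$.

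First, I would write down the normal equations for both problems. Letting $\mat{\Delta} = \tilde{\mat{x}}^* - \mat{x}^*$ and $\mat{e}_2 \DEF \mat{A}_2 \mat{x}^* - \mat{b}_2$, subtracting the two normal equations (and using $\mat{A}_1^\intercal \mat{e}_1 + \mat{A}_2^\intercal \mat{e}_2 = 0$) yields
\[
  \parens*{\mat{A}_1^\intercal \mat{S}^\intercal \mat{S} \mat{A}_1 + \mat{A}_2^\intercal \mat{A}_2} \mat{\Delta}
  = \mat{A}_1^\intercal \parens*{\mat{I} - \mat{S}^\intercal \mat{S}} \mat{e}_1.
\]
The orthogonality of $\mat{e}$ to $\text{range}(\mat{A})$ also gives the Pythagorean identity $\norm{\mat{A}\tilde{\mat{x}}^* - \mat{b}}_2^2 = \norm{\mat{A}\mat{x}^* - \mat{b}}_2^2 + \norm{\mat{A}\mat{\Delta}}_2^2$, so it suffices to prove $\norm{\mat{A}\mat{\Delta}}_2^2 \le \varepsilon \norm{\mat{A}\mat{x}^* - \mat{b}}_2^2$.

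Next, I would take the inner product of the displayed identity with $\mat{\Delta}$. For the left-hand side, invoking \Cref{lemma:block_sketch_is_spectral_approx} with a constant distortion (say $\varepsilon_0 = 1/2$, which only requires $s = \Omega(d \log d / \beta)$ and is easily met by the prescribed $s \ge 1680 d \ln(40d)/(\beta \varepsilon)$) gives a lower bound of $\tfrac{1}{2} \norm{\mat{A}\mat{\Delta}}_2^2$. For the right-hand side, since $\mat{A}_1 \mat{\Delta}$ lies in the column span of $\mat{U}_1$, I apply Cauchy--Schwarz after projecting:
\[
  \abs*{\mat{\Delta}^\intercal \mat{A}_1^\intercal (\mat{I} - \mat{S}^\intercal \mat{S}) \mat{e}_1}
  \le \norm{\mat{A}_1 \mat{\Delta}}_2 \cdot \norm{\mat{U}_1^\intercal (\mat{I} - \mat{S}^\intercal \mat{S}) \mat{e}_1}_2
  \le \norm{\mat{A}\mat{\Delta}}_2 \cdot \norm{\mat{U}_1^\intercal (\mat{I} - \mat{S}^\intercal \mat{S}) \mat{e}_1}_2.
\]
The second factor is exactly what AMM controls: for leverage score sampling on $\mat{A}_1$ (equivalently, sampling proportionally to the squared row norms of $\mat{U}_1$), the standard variance calculation gives
\[
  \mathbb{E}\bracks*{\norm{\mat{U}_1^\intercal (\mat{I} - \mat{S}^\intercal \mat{S}) \mat{e}_1}_2^2}
  \le \frac{d}{\beta s} \norm{\mat{e}_1}_2^2,
\]
where the $1/\beta$ penalty comes from the overestimate assumption inflating the sampling variance by at most $1/\beta$. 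Markov's inequality then yields $\norm{\mat{U}_1^\intercal(\mat{I}-\mat{S}^\intercal\mat{S})\mat{e}_1}_2^2 \le c\, \varepsilon \norm{\mat{e}_1}_2^2$ with constant probability for the prescribed $s$.

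Combining these two bounds gives $\tfrac{1}{2}\norm{\mat{A}\mat{\Delta}}_2^2 \le \norm{\mat{A}\mat{\Delta}}_2 \cdot \sqrt{c\varepsilon}\,\norm{\mat{e}_1}_2$, hence $\norm{\mat{A}\mat{\Delta}}_2^2 \le 4 c\varepsilon \norm{\mat{e}_1}_2^2 \le 4c \varepsilon \norm{\mat{A}\mat{x}^*-\mat{b}}_2^2$; adjusting the hidden constant in $s$ swallows the factor $4c$ to yield exactly $(1+\varepsilon)$. Finally, a union bound over the two events (subspace embedding with failure probability $\le 1/40$ and the Markov event with failure probability $\le 1/20$) gives overall success probability $\ge 9/10$. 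The main obstacle is tracking the $\beta$-overestimate through the AMM variance bound: one must verify that sampling from a distribution dominated pointwise by $\beta\cdot \ell_i(\mat{A}_1)/\rank(\mat{A}_1)$ inflates the second moment of $\mat{U}_1^\intercal \mat{S}^\intercal \mat{S} \mat{e}_1$ by only a $1/\beta$ factor, which is the justification for the $1/\beta$ scaling in $s$.
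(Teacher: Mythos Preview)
Your argument is correct and is a genuinely different route from the paper's. The paper embeds the problem into the Drineas--Mahoney--Muthukrishnan--Sarl\'os structural framework: it forms the augmented sketch $\mat{S}' = \mathrm{diag}(\mat{S},\mat{I}_{n_2})$, works with the full $\mat{U}$ (left singular vectors of $\mat{A}$, not $\mat{A}_1$) and the full residual $\mat{b}^\perp$, and then verifies the two structural conditions of \Cref{lem:approx_least_squares}. The second condition reduces to an AMM bound on $\norm{\tilde{\mat{U}}^\intercal(\mat{I}-\mat{S}^\intercal\mat{S})\tilde{\mat{b}}^\perp}_2$, where $\tilde{\mat{U}}$ is the \emph{first $n_1$ rows of $\mat{U}$}. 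Because the sampling distribution is built from the leverage scores of $\mat{A}_1$, not of $\mat{A}$, the paper needs the monotonicity fact $\ell_i(\mat{A}_1)\ge\ell_i(\mat{A})$ to argue that $\mat{p}$ is still an overestimate for the row-norm distribution of $\tilde{\mat{U}}$.

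You sidestep all of this by running the AMM on $\mat{U}_1$ (an orthonormal basis for $\mathrm{range}(\mat{A}_1)$) and the \emph{first-block} residual $\mat{e}_1$, which is exactly the setting the sampling distribution is designed for, so no monotonicity lemma is needed; then you close with $\norm{\mat{e}_1}_2\le\norm{\mat{A}\mat{x}^*-\mat{b}}_2$. This is more elementary and self-contained. The paper's route, on the other hand, is modular (it plugs directly into \Cref{lem:approx_least_squares}) and delivers the additional bound $\norm{\tilde{\mat{x}}^*-\mat{x}^*}_2^2\le\varepsilon\cR^2/\sigma_{\min}^2(\mat{A})$ for free, which your argument does not immediately give.
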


We defer the proofs of these results to~\Cref{app:row_sampling}.
The key idea behind \Cref{lemma:approximate_block_regression}
is that leverage scores do not increase if rows are appended to the matrix.
This then allows us to prove a sketched submatrix version of
\citet[Lemma 8]{drineas2006fast} for
approximate matrix multiplication
and satisfy the structural conditions for
approximate least squares
in~\citet{drineas2011faster}.
One consequence is that we can ``sketch and solve'' 
ridge regression,
which was shown in~\cite[Theorem 1]{wang2017sketched}
and~\cite[Theorem 2]{avron2017sharper}.

\begin{restatable}{corollary}{SketchAndSolveRidgeRegression}
\label{cor:approximate_ridge_regression}
For any $\mat{A} \in \R^{n \times d}$,
$\mat{b} \in \R^d$, $\lambda \ge 0$,
consider
\[
  \argmin_{\mat{x} \in \R^d}
  \parens{
  \norm{\mat{A} \mat{x} - \mat{b}}_{2}^2
  + \lambda \norm{\mat{x}}_{2}^2}.
\]
Let $\mat{p} \in [0,1]^{n_1}$ be a $\beta$-overestimate for the leverage
scores of $\mat{A}$ and
$s \ge 1680 d \ln(40 d) / (\beta \varepsilon)$.
If $\mat{S}$ is the output of $\RowSampling(\mat{A}, s, \mat{p})$,
then, with probability at least $9/10$,
the sketched solution
\[
  \mat{\tilde x}^* =
  \argmin_{\mat{x} \in \R^d}
  \parens{
  \norm{\mat{S}\parens*{\mat{A} \mat{x} - \mat{b}}}_{2}^2
  + \lambda \norm{\mat{x}}_{2}^2}
\]
gives a $(1+\varepsilon)$-approximation to the original problem.
\end{restatable}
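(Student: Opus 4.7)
The plan is to reduce the ridge regression problem to a block least squares problem and then invoke \Cref{lemma:approximate_block_regression} as a black box. The standard trick, already alluded to in the paragraph right before the block-sketching lemmas, is that
\[
  \norm{\mat{A}\mat{x} - \mat{b}}_2^2 + \lambda \norm{\mat{x}}_2^2
  =
  \norm*{\begin{bmatrix} \mat{A} \\ \sqrt{\lambda}\,\mat{I}_d \end{bmatrix} \mat{x} - \begin{bmatrix} \mat{b} \\ \mat{0}_d \end{bmatrix}}_2^2,
\]
so the ridge problem is an unregularized least squares problem on a vertically stacked matrix with blocks $\mat{A}_1 = \mat{A}$, $\mat{A}_2 = \sqrt{\lambda}\,\mat{I}_d$, $\mat{b}_1 = \mat{b}$, and $\mat{b}_2 = \mat{0}_d$.

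Next I would apply \Cref{lemma:approximate_block_regression} with precisely this block decomposition. The hypothesis requires a $\beta$-overestimate of the leverage scores of the first block $\mat{A}_1 = \mat{A}$, which is exactly what the corollary assumes. The sample size threshold $s \ge 1680\, d\ln(40d)/(\beta\varepsilon)$ matches verbatim. The sketched objective promised by the lemma is
\[
  \norm{\mat{S}(\mat{A}\mat{x}-\mat{b})}_2^2 + \norm{\sqrt{\lambda}\,\mat{I}_d \mat{x} - \mat{0}}_2^2
  =
  \norm{\mat{S}(\mat{A}\mat{x}-\mat{b})}_2^2 + \lambda \norm{\mat{x}}_2^2,
\]
so its minimizer is precisely the $\mat{\tilde{x}}^*$ defined in the corollary statement. \Cref{lemma:approximate_block_regression} then yields, with probability at least $9/10$, a $(1+\varepsilon)$-approximation to the combined block objective, which by the identity above is exactly the ridge objective, proving the corollary.

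There is essentially no obstacle beyond bookkeeping: the reduction is a one-line identity, and the block lemma is set up so that only the first block is sketched while the second (here the regularization block $\sqrt{\lambda}\,\mat{I}_d$) is kept intact. The only subtle point to double-check is that the probability and sample complexity of \Cref{lemma:approximate_block_regression} are expressed in terms of the rank/dimension of the full stacked matrix, which is still $d$ (since $\mat{A}_2 = \sqrt{\lambda}\,\mat{I}_d$ has rank $d$), so the $d$ appearing in the sample-size bound is consistent with the $d$ in the corollary.
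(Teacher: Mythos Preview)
Your proposal is correct and follows essentially the same approach as the paper: augment $\mat{A}$ with the block $\sqrt{\lambda}\,\mat{I}_d$ (and $\mat{b}$ with $\mat{0}_d$) to rewrite the ridge objective as an ordinary least squares problem, then invoke \Cref{lemma:approximate_block_regression} directly with $\mat{A}_1=\mat{A}$ and $\mat{A}_2=\sqrt{\lambda}\,\mat{I}_d$. Your extra remark that the ambient column dimension of the stacked matrix remains $d$ (so the sample bound is unchanged) is a nice sanity check, but no additional ideas are needed.
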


\begin{remark}
The success probability of the sketch can be boosted from $9/10$ to $1-\delta$
by sampling a factor of $O(\log(1/\delta))$ more rows.
See the discussion in~\citet[Section 2]{chen2019active}
about matrix concentration bounds for more details.
\end{remark}

\section{Kronecker Regression}
\label{sec:kronecker_regression}

Now we describe the key ingredients that allow us to design an approximate Kronecker
regression algorithm whose running time is \emph{subquadratic} in the number
of columns in the design matrix.
\begin{enumerate}[leftmargin=2em]
\itemsep0em 
  \item The leverage score distribution of a Kronecker product matrix
    $\mat{K} = \mat{A}^{(1)} \ktimes \cdots \ktimes \mat{A}^{(N)}$
    is a \emph{product distribution} of the leverage score distributions of
    its factor matrices.
    Therefore, we can sample rows of $\mat{K}$ from $\bm{\ell}(\mat{K})$
    with replacement in $\tilde{O}(N)$ time after a preprocessing step.
  \item The normal matrix $\mat{K}^\intercal \mat{K} + \lambda \mat{I}$
    in the ridge regression problem
    $\min_{\mat{x}} \norm{\mat{K}\mat{x} - \mat{b}}_2^2 + \lambda \norm{\mat{x}}_2^2$
    is a $O(1)$-spectral approximation of the sketched matrix
    $(\mat{S}\mat{K})^\intercal \mat{S}\mat{K} + \lambda \mat{I}$
    by \Cref{lemma:block_sketch_is_spectral_approx}.
    Thus we can use Richardson iteration with
    $(\mat{K}^\intercal \mat{K} + \lambda\mat{I})^+$
    as the preconditioner
    to \emph{solve the sketched instance}, which
    guarantees a $(1+\varepsilon)$-approximation.
    Using $(\mat{K}^\intercal \mat{K} + \lambda\mat{I})^+$
    as the preconditioner
    allows us to \emph{heavily exploit the Kronecker structure}
    with fast matrix-vector multiplications.
  \item At this point, \emph{Kronecker matrix-vector multiplications} are still
    the bottleneck,
    so we partition the factor matrices into two groups by their number of columns
    and use our novel way of multiplying sparsified Kronecker product matrices as well as fast rectangular matrix multiplication to get a subquadratic running time.
\end{enumerate}

This first result shows how $\lambda$-ridge leverage scores of a Kronecker
product matrix decompose according to the SVDs of its factor matrices.
All missing proofs in this section are
deferred to~\Cref{app:kronecker_regression}.

\begin{restatable}{lemma}{KroneckerCrossLeverageScores}
\label{lemma:kronecker_cross_leverage_scores}
Let 
$\mat{K} = \mat{A}^{(1)} \otimes \mat{A}^{(2)} \otimes \dots \otimes \mat{A}^{(N)}$,
where each factor matrix $\mat{A}^{(n)} \in \R^{I_n \times R_n}$.
Let $(i_1,i_2,\dots,i_N)$ be the natural row indexing of $\mat{K}$
by its factors.
Let the factor SVDs be
  $\mat{A}^{(n)}
  = \mat{U}^{(n)} \mat{\Sigma}^{(n)} {\mat{V}^{(n)}}^\intercal $.
For any $\lambda \ge 0$,
the $\lambda$-ridge leverage scores of $\mat{K}$ are
\begin{align}
\label{eqn:kronecker_ridge_leverage_score}
    \ell_{(i_1,\dots,i_N)}^\lambda \parens*{\mat{K}}
    &=
    \sum_{\mat{t} \in T}
    \frac{\prod_{n=1}^N \sigma_{t_n}^2(\mat{A}^{(n)})  }{\prod_{n=1}^N \sigma_{t_n}^2(\mat{A}^{(n)}) + \lambda}
    \parens*{
      \prod_{n=1}^N
      u_{i_n t_n}^{(n)}
    }^2,
\end{align}
where the sum is over $T = [R_1] \times [R_2] \times \dots \times [R_N]$.
For statistical leverage scores, this simplifies to
$
  \ell_{(i_1,\dots,i_N)}\parens{\mat{K}}
  =
  \prod_{n=1}^N
  \ell_{i_n} \parens{\mat{A}^{(n)}}.
$
\end{restatable}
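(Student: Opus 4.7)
The plan is to reduce the lemma to a direct application of the singular-value expression for ridge leverage scores, namely \Cref{eqn:ridge_leverage_score_svd_def}, by exploiting the fact that the SVD of a Kronecker product factorizes through the SVDs of its factors. Concretely, if $\mat{A}^{(n)} = \mat{U}^{(n)}\mat{\Sigma}^{(n)}{\mat{V}^{(n)}}^\intercal$, then by the mixed-product property of the Kronecker product,
\[
  \mat{K} = \bigotimes_{n=1}^N \mat{A}^{(n)}
  = \Bigl(\bigotimes_{n=1}^N \mat{U}^{(n)}\Bigr)\Bigl(\bigotimes_{n=1}^N \mat{\Sigma}^{(n)}\Bigr)\Bigl(\bigotimes_{n=1}^N \mat{V}^{(n)}\Bigr)^\intercal.
\]
The Kronecker product of orthogonal matrices is orthogonal, and the Kronecker product of diagonal matrices is diagonal, so this is a valid SVD of $\mat{K}$ whose singular values are indexed by $\mat{t} \in T = [R_1]\times\cdots\times[R_N]$ and equal to $\prod_{n=1}^N \sigma_{t_n}(\mat{A}^{(n)})$, with left singular vector entries $\prod_{n=1}^N u^{(n)}_{i_n t_n}$ at row index $(i_1,\dots,i_N)$ and column index $(t_1,\dots,t_N)$.

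Plugging these singular values and $\mat{U}$-entries into the ridge-leverage-score identity from \Cref{eqn:ridge_leverage_score_svd_def} yields \Cref{eqn:kronecker_ridge_leverage_score} immediately. Any $\mat{t}$ with $\sigma_{t_n}(\mat{A}^{(n)}) = 0$ for some $n$ contributes zero both on the left (that column lies outside the compact SVD) and on the right (the numerator vanishes), so we may sum over all of $T$ without issue. For the statistical-leverage-score simplification ($\lambda = 0$), the fraction is $1$ on the support of the non-zero singular values and the sum factorizes as
\[
  \ell_{(i_1,\dots,i_N)}(\mat{K})
  = \sum_{\mat{t}\in T}\prod_{n=1}^N \bigl(u^{(n)}_{i_n t_n}\bigr)^2
  = \prod_{n=1}^N \sum_{t_n=1}^{R_n}\bigl(u^{(n)}_{i_n t_n}\bigr)^2
  = \prod_{n=1}^N \ell_{i_n}(\mat{A}^{(n)}),
\]
where the last equality uses \Cref{eqn:ridge_leverage_score_svd_def} with $\lambda=0$ applied to each factor.

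The only real subtlety is the bookkeeping around zero singular values when the $\mat{A}^{(n)}$ are rank-deficient, and checking that summing over the full product index set $T$ (rather than the ranks) is harmless because zero-singular-value contributions vanish in the numerator. Everything else is just Kronecker algebra, so I expect no serious obstacle beyond writing the factorization and invoking \Cref{eqn:ridge_leverage_score_svd_def} cleanly.
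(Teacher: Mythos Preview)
Your proposal is correct and takes essentially the same approach as the paper: both arguments hinge on the mixed-product property to identify the SVD of $\mat{K}$ as the Kronecker product of the factor SVDs, and then read off the leverage scores from the resulting diagonal structure. The only cosmetic difference is that the paper works from the matrix definition \Cref{eqn:ridge_leverage_score_def} and explicitly expands $\mat{K}(\mat{K}^\intercal\mat{K}+\lambda\mat{I})^+\mat{K}^\intercal$ step by step, whereas you shortcut to the already-derived SVD identity \Cref{eqn:ridge_leverage_score_svd_def}; the content is the same.
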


This proof
repeatedly uses the mixed-product property for Kronecker products and the
definition of $\lambda$-ridge leverage scores in
Equation~\Cref{eqn:ridge_leverage_score_def}.

\subsection{Iterative Methods}

Now we state a result for the convergence rate of
preconditioned Richardson iteration~\cite{saad2003iterative}
using the notation
$\norm{\mat{x}}_{\mat{M}}^2 = \mat{x}^\intercal \mat{M} \mat{x}$.

\begin{lemma}[Preconditioned Richardson iteration]
\label{lemma:richardson_iteration}
Let $\mat{M}$ be any matrix such that
$\mat{A}^\intercal \mat{A} \preccurlyeq
\mat{M} \preccurlyeq \kappa \cdot \mat{A}^\intercal \mat{A}$
for some $\kappa \ge 1$.
Let $\mat{x}^{(k+1)} = \mat{x}^{(k)} - \mat{M}^{+}\parens{\mat{A}^\intercal \mat{A} \mat{x}^{(k)} - \mat{A}^\intercal \mat{b}}$.
Then,
\[
  \norm{
    \mat{x}^{(k)}
    -
    \mat{x}^*
  }_{\mat{M}}
  \le
  \parens*{1 - 1/\kappa }^k
  \norm{
    \mat{x}^{(0)} - \mat{x}^*
  }_{\mat{M}},
\]
where $\mat{x}^* = \argmin_{\mat{x} \in \R^{d}} \norm{\mat{A}\mat{x} - \mat{b}}_2^2$.
\end{lemma}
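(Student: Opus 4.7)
The plan is to reduce the iteration to a linear recursion on the error, then analyze a single contraction step via a change of variables that diagonalizes the preconditioner. Let $\mat{x}^{\star}=\arg\min_{\mat{x}}\|\mat{A}\mat{x}-\mat{b}\|_{2}^{2}$, so the normal equations give $\mat{A}^{\intercal}\mat{A}\mat{x}^{\star}=\mat{A}^{\intercal}\mat{b}$. Define the error $\mat{e}^{(k)}=\mat{x}^{(k)}-\mat{x}^{\star}$. Subtracting $\mat{x}^{\star}$ from both sides of the update and using the normal equations, I would first derive the clean recursion
\[
\mat{e}^{(k+1)} \;=\; \bigl(\mat{I}-\mat{M}^{+}\mat{A}^{\intercal}\mat{A}\bigr)\,\mat{e}^{(k)}.
\]
So it suffices to prove the one-step contraction $\|(\mat{I}-\mat{M}^{+}\mat{A}^{\intercal}\mat{A})\mat{e}\|_{\mat{M}}\le(1-1/\kappa)\|\mat{e}\|_{\mat{M}}$, after which the lemma follows by a trivial induction on $k$.

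For the contraction, I would perform a change of variables that symmetrizes the problem. The spectral hypothesis $\mat{A}^{\intercal}\mat{A}\preccurlyeq\mat{M}\preccurlyeq\kappa\,\mat{A}^{\intercal}\mat{A}$ implies $\ker(\mat{M})=\ker(\mat{A}^{\intercal}\mat{A})$, so I restrict attention to the common range; on that subspace $\mat{M}^{1/2}$ is invertible and $\mat{M}^{+}=(\mat{M}^{1/2})^{-2}$. Setting $\mat{y}=\mat{M}^{1/2}\mat{e}$ and $\mat{C}=\mat{M}^{-1/2}\mat{A}^{\intercal}\mat{A}\,\mat{M}^{-1/2}$, I would expand
\[
\|(\mat{I}-\mat{M}^{+}\mat{A}^{\intercal}\mat{A})\mat{e}\|_{\mat{M}}^{2}
\;=\;
\mat{y}^{\intercal}\bigl(\mat{I}-\mat{C}\bigr)^{2}\mat{y},
\]
where the middle factor collapses to $(\mat{I}-\mat{C})^{2}$ after the cross terms combine as $\mat{M}^{-1/2}\mat{A}^{\intercal}\mat{A}\mat{M}^{-1/2}=\mat{C}$. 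Conjugating the Loewner bounds by $\mat{M}^{-1/2}$ gives $\frac{1}{\kappa}\mat{I}\preccurlyeq\mat{C}\preccurlyeq\mat{I}$, so the eigenvalues of $\mat{I}-\mat{C}$ lie in $[0,1-1/\kappa]$, hence $\|(\mat{I}-\mat{C})\mat{y}\|_{2}\le(1-1/\kappa)\|\mat{y}\|_{2}$. Translating back yields the desired single-step contraction in the $\mat{M}$-norm, and iterating $k$ times finishes the proof.

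The only nontrivial point is the treatment of the pseudoinverse when $\mat{M}$ (and hence $\mat{A}^{\intercal}\mat{A}$) is singular, since $\mat{M}^{1/2}$ is not literally invertible there. I would handle this by observing that the iteration preserves the subspace $\operatorname{range}(\mat{A}^{\intercal})=\operatorname{range}(\mat{M})$: the update increment $\mat{M}^{+}(\mat{A}^{\intercal}\mat{A}\mat{x}^{(k)}-\mat{A}^{\intercal}\mat{b})$ lies in $\operatorname{range}(\mat{M}^{+})=\operatorname{range}(\mat{M})$, and one can take $\mat{x}^{(0)}$ and $\mat{x}^{\star}$ in this subspace without loss of generality. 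Restricted to this subspace the operators $\mat{M}$ and $\mat{A}^{\intercal}\mat{A}$ are positive definite, so the $\mat{M}^{1/2}$ change of variables is valid and the algebra above goes through verbatim. This bookkeeping is the main technical nuisance; the core argument is the short spectral calculation that $(\mat{I}-\mat{M}^{+}\mat{A}^{\intercal}\mat{A})$ becomes the symmetric operator $\mat{I}-\mat{C}$ in the $\mat{M}$-geometry.
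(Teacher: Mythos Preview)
The paper does not actually prove this lemma; it is stated in Section~4.1 as a known result with a citation to Saad's textbook on iterative methods, and no proof appears in the appendices. So there is no ``paper's own proof'' to compare against.

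Your argument is correct and is exactly the standard one: derive the error recursion $\mat{e}^{(k+1)}=(\mat{I}-\mat{M}^{+}\mat{A}^{\intercal}\mat{A})\mat{e}^{(k)}$ from the normal equations, symmetrize via $\mat{C}=\mat{M}^{-1/2}\mat{A}^{\intercal}\mat{A}\,\mat{M}^{-1/2}$, and read off the contraction from the spectral bound $\tfrac{1}{\kappa}\mat{I}\preccurlyeq\mat{C}\preccurlyeq\mat{I}$. One small quibble on the singular case: you write that ``one can take $\mat{x}^{(0)}$ and $\mat{x}^{\star}$ in this subspace without loss of generality,'' but $\mat{x}^{(0)}$ is given, not chosen. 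The cleaner justification is that $\|\cdot\|_{\mat{M}}$ is a seminorm vanishing on $\ker(\mat{M})=\ker(\mat{A}^{\intercal}\mat{A})$; decomposing $\mat{e}^{(k)}$ orthogonally, the kernel component is fixed by the iteration (since $\mat{A}^{\intercal}\mat{A}$ annihilates it) and contributes nothing to $\|\mat{e}^{(k)}\|_{\mat{M}}$, while the range component contracts as you computed. This is what you are implicitly using, and it closes the argument without needing to relocate $\mat{x}^{(0)}$.
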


\begin{remark}
The ridge regression algorithm in
\citet{chowdhury2018iterative}
is also based on sketching and preconditioned Richardson iteration.
They consider short and wide matrices where $d \gg n$
and use the \emph{sketched normal matrix as the preconditioner}
to solve the original problem.
One of our main technical contributions is to
use the \emph{original normal matrix as the preconditioner}
to solve the sketched problem.
Reversing this
is advantageous because
computing the pseduoinverse and matrix-vector products with
the original Kronecker matrix
is substantially less expensive due to its Kronecker structure.
However,
this still motivates the need for faster Kronecker matrix-vector multiplications.
\end{remark}

\subsection{Fast Kronecker-Matrix Multiplication}

The next result is a simple but useful observation about
extracting the rightmost factor matrix from the Kronecker product
and recursively computing a new less expensive
Kronecker-matrix multiplication.

\begin{restatable}{lemma}{KronMatMulLemma}
\label{lemma:kron_mat_mul}
Let $\mat{A}^{(n)} \in \R^{I_n \times J_n}$, for $n \in [N]$,
and $\mat{B} \in \R^{J_1 \cdots J_N \times K}$.
There is an algorithm
$\KronMatMul([\mat{A}^{(1)},\dots,\mat{A}^{(N)}],\mat{B})$ that
computes
$
  \parens{\mat{A}^{(1)} \ktimes \mat{A}^{(2)} \ktimes \cdots \ktimes \mat{A}^{(N)}} \mat{B}
  \in \R^{(I_1 \cdots I_N) \times K}
$
in $O\parens{K \sum_{n = 1}^N J_1 \cdots J_n I_n \cdots I_N}$ time.
\end{restatable}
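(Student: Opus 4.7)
The plan is to reduce to a single-column multiplication and then exploit the well-known correspondence between Kronecker products and $n$-mode tensor products to apply the factor matrices one at a time to a reshaped tensor. Since $(\mat{A}^{(1)} \ktimes \cdots \ktimes \mat{A}^{(N)})\mat{B}$ can be computed column by column, it suffices to prove the single-column bound $O(\sum_{n=1}^N J_1 \cdots J_n \, I_n \cdots I_N)$ and multiply by $K$ at the end.

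For the single-column case, I would take a column $\mat{b} \in \R^{J_1 \cdots J_N}$ of $\mat{B}$ and use $\vectorize^{-1}$ to reshape it into a tensor $\tensor{B} \in \R^{J_1 \times \cdots \times J_N}$. Then the standard identity relating Kronecker products and $n$-mode products (with the paper's lexicographic vectorization convention) gives
\begin{equation*}
\parens*{\mat{A}^{(1)} \ktimes \cdots \ktimes \mat{A}^{(N)}} \mat{b}
=
\vectorize\parens*{\tensor{B} \times_1 \mat{A}^{(1)} \times_2 \mat{A}^{(2)} \times_3 \cdots \times_N \mat{A}^{(N)}}.
\end{equation*}
The algorithm $\KronMatMul$ is then simply: apply the $n$-mode products in decreasing order $n = N, N-1, \dots, 1$, and flatten the result.

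The cost analysis is where the stated bound drops out. After having applied $\mat{A}^{(N)}, \mat{A}^{(N-1)}, \dots, \mat{A}^{(n+1)}$, the intermediate tensor has shape $J_1 \times \cdots \times J_n \times I_{n+1} \times \cdots \times I_N$. Applying $\mat{A}^{(n)} \in \R^{I_n \times J_n}$ as an $n$-mode product multiplies each mode-$n$ fiber (of length $J_n$) by $\mat{A}^{(n)}$, yielding a fiber of length $I_n$; the number of such fibers is $J_1 \cdots J_{n-1} \cdot I_{n+1} \cdots I_N$, and each fiber costs $J_n I_n$ arithmetic operations. The cost of step $n$ is therefore $J_1 \cdots J_{n-1} \cdot J_n I_n \cdot I_{n+1} \cdots I_N = (J_1 \cdots J_n)(I_n \cdots I_N)$. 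Summing over $n$ gives the claimed bound, and repeating the procedure for each of the $K$ columns of $\mat{B}$ contributes the factor of~$K$.

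There is no real obstacle in the argument; the only thing to be careful about is the vectorization/ordering convention, so that the mode indices in the tensor reshaping line up correctly with the factor matrices in the Kronecker product. One could alternatively phrase the whole derivation without tensors at all, by repeatedly invoking the mixed-product identity $\mat{A}^{(1)} \ktimes \cdots \ktimes \mat{A}^{(N)} = (\mat{A}^{(1)} \ktimes \mat{I}) (\mat{I} \ktimes \mat{A}^{(2)} \ktimes \cdots \ktimes \mat{A}^{(N)})$ and recursing on the right factor, which recovers the same telescoping cost. Either presentation is routine once the shapes are tracked.
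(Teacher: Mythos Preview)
Your proposal is correct and takes essentially the same approach as the paper: both peel off the rightmost factor matrix first (your $n$-mode product $\times_N \mat{A}^{(N)}$ is exactly the paper's precomputation of $\mat{Z}$ with $z_{q,r}=\sum_s y_{r,s} b_{qs}$), recurse on the remaining $N-1$ factors, and obtain the identical telescoping cost. The only difference is presentational---you use the tensor $n$-mode product language while the paper writes out indices explicitly and phrases it as induction on $N$---and you even note this equivalence yourself in your final paragraph.
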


The following theorem is more sophisticated.
We write the statement in terms of
rectangular matrix multiplication time $\text{MM}\parens{a,b,c}$,
which is the time to multiply an
$a \times b$ matrix by a
$b \times c$ matrix.

\begin{restatable}{theorem}{KronMatMulSqrtDecomp}
\label{thm:kron_mat_mul_sqrt_decomp}
Let $\mat{A}^{(n)} \in\R^{I_n\times R_n}$, for $n \in [N]$,
$I = I_1 \cdots I_N$, $R = R_1 \cdots R_N$,
$\mat{b}\in\R^{I}$, $\mat{c}\in \R^{R}$,
and
$\mat{S}\in \mathbb{R}^{I \times I}$ be a diagonal
matrix with $\tilde{O}(R \varepsilon^{-1})$ nonzeros.
The vectors
\[
  \parens*{\mat{A}^{(1)} \ktimes \cdots \ktimes \mat{A}^{(N)}}^{\intercal} \mat{S} \mat{b}
  ~~~~\text{and}~~~~
  \mat{S} \parens*{\mat{A}^{(1)} \ktimes \cdots \ktimes \mat{A}^{(N)}} \mat{c}
\]
can be computed in time
$
  \tilde{O}\parens{
    \min_{T \subseteq [N]} \MM \parens{
      \prod_{n\in T} R_n, R \varepsilon^{-1}, \prod_{n\notin T} R_n
    }
  }.
$
\end{restatable}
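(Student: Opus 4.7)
The plan is to fix any subset $T \subseteq [N]$, split every $N$-coordinate multi-index into a pair $(i_T, i_{T^c})$ (or $(r_T, r_{T^c})$) corresponding to this partition, and reshape each length-$R$ dense vector as a $\prod_{n\in T} R_n \times \prod_{n\notin T} R_n$ matrix. Under this reshape, each of the two products becomes a single rectangular matrix multiplication of exactly the target dimensions, plus preprocessing that is dominated by any reasonable $\MM$ algorithm's input/output size. Minimizing over $T$ then gives the claimed running time. Let $m = \tilde{O}(R\varepsilon^{-1})$ be the number of nonzero diagonal entries of $\mat{S}$, with support $\{i^{(j)}\}_{j=1}^m$.

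For the first product, set $\alpha_j = (\mat{S}\mat{b})_{i^{(j)}}$ and use the multiplicative factorization of Kronecker entries to write the reshaped output as
\[
\bigl[(\mat{A}^{(1)}\ktimes\cdots\ktimes \mat{A}^{(N)})^\intercal \mat{S}\mat{b}\bigr]_{r_T,r_{T^c}} = \sum_{j=1}^m \alpha_j \Bigl(\prod_{n\in T} a^{(n)}_{i_n^{(j)}, r_n}\Bigr)\Bigl(\prod_{n\notin T} a^{(n)}_{i_n^{(j)}, r_n}\Bigr),
\]
which is the $(r_T, r_{T^c})$ entry of $\mat{X}^\intercal \mat{Y}$ for the two matrices $\mat{X} \in \R^{m \times \prod_{n\in T} R_n}$ with $X_{j, r_T} = \alpha_j \prod_{n\in T} a^{(n)}_{i_n^{(j)}, r_n}$ and $\mat{Y} \in \R^{m \times \prod_{n\notin T} R_n}$ with $Y_{j, r_{T^c}} = \prod_{n\notin T} a^{(n)}_{i_n^{(j)}, r_n}$. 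Each row of $\mat{X}$ (resp.\ $\mat{Y}$) is a scaled Kronecker product of $|T|$ (resp.\ $|T^c|$) factor-matrix rows, which can be built incrementally in time proportional to its length, so constructing both matrices costs $O(m(\prod_{n\in T} R_n + \prod_{n\notin T} R_n))$. The remaining multiplication $\mat{X}^\intercal \mat{Y}$ takes $\MM(\prod_{n\in T} R_n, m, \prod_{n\notin T} R_n)$ time.

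For the second product, reshape $\mat{c}$ into $\mat{C} \in \R^{\prod_{n\in T} R_n \times \prod_{n\notin T} R_n}$ and observe that at each support index $i^{(j)}$ the output equals $S_{i^{(j)} i^{(j)}} \cdot \mat{p}_j^\intercal \mat{C} \mat{q}_j$, where $\mat{p}_j = \bigotimes_{n\in T} a^{(n)}_{i_n^{(j)}, :}$ and $\mat{q}_j = \bigotimes_{n\notin T} a^{(n)}_{i_n^{(j)}, :}$. Stacking the $\mat{p}_j$ as rows of a matrix $\mat{P}$ and the $\mat{q}_j$ as rows of $\mat{Q}$, the $m$ required scalars are the row-sums of $(\mat{P}\mat{C})\odot \mat{Q}$, so the dominant cost is the product $\mat{P}\mat{C}$, which takes $\MM(m, \prod_{n\in T} R_n, \prod_{n\notin T} R_n)$ time. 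Invoking the full symmetry of the matrix-multiplication tensor under permutations of its three modes, both running times equal $\MM(\prod_{n\in T} R_n, R\varepsilon^{-1}, \prod_{n\notin T} R_n)$ up to logarithmic factors, finishing the proof. The main subtlety to verify carefully is that we only ever touch the $m$ nonzero indices of $\mat{S}$ (the length-$I$ Kronecker product is never instantiated), and that the $m$ row-wise Kronecker-product constructions are genuinely absorbed into the rectangular matrix-multiplication term via $\MM(a,b,c) = \Omega(ab+bc+ca)$.
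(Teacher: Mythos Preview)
Your proof is correct and rests on the same core idea as the paper: split the $N$ factors into two groups indexed by $T$ and $T^c$, so that the computation at the $m = \tilde{O}(R\varepsilon^{-1})$ sampled indices becomes a single rectangular matrix multiplication with inner dimension $m$ (or, by symmetry of $\MM$, outer dimension $m$). The difference is in bookkeeping. The paper first invokes permutation equivalence of Kronecker products to reduce to $T=[k]$, then applies the identity $\vectorize(\mat{B}\mat{C}\mat{A}^\intercal)=(\mat{A}\otimes\mat{B})\vectorize(\mat{C})$ together with projection matrices $\mat{I}_{S_1},\mat{I}_{S_2}$ onto the distinct $T$- and $T^c$-coordinate projections of the support; this yields a two-step computation (one sparse product exploiting $\nnz(\mat{B}_S)=m$, then one dense rectangular product). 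You instead work entrywise from the start, directly stacking the $m$ Kronecker-product rows into $\mat{X},\mat{Y}$ (or $\mat{P},\mat{Q}$) and doing a single $\MM$ call. Your route is a bit more elementary---it avoids the vec identity and the WLOG reduction to a contiguous $T$---and makes it transparent that the construction cost $O(m(\prod_{n\in T}R_n+\prod_{n\notin T}R_n))$ is absorbed by the input-size lower bound $\MM(a,b,c)=\Omega(ab+bc)$. The paper's route, on the other hand, keeps the Kronecker algebra visible and connects more directly to the $\KronMatMul$ machinery used elsewhere in the paper. Both arguments land in exactly the same place.
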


The core idea behind \Cref{thm:kron_mat_mul_sqrt_decomp} is 
that the factor matrices can be partitioned
into two groups to achieve a good ``column-product'' balance,
i.e.,
$\min_{T \subseteq [N]}\max\set{\prod_{n \in T}R_n, \prod_{n \not\in T}R_n}$ is close to $\sqrt{R}$.
Then we use the fact that $\nnz(\mat{S}) = \tilde{O}(R \varepsilon^{-1})$
with a sparsity-aware $\KronMatMul$ to solve each part of this partition
separately,
and combine them with fast rectangular matrix multiplication.
If we achieve perfect balance,
the running time is $\tilde{O}(R^{1.626} \varepsilon^{-1})$ using results of~\citet{gall2018improved},
which are explained in detail in \citet[Appendix C]{van2019dynamic}.
If one of these two factor matrix groups has at most $0.9$
of the ``column-product mass,''
the running time is $\tilde{O}(R^{1.9} \varepsilon^{-1})$.

\subsection{Main Algorithm}

We are now ready to present our main algorithm for solving approximate
Kronecker regression.

\begin{algorithm}[t]
\caption{\FastKroneckerRegression}
\label{alg:fast_kronecker_regression}
  \textbf{Input:}
  Factor matrices $\mat{A}^{(n)} \in \R^{I_n \times R_n}$,
  response vector $\mat{b} \in \R^{I_1 \cdots I_N}$,
  L2 regularization strength $\lambda$, error $\varepsilon$, failure probability $\delta$

\begin{algorithmic}[1]
  \State Set $R \gets R_1 R_2 \cdots R_N$
  \For{$n=1$ to $N$}
    \State Compute a spectral approximation $\mat{\tilde{A}}^{(n)}$ with $\tilde{O}(R_n N^2 \epsilon^{-2})$ rows by \cref{lemma:block_sketch_is_spectral_approx} such that 
    \begin{align}
     {\mat{A}^{(n)^\intercal}} {\mat{A}}^{(n)}
     \preccurlyeq
     {{\mat{\tilde{A}}}^{(n)^\intercal}} {\mat{\tilde{A}}}^{(n)}
     \preccurlyeq
     (1+\log(1+\epsilon/4)/N) {\mat{A}^{(n)^\intercal}} {\mat{A}}^{(n)}
    \end{align}
    \State Compute ${\mat{\tilde{A}}^{(n)^\intercal}} {\mat{\tilde{A}}}^{(n)}$
    and the SVD of ${\mat{\tilde{A}}^{(n)^\intercal}} \mat{\tilde{A}}^{(n)} = \mat{V}^{(n)} ({{\mat{\Sigma}^{(n)^\intercal}}} \mat{\Sigma}^{(n)}) {{\mat{V}^{(n)^\intercal}}}$
    \State Compute $(1+\log(1+\epsilon/2)/N)$-approximate leverage scores $\bm{\ell}(\mat{A}^{(n)})$ using \Cref{lemma:fast-ls-comp}
    by applying a random
    Johnson--Lindenstrauss projection
  \EndFor
  \State Initialize product distribution data structure
         $\cP$ to sample indices from
         $(\bm{\ell}(\mat{A}^{(1)}), \cdots, \bm{\ell}(\mat{A}^{(N)}))$
  \State Set $\mat{D} \gets ({\mat{\Sigma}^{(1)}}^\intercal \mat{\Sigma}^{(1)}
                  \ktimes \cdots \ktimes {\mat{\Sigma}^{(N)}}^\intercal \mat{\Sigma}^{(N)} + \lambda\mat{I}_{R})^+$
  \State Let $\mat{M}^+ = (\mat{V}^{(1)} \ktimes \cdots \ktimes \mat{V}^{(N)}) \mat{D}
        (\mat{V}^{(1)} \ktimes \cdots \ktimes \mat{V}^{(N)})^\intercal$ \label{line:preconditioner}
  \State Set $s \gets \ceil{1680 R \ln(40R) \ln(1/\delta) / \varepsilon}$
  \State Set $\mat{S} \gets \RowSampling(\mat{K}, s, \cP)$
  \State Let $\mat{\tilde K} = \mat{S}\mat{K}$ and $\mat{\tilde b} = \mat{S}\mat{b}$
  \State Initialize $\mat{x} \gets \mat{0}_{R}$
  \Repeat
    \State $\mat{x}
        \gets \mat{x} - (1-\sqrt{\varepsilon})\mat{M}^+\parens{\mat{\tilde K}^\intercal \mat{\tilde K}\mat{x} + \lambda \mat{x} - \mat{\tilde K}^\intercal\mat{\tilde b}}$
        using fast Kronecker-matrix multiplication\label{line:richardson_step}
  \Until{convergence} \\
  \Return $\mat{x}$
\end{algorithmic}
\end{algorithm}

\begin{restatable}{theorem}{FastKroneckerRegressionTheorem}
\label{thm:fast_kronecker_regression}
For any Kronecker product matrix
$
    \mat{K}
  = \mat{A}^{(1)} \ktimes \cdots \ktimes \mat{A}^{(N)}
  \in \R^{I_1 \cdots I_N \times R_1 \cdots R_N},
$
$\mat{b} \in \R^{I_1 \cdots I_N}$,
$\lambda \ge 0$,
$\varepsilon \in (0,1/4]$, and $\delta > 0$,
\FastKroneckerRegression returns
$\mat{x}^* \in \R^{R_1\cdots R_N}$
in
\[
\textstyle{
\tilde{O}\parens*{
    \sum_{n=1}^N \parens*{\textnormal{nnz}(\mat{A}^{(n)})+R_n^\omega N^2 \epsilon^{-2}}+ \min_{S\subseteq [N]} \textnormal{MM}\parens*{
        \prod_{n\in S} R_n, R \varepsilon^{-1}, \prod_{n\in [N]\setminus S} R_n
    }
},}
\] 
time
such that, with probability at least $1-\delta$,
\[
  \norm{\mat{K}\mat{x}^* -\mat{b}}_2^2 + \lambda \norm{\mat{x}}_{2}^2
  \le
  (1+\varepsilon) \min_{\mat{x}}\norm{\mat{K}\mat{x}-\mat{b}}_2^2 + \lambda \norm{\mat{x}}_{2}^2.
\]
\end{restatable}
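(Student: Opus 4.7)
The plan is to establish correctness by chaining three approximation guarantees—per-factor spectral sketching, sketch-and-solve on the full Kronecker regression, and preconditioned Richardson iteration on the sketched instance—and then to bound the runtime by accounting for the Kronecker structure at each step.

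First I would show that the preconditioner $\mat{M}$ implicitly defined on Line~\ref{line:preconditioner} satisfies $\mat{K}^\intercal \mat{K} + \lambda \mat{I} \preccurlyeq \mat{M} \preccurlyeq (1+\varepsilon/4)(\mat{K}^\intercal \mat{K} + \lambda \mat{I})$. Each per-factor sketch $\mat{\tilde{A}}^{(n)}$ gives a spectral approximation with relative error $\log(1+\varepsilon/4)/N$ by \Cref{lemma:block_sketch_is_spectral_approx}; because Kronecker products preserve the Loewner order, these $N$ per-factor errors compose multiplicatively to at most $(1+\log(1+\varepsilon/4)/N)^N \le 1+\varepsilon/4$, and adding $\lambda \mat{I}$ to both sides preserves the sandwich. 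The factored form on Line~\ref{line:preconditioner} means every application of $\mat{M}^+$ reduces to a pair of Kronecker matrix-vector products (via \Cref{lemma:kron_mat_mul}) with a diagonal scaling in between, so $\mat{M}$ is never materialized.

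Next I would invoke \Cref{cor:approximate_ridge_regression} on $\mat{K}$ to conclude that the exact minimizer $\mat{\hat{x}}^*$ of the sketched problem is a $(1+\varepsilon/2)$-approximation of the true optimum. Its hypothesis requires a constant-factor overestimate of the leverage scores of $\mat{K}$; by \Cref{lemma:kronecker_cross_leverage_scores} these factorize as a product distribution over the per-factor leverage scores, and since each per-factor estimate is $(1+\log(1+\varepsilon/2)/N)$-accurate, multiplicative telescoping over $N$ factors keeps $\cP$ a constant-factor overestimate. The sample size $s = \tilde{O}(R/\varepsilon)$ matches the requirement of the corollary. Combining the two Loewner sandwiches shows $\mat{M}$ is also a constant-factor spectral approximation of $\mat{\tilde{K}}^\intercal \mat{\tilde{K}} + \lambda \mat{I}$, so \Cref{lemma:richardson_iteration} gives geometric contraction toward $\mat{\hat{x}}^*$ in the $\mat{M}$-norm at a constant rate per step; $O(\log(1/\varepsilon))$ iterations therefore suffice to bring the sketched-objective gap below $\varepsilon/2$. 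Chaining this with the sketch-and-solve guarantee yields the stated $(1+\varepsilon)$-approximation after a constant rescaling of $\varepsilon$.

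For the runtime, the factor-level preprocessing costs $\tilde{O}(\sum_n (\nnz(\mat{A}^{(n)}) + R_n^\omega N^2 \varepsilon^{-2}))$ by \Cref{lemma:block_sketch_is_spectral_approx} together with a standard Johnson--Lindenstrauss leverage-score routine and per-factor SVDs; drawing $\tilde{O}(R/\varepsilon)$ samples from the product distribution costs $\tilde{O}(NR/\varepsilon)$; and each iteration on Line~\ref{line:richardson_step} reduces to matrix-vector products of the form $\mat{\tilde{K}}^\intercal(\mat{\tilde{K}}\mat{x})$ and applications of $\mat{M}^+$ on a sparse sketch with $\tilde{O}(R/\varepsilon)$ nonzeros, which by \Cref{thm:kron_mat_mul_sqrt_decomp} runs in $\tilde{O}(\min_{S \subseteq [N]} \MM(\prod_{n \in S} R_n, R \varepsilon^{-1}, \prod_{n \notin S} R_n))$ time. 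Multiplying by $O(\log(1/\varepsilon))$ iterations and summing yields \Cref{eqn:main_theorem_expression}. The hardest part, I expect, is the careful multiplicative bookkeeping of error: ensuring that a per-factor spectral tolerance of order $\varepsilon/N$ compounds to $(1+\varepsilon/4)$ over $N$ Kronecker levels (which is precisely what drives the $N^2 \varepsilon^{-2}$ blow-up in the per-factor sample sizes) while keeping the leverage-score overestimates constant-factor, so that the main sketch size stays $\tilde{O}(R/\varepsilon)$ and the Richardson rate stays $1-\Omega(1)$. This is exactly the mechanism that lets the algorithm avoid the $\varepsilon^{-N}$ penalty of the second approach of \citet{diao2019optimal}, and it is the delicate point to verify.
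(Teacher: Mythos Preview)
Your proposal is correct and follows essentially the same approach as the paper's proof: compose per-factor $(1+O(\varepsilon/N))$-spectral approximations into a $(1+\varepsilon/4)$-approximation of $\mat{K}^\intercal\mat{K}+\lambda\mat{I}$, use the factored leverage-score product distribution to sketch, apply the block sketch-and-solve guarantee, and run $O(\log(1/\varepsilon))$ Richardson iterations with the Kronecker-structured preconditioner, costing the iterations via \Cref{thm:kron_mat_mul_sqrt_decomp}. The only minor imprecision is that the initial per-factor spectral approximations require a bootstrapping argument (the paper invokes \cite[Lemma~8]{cohen2015uniform}) rather than \Cref{lemma:block_sketch_is_spectral_approx} directly, since the latter already presupposes access to leverage scores.
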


We defer the proof to~\Cref{app:main_algorithm}
and sketch how the ideas in \Cref{alg:fast_kronecker_regression} come together.
First, we do not compute the pseudoinverse $\mat{\tilde{K}}^+$
but instead use iterative Richardson iteration (\Cref{lemma:richardson_iteration}),
which allows us avoid a $\tilde{O}(R^\omega \varepsilon^{-1})$ running time.
This technique by itself, however, only allows us to reduce the running time to $\tilde{O}(R^2 \epsilon^{-1})$
since all of the matrix-vector products
(e.g., $\mat{\tilde K}^\intercal \mat{\tilde{b}}$, $\mat{\tilde K}\mat{x}$, and multiplication against $\mat{M}^+$)
naively take $\Omega(R^2)$ time.
To achieve subquadratic time, we need three more ideas:
(1) compute an approximate SVD of each Gram matrix ${\mat{A}^{(n)}}^\intercal \mat{A}^{(n)}$
in order to construct the decomposed preconditioner $\mat{M}^+$;
(2) use fast Kronecker-vector multiplication (e.g., \Cref{lemma:kron_mat_mul})
to exploit the Kronecker structure of the decomposed preconditioner;
(3) noting that \cref{lemma:kron_mat_mul} for the Kronecker-vector products
$\mat{\tilde K}^\intercal \mat{\tilde b}$ and $\mat{\tilde K}^\intercal (\mat{\tilde K}\mat{x})$
is insufficient because the intermediate vectors can be large,
we develop a novel multiplication algorithm in \Cref{thm:kron_mat_mul_sqrt_decomp}
that fully exploits the sparsity, Kronecker structure,
and fast rectangular matrix multiplication of~\citet{gall2018improved}.

\section{Applications to Low-Rank Tucker Decomposition}
\label{sec:algorithm}

Now we apply our fast Kronecker regression algorithm
to $\texttt{TuckerALS}$ and prove \Cref{thm:main_theorem}.
We list the running times of different factor matrix and core update algorithms
in Table~\ref{table:algorithm_comparison},
and we analyze these subroutines in~\Cref{subsec:different_algorithms_running_times}.

\paragraph{Core Tensor Update.}
The core update running time
in \Cref{thm:main_theorem}
is a direct consequence of our algorithm for fast Kronecker regression in
\Cref{thm:fast_kronecker_regression}.
The only difference is that we avoid recomputing the SVD
and Gram matrix of each factor since these are
computed at the end of each factor matrix update and stored for future use.

\paragraph{Factor Matrix Update.}
The factor matrix updates require more work
because of the $\mat{G}_{(n)}^\intercal \mat{y}$ term in
Line~8 of \texttt{TuckerALS}.
To overcome this, we substitute variables and recast
each factor update as an equality-constrained
Kronecker regression problem with an appended low-rank block
to account for the L2 regularization of the original variables.
To support this new low-rank block, we use the \emph{Woodbury matrix identity}
to extend the technique of using Richardson iterations with
fast Kronecker matrix-vector multiplication for solving sketched regression instances.

The next result formalizes this substitution
and
reduces the problem to block Kronecker regression with
a subspace constraint.
This result relies on the fact that the least squares solution to $\norm{\mat{M}\mat{x} - \mat{z}}_{2}^2$ with minimum norm is $\mat{M}^+ \mat{z}$.

\begin{restatable}{lemma}{LemmaConstrainedLeastSquares}
\label{lemma:constrained_least_squares}
Let $\mat{A} \in \R^{n \times m}$, $\mat{M} \in \R^{m \times d}$,
$\mat{b} \in \R^{n}$, and $\lambda \ge 0$.
For any ridge regression problem of the form
$\argmin_{\mat{x} \in \R^{d}} (\norm*{\mat{A}\mat{M}\mat{x} - \mat{b}}_2^2 + \lambda \norm*{\mat{x}}_2^2)$,
we can solve
$
  \mat{z}_\opt = \argmin_{\mat{N}\mat{z} = \mat{0}}
  \norm{\mat{A}\mat{z} - \mat{b}}_2^2
  + \lambda \norm{\mat{M}^{+}\mat{z}}_2^2,
$
where $\mat{N} = \mat{I}_m - \mat{M} \mat{M}^+$,
and return vector $\mat{M}^+ \mat{z}_\opt$ instead.
\end{restatable}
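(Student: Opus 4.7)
The plan is to make the change of variables $\mat{z} = \mat{M}\mat{x}$ work, using the fact that the minimizer of the original ridge problem must lie in the row space of $\mat{M}$ and that $\mat{M}^+\mat{M}$ and $\mat{M}\mat{M}^+$ are the orthogonal projectors onto that row space and $\text{range}(\mat{M})$, respectively.

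First I would argue that any optimizer $\mat{x}^*$ of the original problem is automatically in $\text{row}(\mat{M}) = \text{range}(\mat{M}^+)$. Decomposing an arbitrary $\mat{x} \in \R^d$ orthogonally as $\mat{x} = \mat{x}_R + \mat{x}_N$ with $\mat{x}_R \in \text{row}(\mat{M})$ and $\mat{x}_N \in \text{null}(\mat{M})$, one has $\mat{M}\mat{x} = \mat{M}\mat{x}_R$, so the data-fit term depends only on $\mat{x}_R$, while $\norm{\mat{x}}_2^2 = \norm{\mat{x}_R}_2^2 + \norm{\mat{x}_N}_2^2$ is strictly larger than $\norm{\mat{x}_R}_2^2$ whenever $\mat{x}_N \ne 0$. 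Hence the ridge optimum satisfies $\mat{x}_N = 0$, which gives the useful identity $\mat{x}^* = \mat{M}^+\mat{M}\mat{x}^*$.

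Next I would verify that the two problems are in exact correspondence. For any $\mat{x}$, set $\mat{z} = \mat{M}\mat{x}$; then $\mat{z} \in \text{range}(\mat{M})$, which is equivalent to $\mat{N}\mat{z} = (\mat{I}_m - \mat{M}\mat{M}^+)\mat{z} = 0$ because $\mat{M}\mat{M}^+$ is the orthogonal projector onto $\text{range}(\mat{M})$. Conversely, for any $\mat{z}$ with $\mat{N}\mat{z} = 0$, set $\mat{x} = \mat{M}^+\mat{z}$; then $\mat{A}\mat{M}\mat{x} = \mat{A}\mat{M}\mat{M}^+\mat{z} = \mat{A}\mat{z}$ (using $\mat{M}\mat{M}^+\mat{z} = \mat{z}$ on $\text{range}(\mat{M})$), and the regularizer reads $\lambda \norm{\mat{M}^+\mat{z}}_2^2$ by definition. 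Restricting to $\mat{x} \in \text{row}(\mat{M})$ in the forward direction, the two maps $\mat{x} \mapsto \mat{M}\mat{x}$ and $\mat{z} \mapsto \mat{M}^+\mat{z}$ are mutual inverses (since $\mat{M}^+\mat{M}\mat{x} = \mat{x}$ and $\mat{M}\mat{M}^+\mat{z} = \mat{z}$), and both objectives agree along the correspondence.

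Combining these two pieces, the minimum of the original ridge problem equals the minimum of the constrained problem over $\{\mat{z} : \mat{N}\mat{z} = 0\}$, and the original minimizer is recovered as $\mat{x}^* = \mat{M}^+\mat{z}_\opt$. The only mild subtlety is making sure the null-space contribution to $\norm{\mat{x}}_2^2$ is eliminated \emph{before} we replace the regularizer by $\lambda\norm{\mat{M}^+\mat{z}}_2^2$; this is precisely why we need the minimum-norm preimage $\mat{M}^+\mat{z}$ rather than an arbitrary solution of $\mat{M}\mat{x} = \mat{z}$.
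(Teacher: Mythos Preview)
Your proposal is correct and takes essentially the same approach as the paper: substitute $\mat{z}=\mat{M}\mat{x}$, encode $\mat{z}\in\mathrm{range}(\mat{M})$ via the constraint $\mat{N}\mat{z}=\mat{0}$, and replace $\lambda\norm{\mat{x}}_2^2$ by $\lambda\norm{\mat{M}^+\mat{z}}_2^2$ using the minimum-norm preimage. The paper's version is terser, simply citing the minimum-norm property of the pseudoinverse, whereas you spell out the row-space/null-space decomposition of $\mat{x}$ explicitly; one minor caveat is that your ``strictly larger'' argument for eliminating $\mat{x}_N$ only forces $\mat{x}_N=\mat{0}$ when $\lambda>0$, but the conclusion that $\mat{M}^+\mat{z}_\opt$ is a valid optimizer still holds trivially when $\lambda=0$.
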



To solve this constrained regression problem,
we can add a scaled version of the constraint matrix $\mat{N}$
as a block to the approximate regression problem and take the projection of the resulting solution.

\begin{restatable}[Approximate equality-constrained regression]{lemma}{LemmaConstrainedReg}
\label{lemma:constraint-reg}
Let $\mat{M}\in\mathbb{R}^{n\times d}$, $\mat{N}\in\mathbb{R}^{m \times d}$, $\mat{b} \in \mathbb{R}^{n}$, and $0<\varepsilon<1/3$. To solve 
$
    \min_{\mat{N}\mat{x}=\mat{0}} \norm{\mat{M}\mat{x}-\mat{b}}_2^2
$
to a $(1+\varepsilon)$-approximation,
it suffices to solve
\[
\min_{\mat{x} \in \R^{d}} \left\Vert \begin{bmatrix}
  \mat{M} \\ \sqrt{w} \mat{N}
\end{bmatrix} \mat{x} - \begin{bmatrix}
  \mat{b} \\ \mat{0}
\end{bmatrix}\right\Vert_2^2
\]
to a $(1+\varepsilon/3)$-approximation
with $w \ge (1 + 12/ \varepsilon) \norm{\mat{M}\mat{N}^+}_2^2$.
\end{restatable}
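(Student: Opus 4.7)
The plan is to show that the orthogonal projection of any $(1+\varepsilon/3)$-approximate solution of the stacked problem onto the feasible set $\{\mat{x} : \mat{N}\mat{x} = \mat{0}\}$ gives a $(1+\varepsilon)$-approximation of the constrained problem. Let $\opt := \min_{\mat{N}\mat{x} = \mat{0}} \norm{\mat{M}\mat{x}-\mat{b}}_2^2$ and let $\hat{\mat{x}}$ be a $(1+\varepsilon/3)$-approximate solution to the stacked regression problem.

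First I would show that the approximate penalty value is bounded by $(1+\varepsilon/3)\opt$. Any feasible $\mat{x}^*$ satisfies $\mat{N}\mat{x}^* = \mat{0}$, so the penalty term vanishes at $\mat{x}^*$ and the stacked objective equals $\opt$ there. Hence the minimum of the stacked problem is at most $\opt$, and
\[
\norm{\mat{M}\hat{\mat{x}}-\mat{b}}_2^2 + w\norm{\mat{N}\hat{\mat{x}}}_2^2 \leq (1+\varepsilon/3)\opt.
\]
Next, define $\bar{\mat{x}} := \hat{\mat{x}} - \mat{N}^{+}\mat{N}\hat{\mat{x}}$, the orthogonal projection of $\hat{\mat{x}}$ onto $\ker(\mat{N})$. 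By the pseudoinverse identity $\mat{N}\mat{N}^{+}\mat{N} = \mat{N}$ we obtain $\mat{N}\bar{\mat{x}} = \mat{0}$, so $\bar{\mat{x}}$ is feasible.

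To control $\norm{\mat{M}\bar{\mat{x}}-\mat{b}}_2^2$, I would decompose $\mat{M}\bar{\mat{x}}-\mat{b} = (\mat{M}\hat{\mat{x}}-\mat{b}) - \mat{M}\mat{N}^{+}\mat{N}\hat{\mat{x}}$ and apply the weighted AM-GM inequality $\norm{\mat{u}-\mat{v}}_2^2 \leq (1+\gamma)\norm{\mat{u}}_2^2 + (1+1/\gamma)\norm{\mat{v}}_2^2$ with $\gamma := \varepsilon/3$, together with the submultiplicative bound $\norm{\mat{M}\mat{N}^{+}\mat{N}\hat{\mat{x}}}_2 \leq \norm{\mat{M}\mat{N}^{+}}_2\norm{\mat{N}\hat{\mat{x}}}_2$, to get
\[
\norm{\mat{M}\bar{\mat{x}}-\mat{b}}_2^2 \leq (1+\gamma)\norm{\mat{M}\hat{\mat{x}}-\mat{b}}_2^2 + (1+1/\gamma)\norm{\mat{M}\mat{N}^{+}}_2^2\norm{\mat{N}\hat{\mat{x}}}_2^2.
\]
A short arithmetic check shows that with $\gamma = \varepsilon/3$, the hypothesis $w \geq (1+12/\varepsilon)\norm{\mat{M}\mat{N}^{+}}_2^2$ implies $(1+1/\gamma)\norm{\mat{M}\mat{N}^{+}}_2^2 \leq (1+\gamma)w$. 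Plugging this in lets me absorb the second term into $(1+\gamma)w\norm{\mat{N}\hat{\mat{x}}}_2^2$, giving
\[
\norm{\mat{M}\bar{\mat{x}}-\mat{b}}_2^2 \leq (1+\gamma)\bigl[\norm{\mat{M}\hat{\mat{x}}-\mat{b}}_2^2 + w\norm{\mat{N}\hat{\mat{x}}}_2^2\bigr] \leq (1+\varepsilon/3)^2\opt \leq (1+\varepsilon)\opt,
\]
where the last step uses $\varepsilon \leq 1/3$.

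The main obstacle is the joint calibration of $\gamma$ and $w$: a naive triangle-inequality bound $\norm{\mat{M}\bar{\mat{x}}-\mat{b}}_2 \leq \norm{\mat{M}\hat{\mat{x}}-\mat{b}}_2 + \norm{\mat{M}\mat{N}^{+}}_2\norm{\mat{N}\hat{\mat{x}}}_2$ would, after squaring, only yield a $\sqrt{\varepsilon}$-type error, which is too weak. The trick is to use weighted AM-GM with $\gamma = \Theta(\varepsilon)$ so that the cross term produces a factor of $(1+1/\gamma) = \Theta(\varepsilon^{-1})$ on $\norm{\mat{M}\mat{N}^{+}}_2^2\norm{\mat{N}\hat{\mat{x}}}_2^2$; this is precisely compensated by choosing $w = \Theta(\varepsilon^{-1}\norm{\mat{M}\mat{N}^{+}}_2^2)$, so the penalty budget $w\norm{\mat{N}\hat{\mat{x}}}_2^2 \leq (1+\varepsilon/3)\opt$ swallows the cross-term contribution and leaves only a multiplicative $(1+\varepsilon/3)^2$ loss.
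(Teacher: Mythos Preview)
Your proof is correct and follows the same overall strategy as the paper: both show that the stacked minimum is at most $\opt$, project the approximate solution $\hat{\mat{x}}$ onto $\ker(\mat{N})$ via $\bar{\mat{x}} = (\mat{I}-\mat{N}^+\mat{N})\hat{\mat{x}}$, and then bound $\norm{\mat{M}\bar{\mat{x}}-\mat{b}}_2^2$ by controlling the cross term coming from $\mat{M}\mat{N}^+\mat{N}\hat{\mat{x}}$. The only difference is how the cross term is handled: the paper splits into two cases (either $2\norm{\mat{M}\mat{N}^+\mat{N}\hat{\mat{x}}}_2 \le \tfrac{\varepsilon}{3}\norm{\mat{M}\hat{\mat{x}}-\mat{b}}_2$ or not), whereas you apply the weighted AM--GM inequality $2ab \le \gamma a^2 + \gamma^{-1}b^2$ with $\gamma = \varepsilon/3$ directly, which yields the same $(1+\varepsilon/3)^2$ factor in one line and is arguably cleaner.
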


Letting $\mat{z} = \mat{G}_{(n)}^\intercal \mat{y}$
in Line~8 of \texttt{TuckerALS}
and
modifying \FastKroneckerRegression to
support additional low-rank updates to the preconditioner,
we get the \FastFactorMatrixUpdate algorithm,
presented as~\Cref{alg:fast_factor_matrix_update} in
\Cref{subsec:fast_factor_matrix_update}.
The analysis is similar to the proofs of \Cref{thm:fast_kronecker_regression}.
The factor matrix updates benefit in the same way as before from
fast Kronecker matrix-vector products, and new low-rank block updates
are supported via the Woodbury identity.
We defer the proofs of the next two results to~\Cref{app:algorithm}.

\begin{restatable}{theorem}{TheoremFastFactorMatrixUpdate}
\label{theorem:fast_factor_matrix_update}
For any $\lambda \ge 0$, $\varepsilon \in (0,1/3)$, and $\delta > 0$,
the \FastFactorMatrixUpdate algorithm updates 
$\mat{A}_{(k)} \in \R^{I_k \times R_k}$
in $\textnormal{\texttt{TuckerALS}}$
with a $(1+\varepsilon)$-approximation,
with probability at least $1 - \delta$,
in time
\[
\textstyle
  \tilde{O}\parens*{
    I_k R_{\ne k}^2 \varepsilon^{-1} \log(1 / \delta)
    + I_k R\sum_{n=1}^N R_n + R_{k}^\omega \varepsilon^{-2}}.
\]
\end{restatable}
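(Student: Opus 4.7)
The plan is to pipeline each row update in \texttt{TuckerALS} through \Cref{lemma:constrained_least_squares} and \Cref{lemma:constraint-reg} so it becomes an unconstrained block Kronecker regression with a low-rank correction block, and then solve the sketched instance via preconditioned Richardson iteration whose preconditioner differs from the pure Kronecker Gram only by a rank-$O(R_k)$ update, which I handle via the Woodbury identity. Concretely, for each fixed $i \in [I_k]$, I would substitute $\mat{z} = \mat{G}_{(k)}^\intercal \mat{y}$ and invoke \Cref{lemma:constrained_least_squares} to obtain
\[
\min_{\mat{N}\mat{z} = \mat{0}} \ \norm{\mat{K}\mat{z} - \mat{b}_{i:}^\intercal}_2^2 + \lambda \norm{(\mat{G}_{(k)}^\intercal)^+ \mat{z}}_2^2,
\]
where $\mat{N} = \mat{I} - \mat{G}_{(k)}^\intercal (\mat{G}_{(k)}^\intercal)^+$ is a projector of rank at most $R_k$. \Cref{lemma:constraint-reg} then converts this into an unconstrained least-squares problem whose design stacks three blocks vertically: the Kronecker block $\mat{K}$, a ridge block $\sqrt{\lambda}(\mat{G}_{(k)}^\intercal)^+$, and a penalty block $\sqrt{w}\mat{N}$ with $w = \Theta(\varepsilon^{-1}\norm{\mat{K}\mat{N}^+}_2^2)$; the last two blocks together have rank at most $R_k$.

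Next I would sketch only the Kronecker block by its leverage scores. By \Cref{lemma:kronecker_cross_leverage_scores} this distribution factorizes as a product distribution across $\mat{A}^{(n)}$, $n \ne k$; using the overestimates and spectral approximations precomputed once at cost $\tilde{O}(\sum_{n}(\nnz(\mat{A}^{(n)}) + R_n^\omega N^2 \varepsilon^{-2}))$ as in \FastKroneckerRegression, sampling $s = \tilde{O}(R_{\ne k}\,\varepsilon^{-1}\log(1/\delta))$ rows takes $\tilde{O}(1)$ per row. By \Cref{lemma:block_sketch_is_spectral_approx} and \Cref{lemma:approximate_block_regression} the sketched problem is a $(1+\varepsilon)$-approximation to the unconstrained block problem, and its normal matrix is an $O(1)$-spectral approximation of
\[
\mat{M} \;=\; \mat{K}^\intercal \mat{K} + \lambda\,(\mat{G}_{(k)}^\intercal)^{+\intercal}(\mat{G}_{(k)}^\intercal)^+ + w\,\mat{N}^\intercal \mat{N}.
\]

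I would then use $\mat{M}^+$ as the preconditioner in \Cref{lemma:richardson_iteration}. Since $\mat{M}$ is $\mat{K}^\intercal\mat{K}$ plus a symmetric update of rank $O(R_k)$, the Woodbury identity expresses $\mat{M}^+$ as $(\mat{K}^\intercal\mat{K})^+$ plus a correction whose application needs one $R_k \times R_k$ inverse (precomputed in $O(R_k^\omega)$, with the $\varepsilon^{-2}$ arising from the quality of the factor spectral approximations feeding into it) and $O(R_k R_{\ne k})$ per mat-vec; the mat-vec with $(\mat{K}^\intercal\mat{K})^+$ itself exploits the Kronecker structure of the approximate factor SVDs via \Cref{lemma:kron_mat_mul} at cost $O(R_{\ne k}\sum_{n\ne k}R_n)$. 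Richardson converges in $O(\log(1/\varepsilon))$ iterations, and each iteration is dominated by the sketched mat-vec $(\mat{S}\mat{K})^\intercal(\mat{S}\mat{K})\mat{z}$ on an $s \times R_{\ne k}$ matrix, costing $O(R_{\ne k}^2 \varepsilon^{-1})$. The sketch $\mat{S}$, the sketched design $\mat{S}\mat{K}$, the factor SVDs, and the Woodbury components are all shared across the $I_k$ rows since $\mat{K}$, $\mat{G}_{(k)}$, and $\mat{N}$ are fixed for this factor update; only $\mat{S}\mat{b}_{i:}$ and the recovered row $\mat{y}^* = (\mat{G}_{(k)}^\intercal)^+\mat{z}^*$ are per-row computations, which together contribute the $I_k R \sum_n R_n$ term. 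A standard union bound across rows, combined with the success-amplification remark after \Cref{cor:approximate_ridge_regression}, introduces the $\log(1/\delta)$ factor and yields the stated running time.

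The main obstacle is the choice of the weight $w$ in \Cref{lemma:constraint-reg}: we need an upper bound on $\norm{\mat{K}\mat{N}^+}_2^2$ tight enough to keep $\mathrm{cond}(\mat{M}) = \mathrm{poly}(1/\varepsilon)$ (so that $O(\log(1/\varepsilon))$ Richardson steps suffice) and cheap to compute within the stated budget. Exploiting the Kronecker structure of $\mat{K}$ and the projector form of $\mat{N}$, one can bound this quantity in terms of the extremal singular values of $\mat{K}$ (available from the precomputed factor SVDs) and the smallest nonzero singular value of $\mat{G}_{(k)}$, which is computed once per factor sweep within the $R_k^\omega \varepsilon^{-2}$ budget; verifying this condition-number bound and tracking how the $(1+\varepsilon/3)$ sketching error composes with the $(1+\varepsilon/3)$ relaxation of the equality constraint are the principal analytical steps that remain.
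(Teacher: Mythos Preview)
Your high-level pipeline matches the paper's proof: substitute via \Cref{lemma:constrained_least_squares}, relax the constraint via \Cref{lemma:constraint-reg}, sketch only the Kronecker block, and solve with preconditioned Richardson using a Woodbury-based preconditioner. However, there is a genuine gap in the Woodbury step, and a secondary discrepancy.

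\textbf{The rank of $\mat{N}$ is not what you claim.} You assert that $\mat{N} = \mat{I}_{R_{\ne k}} - \mat{G}_{(k)}^\intercal (\mat{G}_{(k)}^\intercal)^+$ ``is a projector of rank at most $R_k$'' and hence that $\mat{M} = \mat{K}^\intercal\mat{K} + \text{(rank-$O(R_k)$ update)}$. This is backwards: $\mat{G}_{(k)}^\intercal(\mat{G}_{(k)}^\intercal)^+$ projects onto the column space of $\mat{G}_{(k)}^\intercal \in \R^{R_{\ne k}\times R_k}$, which has dimension at most $R_k$, so $\mat{N}$ projects onto the \emph{complement} and has rank $R_{\ne k} - \rank(\mat{G}_{(k)})$, which is typically large. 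Consequently, writing $\mat{M}^+$ as $(\mat{K}^\intercal\mat{K})^+$ plus a rank-$O(R_k)$ Woodbury correction fails. The paper's fix (\Cref{lemma:fast_factor_richardson_step}) is to expand $w\mat{N}^\intercal\mat{N} = w\mat{N} = w\mat{I} - w\mat{G}_{(k)}^+\mat{G}_{(k)}$ and absorb the full-rank $w\mat{I}$ into the base, yielding
\[
\mat{M} = (\mat{K}^\intercal\mat{K} + w\mat{I}) + \mat{G}_{(k)}^+\bigl(\lambda(\mat{G}_{(k)}^\intercal)^+ - w\mat{G}_{(k)}\bigr),
\]
which \emph{is} a rank-$R_k$ update to an invertible matrix whose inverse inherits the Kronecker-diagonal structure. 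Without this regrouping your Woodbury argument does not go through.

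\textbf{Two smaller points.} First, you propose sharing a single sketch $\mat{S}$ across all $I_k$ rows, but \Cref{lemma:approximate_block_regression} gives a per-instance guarantee that depends on the specific response vector $\mat{b}_{i:}$; the paper draws a fresh sketch per row (Algorithm~\ref{alg:fast_factor_matrix_update}, Line~9) and union-bounds over the $I_k$ rows, which is what produces the $\log(I_k/\delta)$ factor. Sharing the sketch would require an additional argument you have not supplied. Second, the ``main obstacle'' you identify --- controlling $\mathrm{cond}(\mat{M})$ via the choice of $w$ --- is not actually an obstacle: the penalty and ridge blocks appear identically in both $\mat{M}$ and the sketched Gram $\widetilde{\mat{M}}$, so by \Cref{lemma:block_sketch_is_spectral_approx} the preconditioned system has condition number $O(1)$ regardless of $w$. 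The only constraint on $w$ is the lower bound in \Cref{lemma:constraint-reg}.
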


\begin{corollary}
\FastFactorMatrixUpdate updates $\mat{A}^{(k)} \in \R^{I_k \times R_k}$
in
$\tilde{O}\parens{
    I_k R_{\ne k}^{2-\theta^*} \varepsilon^{-1} \log(1 / \delta)
    + I_k R\sum_{n=1}^N R_n + R_{k}^\omega \varepsilon^{-2}}
$ time,
where $\theta^* > 0$ is the optimally balanced $\MM$ exponent in~\Cref{thm:kron_mat_mul_sqrt_decomp}.
\end{corollary}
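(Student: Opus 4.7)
The plan is to take the running time of \Cref{theorem:fast_factor_matrix_update}, namely
$\tilde{O}(I_k R_{\ne k}^2 \varepsilon^{-1}\log(1/\delta) + I_k R\sum_{n=1}^N R_n + R_k^\omega \varepsilon^{-2})$,
and replace its $R_{\ne k}^2$ factor by $R_{\ne k}^{2-\theta^*}$ by swapping the Kronecker matrix-vector multiplication routine (\Cref{lemma:kron_mat_mul}) used inside the Richardson iterations of \FastFactorMatrixUpdate for the sparsity- and fast-rectangular-matrix-multiplication-aware routine in \Cref{thm:kron_mat_mul_sqrt_decomp}. All other terms are unaffected, so this single upgrade yields the corollary.

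More concretely, the first term in \Cref{theorem:fast_factor_matrix_update} is charged to the $I_k$ sketched Kronecker regressions (one per row of $\mat{A}^{(k)}$), each running $\tilde{O}(1)$ preconditioned Richardson steps and forming at each step products of the form $\mat{\tilde K}\mat{x}$ and $\mat{\tilde K}^\intercal \mat{\tilde b}_i$, where $\mat{\tilde K} = \mat{S}\mat{K}$, $\mat{K}$ is the Kronecker product of the $N-1$ factor matrices other than $\mat{A}^{(k)}$, and the sketch $\mat{S}$ is diagonal with $\tilde{O}(R_{\ne k}\varepsilon^{-1})$ nonzeros. Since \Cref{thm:kron_mat_mul_sqrt_decomp}'s hypothesis is satisfied with $R_{\ne k}$ in place of $R$, invoking that theorem costs $\tilde{O}(R_{\ne k}^{2-\theta^*}\varepsilon^{-1})$ per product, where $\theta^* > 0$ is the balanced-split exponent applied to the $(N-1)$-fold Kronecker product indexed by $[N]\setminus\{k\}$. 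Summing over the $I_k$ subproblems, the constant number of Richardson steps needed for the $(1+\varepsilon)$-guarantee, and the $O(\log(1/\delta))$ success-probability amplification inherited from \Cref{theorem:fast_factor_matrix_update} gives the target $\tilde{O}(I_k R_{\ne k}^{2-\theta^*}\varepsilon^{-1}\log(1/\delta))$ contribution.

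The remaining two terms carry over unchanged from \Cref{theorem:fast_factor_matrix_update}: the $I_k R\sum_{n=1}^N R_n$ term accounts for applying the $\mat{G}_{(n)}^\intercal$ substitution from \Cref{lemma:constrained_least_squares}, the Woodbury low-rank correction to the preconditioner needed to handle the equality-constraint block from \Cref{lemma:constraint-reg}, and forming the sketched right-hand sides $\mat{\tilde b}_i$; none of these depend on how fast the Kronecker-vector products are carried out. Similarly, the $R_k^\omega \varepsilon^{-2}$ term covers the spectral approximation plus SVD of ${\mat{A}^{(k)}}^\intercal \mat{A}^{(k)}$ used to build the preconditioner, and is independent of the multiplication routine. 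The only subtle point is to check that $\theta^*$ as defined in \Cref{thm:kron_mat_mul_sqrt_decomp} applies uniformly to any Kronecker product, here the one with $N-1$ factors and column-product $R_{\ne k}$; this follows directly from the definition of $\theta^*$, which only requires a partition of the factor matrices into two groups whose column-products admit fast rectangular matrix multiplication. The main obstacle is really just the bookkeeping of verifying that no other step of \FastFactorMatrixUpdate hides an implicit $R_{\ne k}^2$ dependency that would dominate after the speedup; inspection of the algorithm confirms that the Richardson inner loop is the sole place where this exponent arises.
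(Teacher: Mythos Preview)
Your proposal is correct and matches the paper's intended (though not explicitly written) argument: the corollary follows immediately from \Cref{theorem:fast_factor_matrix_update} by replacing, inside the Richardson loop of \FastFactorMatrixUpdate, the naive $\tilde{O}(R_{\ne k}^2\varepsilon^{-1})$ computation of $\mat{\tilde K}^\intercal\mat{\tilde b}$ and $\mat{\tilde K}^\intercal\mat{\tilde K}\mat{z}$ (see \Cref{lemma:fast_factor_richardson_step}) with the sparsity-aware routine of \Cref{thm:kron_mat_mul_sqrt_decomp}, leaving the Woodbury and preconditioner terms untouched. Two minor attribution points that do not affect correctness: forming $\mat{\tilde b}_i=\mat{S}\mat{b}_{i:}^\intercal$ costs only $\tilde{O}(R_{\ne k}\varepsilon^{-1})$, not $I_kR\sum_n R_n$; and the $R_k^\omega$ contribution arises primarily from inverting the $R_k\times R_k$ Woodbury correction matrix, with the $\varepsilon^{-2}$ factor coming from the spectral approximation of the factor matrices as in \Cref{alg:fast_kronecker_regression}.
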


\section{Experiments}
\label{sec:experiments}

\begin{figure}[t]
\centering
\subfigure{\includegraphics[width=1.0\linewidth]{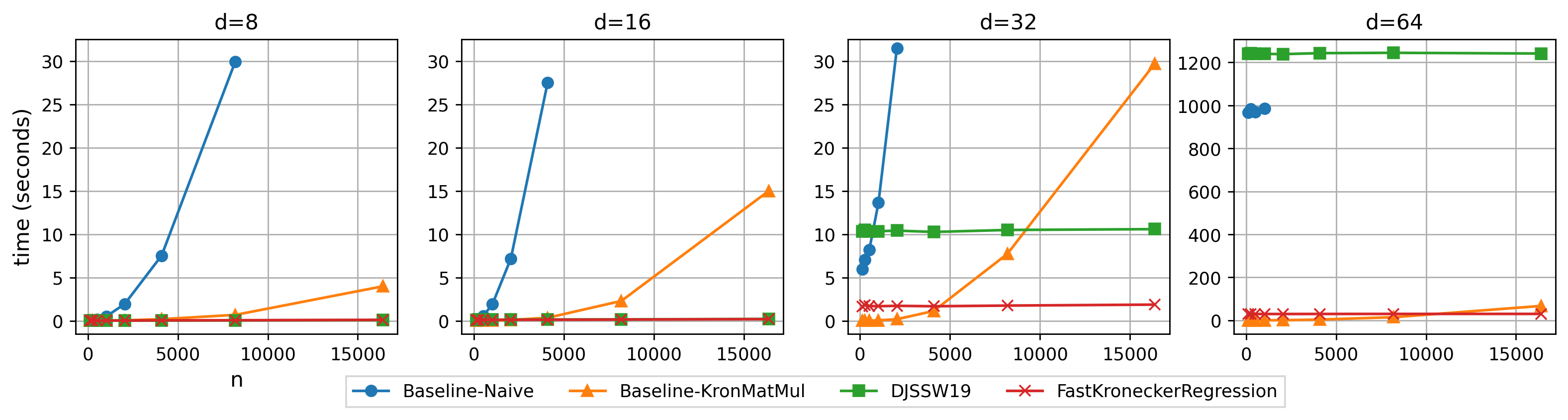}}
\vspace{-0.60cm}
\caption{
Running times of
Kronecker regression algorithms
with a design matrix of size $n^2 \times d^2$.
}
\label{fig:kronecker_regression_plot}
\vspace{-0.25cm}
\end{figure}

All experiments were run using NumPy~\cite{harris2020array} with an Intel Xeon W-2135 processor (8.25MB cache, 3.70 GHz) and 128GB of RAM.
The \FastKroneckerRegression-based ALS experiments
for low-rank Tucker decomposition on image tensors are deferred to~\Cref{app:tensor_decomposition_experiments}.
All of our code is available at \url{https://github.com/fahrbach/subquadratic-kronecker-regression}.

\paragraph{Kronecker regression.}
We build on the numerical experiments in~\cite{diao2018sketching, diao2019optimal}
for Kronecker regression that use two random factor matrices.
We generate matrices $\mat{A}^{(1)}, \mat{A}^{(2)} \in \R^{n \times d}$
where each entry is drawn i.i.d.\ from the normal distribution $\mathcal{N}(1, 0.001)$
and compare several algorithms for solving
$
  \min_{\mat{x}}\norm{(\mat{A}^{(1)} \ktimes \mat{A}^{(2)})\mat{x} - \mat{1}_{n^2} }_{2}^2 + \lambda \norm{\mat{x}}_{2}^2
$
as we increase $n,d$.
The running times are plotted in~\Cref{fig:kronecker_regression_plot}.

The algorithms we compare are:
(1) a baseline that solves the normal equation
$(\mat{K}^\intercal \mat{K} + \lambda\mat{I})^+ \mat{K}^\intercal \mat{b}$
and fully exploits the Kronecker structure of $\mat{K}^\intercal \mat{K}$
before calling $\texttt{np.linalg.pinv()}$;
(2) an enhanced baseline that combines the SVDs of $\mat{A}^{(n)}$
with \Cref{lemma:kron_mat_mul}, e.g.,
$\texttt{KronMatMul}([(\mat{U}^{(1)})^\intercal, (\mat{U}^{(2)})^\intercal], \mat{b})$,
using only Kronecker-vector products;
(3) the sketching algorithm of~\citet[Algorithm 1]{diao2019optimal};
and (4) our \FastKroneckerRegression algorithm in \Cref{alg:fast_kronecker_regression}.
For both sketching algorithms, we use $\varepsilon = 0.1$ and $\delta = 0.01$.
We reduce the number of row samples in both algorithms by $\alpha = 10^{-5}$
so that the algorithms are more practical and comparable to
the earlier experiments in~\cite{diao2018sketching, diao2019optimal}.
Lastly, we set $\lambda = 10^{-3}$.
We discuss additional parameter choice details and the full
results in~\Cref{app:kronecker_regression_experiments}.

The running times in~\Cref{fig:kronecker_regression_plot} demonstrate
several different behaviors.
The naive baseline quickly becomes impractical for moderately large values of $n$ or $d$.
\texttt{KronMatMul} is competitive for $n \le 10^4$, especially since it is an
exact method.
The runtimes of the sketching algorithms are nearly-independent of $n$.
\citet{diao2019optimal} works well for small $d$, but deteriorates tremendously as $d$ grows
because it computes
$((\mat{S}\mat{K})^\intercal \mat{S}\mat{K} + \lambda\mat{I})^+ \in \R^{d^2 \times d^2}$
and cannot exploit the Kronecker structure of $\mat{K}$, which takes $O(d^6)$ time.
\FastKroneckerRegression, on the other hand, runs in $O(d^4)$ time because it
uses quadratic-time Kronecker-vector products in each Richardson iteration step (\Cref{line:richardson_step}).

\vspace{-0.10cm}
\begin{table}[H]
    \caption{
    Kronecker regression losses for $d=64$.
    OPT denotes the loss of the \KronMatMul algorithm,
    \texttt{DJSSW19} is~\citet[Algorithm 1]{diao2019optimal},
    and \Cref{alg:fast_kronecker_regression} is~\FastKroneckerRegression.
    We also record the relative error of each algorithm and the number
    of rows sampled from $\mat{A}^{(1)} \ktimes \mat{A}^{(2)}$.
    }
    \label{tab:sketchinig_approximation}
    \centering
    \small
    \begin{tabular}{cccccccccc}
        \toprule
        $n$ & OPT & \Cref{alg:fast_kronecker_regression} & Approx & \texttt{DJSSW19} & Approx & Rows sampled ($\%$) \\
        \midrule
1024 & 0.031 & 0.032 & 1.051 & 0.035 & 1.138 & 0.0370 \\
2048 & 0.123 & 0.126 & 1.026 & 1.577 & 12.792 & 0.0093 \\
4096 & 0.507 & 0.520 & 1.026 & 275.566 & 543.776 & 0.0023 \\
8192 & 2.073 & 2.136 & 1.030 & 333.430 & 160.809 & 0.0006 \\
16384 & 8.238 & 8.608 & 1.045 & 546391.728 & 66329.791 & 0.0001 \\
        \bottomrule
    \end{tabular}
\end{table}
\vspace{-0.40cm}

These experiments also show that combining sketching with iterative methods can
give better \emph{sketch efficiency}.
\Cref{tab:sketchinig_approximation} compares the loss of
\cite[Algorithm 1]{diao2019optimal} and \FastKroneckerRegression to an exact baseline
OPT for $d=64$.
Both algorithms use the exact same sketch $\mat{S}\mat{K}$ for each value of $n$.
Our algorithm uses the original
$(\mat{K}^\intercal \mat{K} + \lambda \mat{I})^+$ as a preconditioner
to solve the sketched problem,
whereas \citet[Algorithm 1]{diao2019optimal} computes
$((\mat{S}\mat{K})^\intercal \mat{S}\mat{K} + \lambda\mat{I})^+ (\mat{S}\mat{K})^\intercal \mat{S}\mat{b}$ exactly and becomes numerically unstable
for $n \ge 2048$ when $d \in \{16, 32, 64\}$.
This raises the question about how to combine sketched information with
the original data to achieve more efficient algorithms,
even when solving sketched instances.
We leave this question of sketch efficiency as an interesting future work.

\bibliographystyle{plainnat}
\bibliography{main}

\newpage
\appendix

\section{Missing Analysis from \Cref{sec:row_sampling}}
\label{app:row_sampling}

Here we show how to use leverage scores of
the design matrix $\mat{A} \in \R^{n \times d}$
to create a smaller least squares problem
whose solution vector gives a $(1+\varepsilon)$-approximation to the
original regression problem.
Our proof relies on several sketching
and leverage score sampling results in randomized numerical linear
algebra~\cite{drineas2006fast,drineas2011faster,woodruff2014sketching}.
These prerequisite results are well-known, but scattered through the
literature.
They are the building blocks for proving our approximate
block-regression results in
\Cref{lemma:approximate_block_regression}
and
\Cref{cor:approximate_ridge_regression}.

\subsection{Approximate Least Squares}
\label{subsec:fast_least_squares}
We follow the outline of~\citet{larsen2022sketching}
(originally written in~\cite[Appendix B]{larsen2020practical}).
Consider the overdetermined least squares problem
defined by a matrix $\mat{A} \in \R^{n \times d}$ and
response vector $\mat{b} \in \R^{n}$,
where $n \ge d$ and $\rank(\mat{A}) = d$.
Define the optimal sum of squared residuals to be
\begin{equation}
\label{eqn:least_squares_problem}
  \cR^2
  =
  \min_{\mat{x} \in \R^{d}}\norm*{\mat{A}\mat{x} - \mat{b}}_{2}^2.
\end{equation}
Assume for now $\mat{A}$ is full rank.
Let the compact SVD of the design matrix be
$\mat{A} = \mat{U}_{\mat{A}} \mat{\Sigma}_{\mat{A}} \mat{V}_{\mat{A}}^\intercal$.
By definition, $\mat{U}_{\mat{A}} \in \R^{n \times d}$ is an orthonormal
basis for the column space of $\mat{A}$.
Let $\mat{U}_{\mat{A}}^\perp \in \R^{n \times (n-d)}$ be an orthonormal
basis for the $(n-d)$-dimensional subspace that is
orthogonal to the column space of~$\mat{A}$.
For notational simplicity,
let $\mat{b}^\perp = \mat{U}_{\mat{A}}^\perp {\mat{U}_{\mat{A}}^\perp}^\intercal \mat{b}$
denote the projection of $\mat{b}$ onto the orthogonal subspace $\mat{U}_{\mat{A}}^\perp$.
The vector $\mat{b}^\intercal$
is important because its norm is equal to the norm of the residual vector.
To see this, observe that $\mat{x}$ can be chosen so that $\mat{A}\mat{x}$
perfectly matches the part of $\mat{b}$ in the column space of $\mat{A}$, but
cannot (by definition) match anything in the range of $\mat{U}_{\mat{A}}^\perp$:
\begin{equation}
  \cR^2
  =
  \min_{\mat{x} \in \R^{d}}\norm*{\mat{A}\mat{x} - \mat{b}}_{2}^2
  =
  \norm*{\mat{U}_{\mat{A}}^\perp {\mat{U}_{\mat{A}}^\perp}^\intercal \mat{b}}_{2}^{2}
  =
  \norm*{\mat{b}^\perp}_{2}^{2}.
\end{equation}
We denote the solution to the least squares problem by $\mat{x}_{\opt}$,
hence we have $\mat{b} = \mat{A}\mat{x}_{\opt} + \mat{b}^\perp$.

Now we build on a structural result of \citet{drineas2011faster}
that establishes sufficient conditions on any sketching
matrix $\mat{S} \in \R^{s \times n}$
such that the solution $\mat{\tilde x}_{\opt}$ to the approximate
least squares problem
\begin{align}
\label{eqn:approximate_least_squares_problem}
  \mat{\tilde x}_{\opt}
  =
  \argmin_{\mat{x} \in \R^d}\norm*{\mat{S} \parens*{\mat{A}\mat{x} - \mat{b}}}_{2}^2
\end{align}
gives a relative-error approximation to the original least squares problem.
The two conditions we require of matrix $\mat{S}$ are:
\begin{align}
\label{eqn:structural_condition_1}
  &\sigma^2_{\min}\parens*{\mat{S} \mat{U}_{\mat{A}}} \ge 1 / \sqrt{2}, \text{~and~} \\
\label{eqn:structural_condition_2}
  &\norm*{\mat{U}_{\mat{A}}^\intercal \mat{S}^\intercal \mat{S} \mat{b}^\perp }_{2}^2 \le \varepsilon \cR^2 / 2,
\end{align}
for some $\varepsilon \in (0,1)$.
While the algorithms we consider in this work are randomized,
the following lemma is a deterministic statement.
Failure probabilities enter our analysis later when we show 
our sketch matrices satisfy
conditions~\Cref{eqn:structural_condition_1} and
\Cref{eqn:structural_condition_2} with sufficiently high probability.

\begin{lemma}[{\citet[Lemma 1]{drineas2011faster}}]
\label{lem:approx_least_squares}
Consider the overconstrained least squares approximation
problem in~\Cref{eqn:least_squares_problem}, and
let the matrix $\mat{U}_{\mat{A}} \in \R^{n \times d}$ contain the top $d$
left singular vectors of $\mat{A}$.
Assume the matrix $\mat{S}$ satisfies conditions
\Cref{eqn:structural_condition_1} and
\Cref{eqn:structural_condition_2} for some $\varepsilon \in (0, 1)$.
Then, the solution $\mat{\tilde x}_{\opt}$ to the approximate least squares
problem~\Cref{eqn:approximate_least_squares_problem} satisfies:
\begin{align}
  \norm*{\mat{A} \mat{\tilde x}_{\opt} - \mat{b} }_{2}^2
  &\le
  \parens*{1 + \varepsilon} \cR^2, \text{~and~}
  \\
  \norm*{\mat{\tilde x}_{\opt} - \mat{x}_{\opt}}_{2}^2
  &\le
  \frac{1}{\sigma^2_{\min}\parens*{\mat{A}}}
  \varepsilon \cR^2.
\end{align}
\end{lemma}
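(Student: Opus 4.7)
The plan is to prove both bounds through the standard SVD change-of-variables trick that turns the overconstrained least squares problem into a lower-dimensional one in the coordinate system induced by $\mat{U}_{\mat{A}}$, and then to extract the two inequalities from the normal equations of the sketched problem using the two structural conditions, one for the left-hand side and one for the right-hand side.

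First I would set up a Pythagorean decomposition. Since $\mat{b} = \mat{A}\mat{x}_{\opt} + \mat{b}^\perp$ with $\mat{A}(\mat{\tilde x}_{\opt} - \mat{x}_{\opt}) \in \mathrm{col}(\mat{A})$ and $\mat{b}^\perp \perp \mathrm{col}(\mat{A})$, I can write
\[
  \norm*{\mat{A}\mat{\tilde x}_{\opt} - \mat{b}}_2^2
  = \norm*{\mat{A}(\mat{\tilde x}_{\opt} - \mat{x}_{\opt})}_2^2 + \norm*{\mat{b}^\perp}_2^2
  = \norm*{\mat{A}(\mat{\tilde x}_{\opt} - \mat{x}_{\opt})}_2^2 + \cR^2.
\]
So for the first bound it suffices to show $\norm{\mat{A}(\mat{\tilde x}_{\opt} - \mat{x}_{\opt})}_2^2 \le \varepsilon \cR^2$. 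Introducing the change of variables $\mat{y}_{\opt} \DEF \mat{\Sigma}_{\mat{A}} \mat{V}_{\mat{A}}^\intercal (\mat{\tilde x}_{\opt} - \mat{x}_{\opt})$, the compact SVD gives $\mat{A}(\mat{\tilde x}_{\opt} - \mat{x}_{\opt}) = \mat{U}_{\mat{A}} \mat{y}_{\opt}$ and hence $\norm{\mat{A}(\mat{\tilde x}_{\opt} - \mat{x}_{\opt})}_2 = \norm{\mat{y}_{\opt}}_2$.

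Next I would read off the normal equations of the sketched problem. Writing $\mat{A}\mat{\tilde x}_{\opt} - \mat{b} = \mat{U}_{\mat{A}}\mat{y}_{\opt} - \mat{b}^\perp$, the condition $(\mat{S}\mat{A})^\intercal\mat{S}(\mat{A}\mat{\tilde x}_{\opt} - \mat{b}) = \mat{0}$, combined with the invertibility of $\mat{\Sigma}_{\mat{A}}\mat{V}_{\mat{A}}^\intercal$, collapses to
\[
  (\mat{S}\mat{U}_{\mat{A}})^\intercal (\mat{S}\mat{U}_{\mat{A}}) \mat{y}_{\opt}
  = (\mat{S}\mat{U}_{\mat{A}})^\intercal \mat{S}\mat{b}^\perp
  = \mat{U}_{\mat{A}}^\intercal \mat{S}^\intercal \mat{S} \mat{b}^\perp.
\]
Taking Euclidean norms and using the singular-value bound $\sigma_{\min}^2(\mat{S}\mat{U}_{\mat{A}}) \ge 1/\sqrt{2}$ from~\Cref{eqn:structural_condition_1} to lower-bound the left side, and the approximate-matrix-multiplication bound in~\Cref{eqn:structural_condition_2} to upper-bound the right side, yields
\[
  \tfrac{1}{2}\norm*{\mat{y}_{\opt}}_2^2
  \le \sigma_{\min}^4(\mat{S}\mat{U}_{\mat{A}}) \norm*{\mat{y}_{\opt}}_2^2
  \le \norm*{\mat{U}_{\mat{A}}^\intercal \mat{S}^\intercal \mat{S} \mat{b}^\perp}_2^2
  \le \tfrac{\varepsilon}{2}\cR^2,
\]
so $\norm{\mat{y}_{\opt}}_2^2 \le \varepsilon \cR^2$. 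Combining with the Pythagorean identity proves the first inequality.

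For the second bound I would reverse the change of variables. Since $\mat{A}$ has full column rank, $\mat{\tilde x}_{\opt} - \mat{x}_{\opt} = \mat{V}_{\mat{A}} \mat{\Sigma}_{\mat{A}}^{-1} \mat{y}_{\opt}$ and orthonormality of $\mat{V}_{\mat{A}}$ gives $\norm{\mat{\tilde x}_{\opt} - \mat{x}_{\opt}}_2 \le \sigma_{\min}(\mat{A})^{-1}\norm{\mat{y}_{\opt}}_2$. Squaring and substituting the bound $\norm{\mat{y}_{\opt}}_2^2 \le \varepsilon \cR^2$ from above yields the claim. The only mildly delicate step is keeping the two squarings consistent when going from $\sigma_{\min}^2(\mat{S}\mat{U}_{\mat{A}}) \ge 1/\sqrt{2}$ to $\sigma_{\min}^4(\mat{S}\mat{U}_{\mat{A}}) \ge 1/2$ in the key inequality; beyond that, the proof is essentially pure linear algebra with no randomness, so I expect no substantive obstacle.
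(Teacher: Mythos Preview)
Your proposal is correct and follows essentially the same approach as the paper's proof: the paper's $\mat{z}_{\opt}$ is your $\mat{y}_{\opt}$, both satisfy the same sketched normal equations $(\mat{S}\mat{U}_{\mat{A}})^\intercal(\mat{S}\mat{U}_{\mat{A}})\mat{z}_{\opt} = (\mat{S}\mat{U}_{\mat{A}})^\intercal\mat{S}\mat{b}^\perp$, and both proofs combine the two structural conditions to obtain $\norm{\mat{z}_{\opt}}_2^2 \le \varepsilon\cR^2$, then invoke the Pythagorean identity for the first bound and $\sigma_{\min}(\mat{A})$ for the second. The only cosmetic difference is the order in which you state the Pythagorean decomposition versus the change of variables.
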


\begin{proof}
Let us first rewrite the sketched least squares problem induced by $\mat{S}$ as
\begin{align}
    \label{eqn:sketched_ls_proof_1}
    \min_{\mat{x} \in \R^{d}} \norm*{\mat{S} \mat{A}\mat{x} - \mat{S}\mat{b}}_{2}^2
    &=
    \min_{\mat{y} \in \R^{d}} \norm*{\mat{S} \mat{A}\parens*{\mat{x}_{\opt} + \mat{y}} - \mat{S}\parens*{\mat{A}\mat{x}_{\opt} + \mat{b}^{\perp}}}_{2}^2 \\
    &=
    \min_{\mat{y} \in \R^{d}} \norm*{\mat{S} \mat{A}\mat{y} - \mat{S}\mat{b}^{\perp}}_{2}^2 \notag \\
    \label{eqn:sketched_ls_proof_2}
    &=
    \min_{\mat{z} \in \R^{d}} \norm*{\mat{S} \mat{U}_{\mat{A}}\mat{z} - \mat{S}\mat{b}^{\perp}}_{2}^2.
\end{align}
Equation~\Cref{eqn:sketched_ls_proof_1} is true
because $\mat{b} = \mat{A}\mat{x}_{\opt} + \mat{b}^\perp$,
and \Cref{eqn:sketched_ls_proof_2} follows because the columns of $\mat{A}$ span
the same subspace as the columns of $\mat{U}_{\mat{A}}$.
Now, let $\mat{z}_{\opt} \in \R^{d}$ be such that
$\mat{U}_{\mat{A}} \mat{z}_{\opt} = \mat{A}(\mat{\tilde x}_{\opt} - \mat{x}_{\opt})$
and note that $\mat{z}_{\opt}$ minimizes \Cref{eqn:sketched_ls_proof_2}.
This fact follows from
\begin{align*}
    \norm*{\mat{S}\mat{A}(\mat{\tilde x}_{\opt} - \mat{x}_{\opt}) - \mat{S}\mat{b}^\perp}_{2}^2
    =
    \norm*{
    \mat{S}\mat{A}\mat{\tilde x}_{\opt}
    - \mat{S}\parens*{\mat{b} - \mat{b}^\perp} - \mat{S}\mat{b}^\perp }_{2}^2
    =
    \norm*{\mat{S}\mat{A}\mat{\tilde x}_{\opt} - \mat{S}\mat{b}}_{2}^2.
\end{align*}
Thus, by the normal equations, we have
\[
    \parens*{\mat{S}\mat{U}_{\mat{A}}}^\intercal \mat{S}\mat{U}_{\mat{A}} \mat{z}_{\opt}
    =
    \parens*{\mat{S}\mat{U}_{\mat{A}}}^\intercal \mat{S} \mat{b}^\perp.
\]
Taking the norm of both sides and observing that under condition
\Cref{eqn:structural_condition_1}
we have
$\sigma_{i}((\mat{S}\mat{U}_{\mat{A}})^\intercal \mat{S}\mat{U}_{\mat{A}})
=
\sigma_{i}^2 (\mat{S}\mat{U}_{\mat{A}}) \ge 1/\sqrt{2}$, for all $i \in [d]$,
it follows that
\begin{equation}
  \label{eqn:sketched_ls_proof_3}
    \norm*{\mat{z}_{\opt}}_{2}^2 / 2
    \le
    \norm*{\parens*{\mat{S}\mat{U}_{\mat{A}}}^\intercal \mat{S}\mat{U}_{\mat{A}} \mat{z}_{\opt}}_{2}^2
    =
    \norm*{\parens*{\mat{S}\mat{U}_{\mat{A}}}^\intercal \mat{S}\mat{b}^\perp }_{2}^2.
\end{equation}
Using condition \Cref{eqn:structural_condition_2}, we observe that
\begin{equation}
  \label{eqn:sketched_ls_proof_4}
    \norm*{\mat{z}_{\opt}}_{2}^2
    \le
    2 \norm*{\parens*{\mat{S}\mat{U}_\mat{A}}^\intercal \mat{S}\mat{b}^\perp}_{2}^2
    \le
    \varepsilon \cR^2.
\end{equation}

To establish the first claim of the lemma, let us rewrite the squared
norm of the residual vector as
\begin{align}
    \norm*{\mat{A} \mat{\tilde x}_{\opt} - \mat{b}}_{2}^2
    &=
    \norm*{\mat{A} \mat{\tilde x}_{\opt} - \mat{A}\mat{x}_{\opt} + \mat{A}\mat{x}_{\opt} - \mat{b}}_{2}^2 \notag \\
    &=
    \label{eqn:sketched_ls_proof_5}
    \norm*{\mat{A} \mat{\tilde x}_{\opt} - \mat{A}\mat{x}_{\opt}}_{2}^2
    + \norm*{\mat{A}\mat{x}_{\opt} - \mat{b}}_{2}^2 \\
    &=
    \label{eqn:sketched_ls_proof_6}
    \norm*{\mat{U}_{\mat{A}} \mat{z}_{\opt}}_{2}^2
    + \cR^2 \\
    &\le
    \label{eqn:sketched_ls_proof_7}
    \parens*{1 + \varepsilon} \cR^2,
\end{align}
where~\Cref{eqn:sketched_ls_proof_5} follows from the Pythagorean theorem
since $\mat{b}-\mat{A}\mat{x}_{\opt} = \mat{b}^\perp$, which is orthogonal
to $\mat{A}$, and consequently $\mat{A}(\mat{x}_{\opt} - \mat{\tilde x}_{\opt})$;
\Cref{eqn:sketched_ls_proof_6} follows from the definition of $\mat{z}_{\opt}$
and $\cR^2$;
and \Cref{eqn:sketched_ls_proof_7} follows from
\Cref{eqn:sketched_ls_proof_4} and the orthogonality of $\mat{U}_{\mat{A}}$.

To establish the second claim of the lemma, recall that
$\mat{A}(\mat{x}_{\opt} - \mat{\tilde x}_{\opt}) = \mat{U}_{\mat{A}} \mat{z}_{\opt}$.
Taking the norm of both sides of this expression, we have
\begin{align}
    \label{eqn:sketched_ls_proof_8}
    \norm*{\mat{x}_{\opt} - \mat{\tilde x}_{\opt}}_{2}^2
    &\le 
    \frac{\norm*{\mat{U}_{\mat{A}} \mat{z}_{\opt}}_{2}^2}{\sigma_{\min}^2\parens*{\mat{A}}} \\
    \label{eqn:sketched_ls_proof_9}
    &\le
    \frac{\varepsilon \cR^2 }{\sigma_{\min}^2\parens*{\mat{A}}},
\end{align}
where \Cref{eqn:sketched_ls_proof_8} follows since $\sigma_{\min}(\mat{A})$
is the smallest singular
value of $\mat{A}$ and $\rank(\mat{A})=d$;
and \Cref{eqn:sketched_ls_proof_9} follows from
\Cref{eqn:sketched_ls_proof_4} and the orthogonality of $\mat{U}_{\mat{A}}$.
\end{proof}

Next we present two results that are useful for proving
our sketches $\mat{S}$ satisfy the structural conditions in
Equations~\Cref{eqn:structural_condition_1} and \Cref{eqn:structural_condition_2}.
The first result states $\mat{S}\mat{U}_{\mat{A}}$ is a subspace embedding
for the column space of $\mat{U}_{\mat{A}}$.
This result can be thought of as an approximate isometry
and is noticeably stronger than the desired condition
$\sigma_{\min}^2(\mat{S}\mat{U}_{\mat{A}}) \ge 1/\sqrt{2}$.

\begin{theorem}[{\citet[Theorem 17]{woodruff2014sketching}}]
\label{thm:structural_tool_1}
Consider $\mat{A} \in \R^{n \times d}$ and its compact SVD 
$\mat{A} = \mat{U}_{\mat{A}} \mat{\Sigma}_{\mat{A}} \mat{V}_{\mat{A}}^\intercal$.
Let $\mat{p} \in [0,1]^n$
be a $\beta$-overestimate for the
leverage score distribution of $\mat{A}$.
Let $s > 144d \ln(2d / \delta)/(\beta \varepsilon^2)$.
Let the matrix $\mat{S} \in \R^{s \times n}$
be the output of $\RowSampling(\mat{A}, s, \mat{p})$ (\Cref{def:sample_rows_alg}).
Then, with probability at least $1-\delta$, simultaneously for all $i$,
we have
\[
  1 - \varepsilon
  \le
  \sigma_{i}^2 \parens*{\mat{S} \mat{U}_{\mat{A}}}
  \le
  1 + \varepsilon.
\]
\end{theorem}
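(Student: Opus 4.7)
The target inequality $1-\varepsilon \le \sigma_i^2(\mat{S}\mat{U}_{\mat{A}}) \le 1+\varepsilon$ for all $i$ is equivalent to the operator-norm bound $\norm{\mat{U}_{\mat{A}}^\intercal \mat{S}^\intercal \mat{S}\, \mat{U}_{\mat{A}} - \mat{I}_d}_2 \le \varepsilon$. My plan is to write this $d \times d$ matrix as a sum of $s$ i.i.d.\ random PSD matrices and apply a matrix Chernoff / Bernstein inequality (e.g., Tropp). Because $\mat{S}$ has a single nonzero per row by construction, sampling index $j_k$ with probability $p_{j_k}$ and weighting by $1/\sqrt{p_{j_k} s}$ yields
\[
\mat{U}_{\mat{A}}^\intercal \mat{S}^\intercal \mat{S}\, \mat{U}_{\mat{A}} = \sum_{k=1}^{s} \mat{X}_k, \qquad \mat{X}_k = \frac{1}{p_{j_k} s}\, (\mat{U}_{\mat{A}})_{j_k:}^\intercal (\mat{U}_{\mat{A}})_{j_k:},
\]
so each $\mat{X}_k$ is a rank-one PSD matrix, and this reduces the theorem to a classical matrix concentration argument.

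The first moment computation is immediate: $\E[\mat{X}_k] = \tfrac{1}{s}\sum_j (\mat{U}_{\mat{A}})_{j:}^\intercal (\mat{U}_{\mat{A}})_{j:} = \tfrac{1}{s}\mat{I}_d$, so the mean of the full sum equals $\mat{I}_d$. The key quantitative step is to use the $\beta$-overestimate property to control both the a.s.\ operator norm and the matrix variance. Because $\norm{\bm{\ell}(\mat{A})}_1 = d$ and $\norm{\mat{p}}_1 = 1$, the overestimate gives $p_j \ge \beta\, \ell_j/d$ for every $j$, hence $\ell_j/p_j \le d/\beta$. This yields the uniform spectral bound
\[
\norm{\mat{X}_k}_2 = \frac{\norm{(\mat{U}_{\mat{A}})_{j_k:}}_2^2}{p_{j_k}\, s} = \frac{\ell_{j_k}}{p_{j_k}\, s} \le \frac{d}{\beta s},
\]
and, after a short calculation, a variance estimate $\norm{\sum_{k} \E[\mat{X}_k^2]}_2 \le d/(\beta s)$ via the same $\ell_j/p_j \le d/\beta$ step combined with $\sum_j (\mat{U}_{\mat{A}})_{j:}^\intercal (\mat{U}_{\mat{A}})_{j:} = \mat{I}_d$.

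With both quantities bounded by $R := d/(\beta s)$, a standard matrix Chernoff inequality implies
\[
\Pr\bracks*{\,\norm*{\textstyle\sum_{k} \mat{X}_k - \mat{I}_d}_2 \ge \varepsilon\,} \le 2d \exp\parens*{-c\,\varepsilon^2 / R} = 2d \exp\parens*{-c\,\varepsilon^2 \beta s / d}
\]
for a small absolute constant $c$. Solving for the smallest $s$ that makes the right-hand side at most $\delta$ produces exactly the stated threshold $s > 144\, d \ln(2d/\delta)/(\beta \varepsilon^2)$ (the constant $144$ corresponds to the specific variant of the matrix Chernoff bound used in \cite{woodruff2014sketching}; any of the standard Tropp bounds suffices, possibly with a different absolute constant absorbed into the $144$). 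On the complementary event, every eigenvalue of $\mat{U}_{\mat{A}}^\intercal \mat{S}^\intercal \mat{S}\, \mat{U}_{\mat{A}}$ lies in $[1-\varepsilon, 1+\varepsilon]$, which is the claim.

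The only delicate point, and really the only obstacle, is keeping the constant factors aligned: one has to be careful that the $\beta$-overestimate definition is normalized against $\norm{\bm{\ell}(\mat{A})}_1 = \rank(\mat{A}) = d$ (so that the a.s.\ bound is $d/(\beta s)$ rather than $n/(\beta s)$), and that the variance bound uses the same inequality rather than a loose $\norm{\mat{U}_{\mat{A}}}_F^2$ step. Every other piece is routine once the decomposition into $\mat{X}_k$ and the matrix Chernoff inequality are in hand.
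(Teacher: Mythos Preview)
Your argument is correct and is the standard matrix-Chernoff proof of this result. Note, however, that the paper does not give its own proof of this theorem: it is stated as a quotation of \citet[Theorem~17]{woodruff2014sketching} and used as a black box, so there is nothing in the paper to compare against beyond the citation. The proof in Woodruff's monograph proceeds exactly as you outline---write $\mat{U}_{\mat{A}}^\intercal \mat{S}^\intercal \mat{S}\,\mat{U}_{\mat{A}}$ as a sum of i.i.d.\ rank-one PSD matrices, use the $\beta$-overestimate to bound $\ell_j/p_j \le \rank(\mat{A})/\beta$, and apply a matrix Chernoff bound---so your proposal matches the intended source. One cosmetic remark: since the statement takes the \emph{compact} SVD, the identity you get is $\mat{I}_r$ with $r=\rank(\mat{A})$ rather than $\mat{I}_d$; you already flag this at the end, and it changes nothing in the estimates.
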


For the second structural condition, we use the following result about
squared-distance sampling for approximate matrix multiplication
in~\cite{drineas2006fast}.
In our analysis of block leverage score sampling
(e.g., ridge regression),
it is possible (and beneficial) that $\beta > 1$ and that rows are sometimes
not sampled.
We modify the original theorem statement and provide a proof to show
that the result is unaffected.

\begin{theorem}[{\citet[Lemma 8]{drineas2006fast}}]
\label{thm:structural_tool_2}
Let $\mat{A} \in \R^{n \times m}$, $\mat{B} \in \R^{n \times p}$,
and $s$ denote the number of samples.
Let the vector $\mat{p} \in [0,1]^{n}$ contain probabilities such that,
for all $i \in [n]$, we have
\[
  p_{i} \ge \beta \frac{\norm{\mat{a}_{i:}}_{2}^2}{ \norm*{\mat{A}}_{\frobenius}^2 },
\]
for some constant $\beta > 0$.
We require that $\norm{\mat{p}}_1 \le 1$, but it is possible that
$\mat{p}$ does not contain all of the probability mass (i.e., $\norm{\mat{p}}_1 < 1$).
Sample $s$ row indices $(\xi^{(1)},\xi^{(2)},\dots,\xi^{(s)})$ from $\mat{p}$,
independently and with replacement, and form the approximate product
\[
  \frac{1}{s} \sum_{t=1}^s \frac{1}{p_{\xi^{(t)}}}
    \mat{a}_{\xi^{(t)}:}^\intercal \mat{b}_{\xi^{(t)}:}
  =
  \parens*{\mat{S}\mat{A}}^\intercal \mat{S}\mat{B},
\]
where $\mat{S} \in \R^{s \times n}$
is the sampling and rescaling matrix whose $t$-th row is defined by
the entries
\[
  s_{tk} =
  \begin{cases}
    \frac{1}{\sqrt{s p_{k}}} & \text{if $k = \xi_{t}$}, \\
    0 & \text{otherwise}.
  \end{cases}
\]
Disregard trials that occur with the remaining probability $1 - \norm{\mat{p}}_1$.
Then, we have
\[
  \E\bracks*{\norm*{\mat{A}^\intercal \mat{B}
  - \parens*{\mat{S}\mat{A}}^\intercal\mat{S}\mat{B}}_{\frobenius}^2}
  \le
  \frac{1}{\beta s}
  \norm*{\mat{A}}_{\frobenius}^2
  \norm*{\mat{B}}_{\frobenius}^2.
\]
\end{theorem}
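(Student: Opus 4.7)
The plan is to give the standard second-moment/variance argument for approximate matrix multiplication via sampling, being careful to track the effect of the "disregard outcome" that occurs with probability $1 - \norm{\mat{p}}_1$. The key point we must check (and which motivates modifying the original statement) is that the rescaling $1/\sqrt{sp_i}$ still yields an unbiased estimator of each entry of $\mat{A}^\intercal \mat{B}$ even when the probabilities do not sum to one, and that the disregarded outcome only contributes zero to the variance. Once this is established, the bound follows from the Cauchy--Schwarz-style inequality that splits $\norm{\mat{a}_{i:}}_2^2 \norm{\mat{b}_{i:}}_2^2$ and the assumed lower bound $p_i \ge \beta \norm{\mat{a}_{i:}}_2^2 / \norm{\mat{A}}_\frobenius^2$.

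Concretely, I would introduce, for each trial $t \in [s]$, the matrix-valued random variable $\mat{X}_t \in \R^{m \times p}$ that equals $\tfrac{1}{s p_{\xi^{(t)}}} \mat{a}_{\xi^{(t)}:}^\intercal \mat{b}_{\xi^{(t)}:}$ when some index $\xi^{(t)} \in [n]$ is drawn and equals the zero matrix on the "disregard" event (which has probability $1 - \norm{\mat{p}}_1$). Then $(\mat{S}\mat{A})^\intercal \mat{S}\mat{B} = \sum_{t=1}^s \mat{X}_t$, and computing entrywise,
\[
\E\bigl[(\mat{X}_t)_{jk}\bigr]
= \sum_{i=1}^n p_i \cdot \frac{a_{ij} b_{ik}}{s p_i}
= \frac{1}{s} (\mat{A}^\intercal \mat{B})_{jk}.
\]
The disregard outcome simply adds $0$ to this sum, so the estimator remains unbiased regardless of whether $\norm{\mat{p}}_1 = 1$ or not. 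By independence of the $s$ trials, it follows that the total expected squared Frobenius error decomposes entrywise as $\sum_{j,k} \var\bigl(\sum_t (\mat{X}_t)_{jk}\bigr) = s \sum_{j,k} \var\bigl((\mat{X}_1)_{jk}\bigr)$.

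For each entry, I would upper bound the variance by the second moment, again noting that the disregard outcome contributes nothing:
\[
\E\bigl[(\mat{X}_t)_{jk}^2\bigr]
= \sum_{i=1}^n p_i \cdot \frac{a_{ij}^2 b_{ik}^2}{s^2 p_i^2}
= \frac{1}{s^2} \sum_{i=1}^n \frac{a_{ij}^2 b_{ik}^2}{p_i}.
\]
Summing over $j,k$ factorizes into $\sum_i \norm{\mat{a}_{i:}}_2^2 \norm{\mat{b}_{i:}}_2^2 / p_i$, and plugging in the hypothesis $p_i \ge \beta \norm{\mat{a}_{i:}}_2^2 / \norm{\mat{A}}_\frobenius^2$ cancels the factor $\norm{\mat{a}_{i:}}_2^2$ and leaves
\[
\E\bigl[\norm{\mat{A}^\intercal \mat{B} - (\mat{S}\mat{A})^\intercal \mat{S}\mat{B}}_\frobenius^2\bigr]
\le \frac{\norm{\mat{A}}_\frobenius^2}{\beta s} \sum_{i=1}^n \norm{\mat{b}_{i:}}_2^2
= \frac{\norm{\mat{A}}_\frobenius^2 \norm{\mat{B}}_\frobenius^2}{\beta s}.
\]

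The only subtlety, and the only place the modification from the original $\norm{\mat{p}}_1 = 1$ version matters, is verifying that the "disregard" outcome is handled correctly in both the expectation and the second-moment computations. Because in both cases the disregarded trial contributes exactly zero (rather than an additional rescaled term), the original argument of \citet{drineas2006fast} carries over without any extra slack; in particular there is no $\norm{\mat{p}}_1^{-1}$ factor appearing in the bound. I do not anticipate any genuinely hard step here; the whole proof is essentially a variance calculation, and the "obstacle" is only the bookkeeping to see that allowing $\norm{\mat{p}}_1 < 1$ does not degrade the estimate.
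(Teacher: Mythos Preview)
Your proposal is correct and follows essentially the same approach as the paper: an entrywise unbiasedness check (noting the disregard outcome contributes zero), variance bounded by the second moment, summation over entries to obtain $\sum_i \norm{\mat{a}_{i:}}_2^2 \norm{\mat{b}_{i:}}_2^2/p_i$, and application of the hypothesis on $p_i$. The only cosmetic difference is that you package things as matrix-valued random variables $\mat{X}_t$ while the paper works with scalar $X_t$ per entry; the computations are identical.
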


\begin{proof}
First we analyze the entry of 
$(\mat{S}\mat{A})^\intercal \mat{S}\mat{B}$ at index $(i,j)$.
Viewing the approximate product as a sum of outer products,
we can write this entry in terms of scalar random variables
$X_t$, for $t \in [s]$, as follows:
\[
  X_t =
  \begin{cases}
    \frac{a_{\xi^{(t)}i} b_{\xi^{(t)}j}}{s p_{\xi^{(t)}}}
    & \text{with probability $p_i$ for each $i \in [n]$}, \\
    0 & \text{otherwise with probability $1 - \norm{p}_1$}
  \end{cases}
  ~\implies~~
  \bracks*{(\mat{S}\mat{A})^\intercal \mat{S}\mat{B}}_{ij}
  = \sum_{t=1}^s X_t.
\]
The expected values of $X_t$ and $X_t^2$ for all values of $t$ are
\begin{align*}
  \E\bracks*{X_t} &= \sum_{k=1}^n p_k \frac{a_{ki} b_{kj}}{s p_k}
                   = \frac{1}{s} \parens*{\mat{A}^\intercal \mat{B}}_{ij}, \text{~and~} \\
  \E\bracks*{X_t^2} &=
        \sum_{k=1}^n p_k \parens*{\frac{a_{ki} b_{kj}}{s p_k}}^2
        =
        \frac{1}{s^2}\sum_{k=1}^n \frac{\parens*{a_{ki} b_{kj}}^2}{p_k}.
\end{align*}
Therefore, 
$\E[\bracks{(\mat{S}\mat{A})^\intercal \mat{S}\mat{B}}_{ij}]
  = \sum_{t=1}^s \E[X_t] = \parens{\mat{A}^\intercal \mat{B}}_{ij}$,
which means the estimator is unbiased.
Furthermore, since the estimated matrix entry is the sum of $s$
i.i.d.\ random variables,
its variance is
\begin{align*}
  \var\parens*{\bracks*{(\mat{S}\mat{A})^\intercal \mat{S}\mat{B}}_{ij}}
  &=
  \sum_{t=1}^s \var\parens*{X_t} \\
  &=
  \sum_{t=1}^s \parens*{\E\bracks*{X_t^2} - \E\bracks*{X_t}^2} \\
  &=
  \sum_{t=1}^s \frac{1}{s^2} \sum_{k=1}^n \parens*{\frac{\parens*{a_{ki} b_{kj}}^2}{p_k} 
        - \parens*{ \mat{A}^\intercal\mat{B} }_{ij}^2 } \\
  &=
  \frac{1}{s} \sum_{k=1}^n \parens*{\frac{\parens*{a_{ki} b_{kj}}^2}{p_k} 
        - \parens*{ \mat{A}^\intercal\mat{B} }_{ij}^2 }.
\end{align*}
Now we apply this result to the expectation we want to bound:
\begin{align*}
  \E\bracks*{\norm*{\mat{A}^\intercal \mat{B}
  - \parens*{\mat{S}\mat{A}}^\intercal\mat{S}\mat{B}}_{\frobenius}^2}
  &=
  \sum_{i=1}^m \sum_{j=1}^p
  \E\bracks*{
    \parens*{ \bracks*{\parens*{\mat{S}\mat{A}}^\intercal\mat{S}\mat{B}}_{ij}
    - \parens*{\mat{A}^\intercal \mat{B}}_{ij}}^2} \\
  &=
  \sum_{i=1}^m \sum_{j=1}^p
  \E\bracks*{
    \parens*{ \bracks*{\parens*{\mat{S}\mat{A}}^\intercal\mat{S}\mat{B}}_{ij}
    - \E\bracks*{ \bracks*{\parens*{\mat{S}\mat{A}}^\intercal\mat{S}\mat{B}}_{ij} }}^2} \\
  &=
  \sum_{i=1}^m \sum_{j=1}^p
  \var\parens*{
    \bracks*{\parens*{\mat{S}\mat{A}}^\intercal\mat{S}\mat{B}}_{ij}} \\
  &=
  \frac{1}{s} \sum_{i=1}^m \sum_{j=1}^p
  \sum_{k=1}^n \parens*{\frac{\parens*{a_{ki} b_{kj}}^2}{p_k} 
        - \parens*{ \mat{A}^\intercal\mat{B} }_{ij}^2 } \\
  &=
  \frac{1}{s} 
  \sum_{k=1}^n \frac{\parens{\sum_{i=1}^m a_{ki}^2 }\parens{\sum_{j=1}^p b_{kj}^2 }}{p_k} 
        - \frac{n}{s}\sum_{i=1}^m\sum_{j=1}^p \parens*{ \mat{A}^\intercal\mat{B} }_{ij}^2 \\
  &=
  \frac{1}{s} 
  \sum_{k=1}^n
  \frac{\norm{\mat{a}_{k:}}_{2}^2 \norm{\mat{b}_{k:}}_{2}^2}{p_k}
  - \frac{n}{s} \norm*{\mat{A}^\intercal \mat{B}}_{\frobenius}^2 \\
  &\le
  \frac{1}{s} 
  \sum_{k=1}^n
  \frac{\norm{\mat{a}_{k:}}_{2}^2 \norm{\mat{b}_{k:}}_{2}^2}{p_k}.
\end{align*}
The last inequality uses the fact that the Frobenius norm of any matrix
is nonnegative.
Finally, by using the $\beta$-overestimate assumption on the sampling
probabilities, we have
\begin{align*}
  \E\bracks*{\norm*{\mat{A}^\intercal \mat{B}
  - \parens*{\mat{S}\mat{A}}^\intercal\mat{S}\mat{B}}_{\frobenius}^2}
  &\le 
  \frac{1}{s} 
  \sum_{k=1}^n
  \frac{\norm{\mat{a}_{k:}}_{2}^2 \norm{\mat{b}_{k:}}_{2}^2}{p_k} \\
  &\le
  \frac{1}{s} 
  \sum_{k=1}^n
  \parens*{ \norm*{\mat{A}}_{\frobenius}^2  \frac{\norm{\mat{a}_{k:}}_{2}^2 \norm{\mat{b}_{k:}}_{2}^2}{\beta \norm{\mat{a}_{k:}}_{2}^2} } \\
  &=
  \frac{1}{s \beta}
  \norm*{\mat{A}}_{\frobenius}^2
  \sum_{k=1}^n \norm*{\mat{b}_{k:}}_{2}^2 \\
  &=
  \frac{1}{s \beta}
  \norm*{\mat{A}}_{\frobenius}^2
  \norm*{\mat{B}}_{\frobenius}^2,
\end{align*}
which is the desired upper bound.
\end{proof}

\subsection{Generalizing to Submatrix Sketching}

Now that our main tools are in place, we extend the analysis of
approximate least squares to work with sketched submatrices
of the vertically stacked block design matrix.

\BlockSketchIsSpectralApprox*

\begin{proof}
Write the compact SVD of $\mat{A_1}$ as
$\mat{A}_1 = \mat{U}_{\mat{A}_1} \mat{\Sigma}_{\mat{A}_1} \mat{V}_{\mat{A}_1}^\intercal$.
\Cref{thm:structural_tool_1} guarantees that with probability at least $1-\delta$,
\[
  1 - \varepsilon \le \sigma_{i}^2\parens*{\mat{S}\mat{U}_{\mat{A}_1}} \le 1 + \varepsilon.
\]
Therefore, we have
\[
  (1-\varepsilon) \mat{I}_{d}
  \preccurlyeq
  \parens*{
  \mat{S}
  \mat{U}_{\mat{A}_1}
  }^\intercal
  \mat{S}
  \mat{U}_{\mat{A}_1}
  \preccurlyeq
  (1+\varepsilon) \mat{I}_{d}.
\]
It follows that
\begin{align*}
  (1-\varepsilon)
  \mat{A}_1^\intercal \mat{A}_1
  &=
  (1-\varepsilon)
  \mat{V}_{\mat{A}_1}
  \mat{\Sigma}_{\mat{A}_1}^\intercal
  \mat{I}_d
  \mat{\Sigma}_{\mat{A}_1}
  \mat{V}_{\mat{A}_1}^\intercal \\
  &\preccurlyeq
  \mat{V}_{\mat{A}_1}
  \mat{\Sigma}_{\mat{A}_1}^\intercal
  \mat{U}_{\mat{A}_1}^\intercal
  \mat{S}^\intercal
  \mat{S}
  \mat{U}_{\mat{A}_1}
  \mat{\Sigma}_{\mat{A}_1}
  \mat{V}_{\mat{A}_1}^\intercal \\
  &=
  \parens*{\mat{S}\mat{A}_1}^\intercal
  \mat{S}\mat{A}_1.
\end{align*}
Similarly, we have
$\parens*{\mat{S}\mat{A}_1}^\intercal
  \mat{S}\mat{A}_1
  \preccurlyeq
  (1+\varepsilon)
  \mat{A}_1^\intercal \mat{A}_1.
$
Writing $\mat{A}^\intercal \mat{A} = \mat{A}_{1}^\intercal \mat{A}_{1} +
\mat{A}_{2}^\intercal \mat{A}_{2}$ as the sum of outer products,
we have
\begin{align*}
  \parens*{1-\varepsilon} \mat{A}^\intercal \mat{A}
  &\preccurlyeq
  \parens*{1-\varepsilon} \mat{A}_{1}^\intercal \mat{A}_{1} +
  \mat{A}_{2}^\intercal \mat{A}_{2} \\
  &\preccurlyeq
  \parens*{\mat{S}\mat{A}_1}^\intercal
  \mat{S}\mat{A}_1
  +
  \mat{A}_{2}^\intercal \mat{A}_{2} \\
  &\preccurlyeq
  \parens*{1+\varepsilon} \mat{A}_{1}^\intercal \mat{A}_{1} +
  \mat{A}_{2}^\intercal \mat{A}_{2} \\
  &\preccurlyeq
  \parens*{1+\varepsilon} \mat{A}^\intercal \mat{A},
\end{align*}
which completes the proof.
\end{proof}

\ApproximateBlockRegression*
\begin{proof}
Let $\delta = 1/10$ be the desired failure probability.
Consider the augmented sketch matrix
\begin{align}
  \mat{S}' = \begin{bmatrix}
    \mat{S} & \mat{0}_{s \times n_2} \\
    \mat{0}_{n_2 \times n_1} & \mat{I}_{n_2}
  \end{bmatrix}.
\end{align}
It follows that
\begin{align}
\label{eqn:unstacked_sketch}
  \mat{S}'\mat{A} = \begin{bmatrix}
    \mat{S}\mat{A}_1 \\
    \mat{A}_2
  \end{bmatrix}.
\end{align}
Let the compact SVD of $\mat{A}$ be
$\mat{A} = \mat{U} \mat{\Sigma} \mat{V}^\intercal$.
We prove that each of the structural conditions
about $\mat{S}'$ in \Cref{eqn:structural_condition_1} and
\Cref{eqn:structural_condition_2} fail with probability at most
$\delta/2$.
Then we use a union bound and apply \Cref{lem:approx_least_squares}.

\textbf{Satisfying structural condition 1.}
It follows from \Cref{eqn:unstacked_sketch}
that
\[
  \parens*{\mat{S}'\mat{A}}^\intercal \mat{S}'\mat{A}
  =
  \parens*{\mat{S}\mat{A}_1}^\intercal \mat{S}\mat{A}_{1}
  +
  \mat{A}_2^\intercal \mat{A}_2.
\]
Using \Cref{lemma:block_sketch_is_spectral_approx}, we know
\begin{align}
\label{eqn:augmented_spectral_approx}
  \parens*{1-\varepsilon} \mat{A}^\intercal \mat{A}
  &\preccurlyeq
  \parens*{\mat{S}'\mat{A}}^\intercal \mat{S}'\mat{A}
  \preccurlyeq
  \parens*{1+\varepsilon} \mat{A}^\intercal \mat{A}.
\end{align}
Since
$\mat{A}^\intercal \mat{A} = \mat{V} \mat{\Sigma}^\intercal
     \mat{I}_{d} \mat{\Sigma} \mat{V}^\intercal
$
and
$
    \parens*{\mat{S}'\mat{A}^\intercal} \mat{S}'\mat{A}
=
  \mat{V} \mat{\Sigma}^\intercal \mat{U}^\intercal \mat{S}'^\intercal \mat{S}' \mat{U} \mat{\Sigma} \mat{V}^\intercal$,
it follows from \Cref{eqn:augmented_spectral_approx} that
\[
  (1-\varepsilon) \mat{I}_d
  \preccurlyeq
  \parens*{\mat{S}' \mat{U}}^\intercal
  \mat{S}' \mat{U}
  \preccurlyeq
  (1+\varepsilon) \mat{I}_d
\]
since $\mat{\Sigma}$ and $\mat{V}^\intercal$ are positive definite.
Therefore, the first structural condition \Cref{eqn:structural_condition_1}
is true with probability at least $1 - \delta/2$ as long as
$1 - \varepsilon \ge 1/\sqrt{2}$.
This means the number of samples needs to be at least
\[
  s > \frac{144d \ln(4d/\delta)}{\beta (1 - 1/\sqrt{2})^2}
  >
  \frac{1680 d \ln(4d / \delta)}{\beta}.
\]

\textbf{Satisfying structural condition 2.}
We show \Cref{eqn:structural_condition_2} holds with probability at least
$1-\delta/2$ using a modification of
\Cref{thm:structural_tool_2} and Markov's inequality.
First observe that
\[
  \mat{U}^\intercal \mat{b}^\perp
  =
  \mat{U}^\intercal \parens*{\mat{U}^\perp {\mat{U}^\perp}^\intercal \mat{b}}
  =
  \mat{0}_{\rank(\mat{A})},
\]
where $\mat{b}^\perp$ is defined as in \Cref{subsec:fast_least_squares}.
Thus, the second structural condition can be seen as bounding how
closely this sampled product approximates the zero vector.
It follows that
\begin{align*}
  \norm*{ \mat{U}^\intercal
      \mat{S}'^\intercal \mat{S}' \mat{b}^\perp}_{2}^2
  &=
  \norm*{ \mat{U}^\intercal \mat{b}^\perp - 
    \mat{U}^\intercal
      \mat{S}'^\intercal \mat{S}' \mat{b}^\perp}_{2}^2 \\
  &=
  \norm*{ \mat{U}^\intercal \parens*{\mat{I}_{n_1 + n_2} - 
      \mat{S}'^\intercal \mat{S}' } \mat{b}^\perp}_{2}^2 \\
  &=
  \norm*{ \mat{U}^\intercal 
  \begin{bmatrix}
    \mat{I}_{n_1} - \mat{S}^\intercal \mat{S} & \mat{0} \\
    \mat{0} & \mat{0}
  \end{bmatrix}
  \mat{b}^\perp}_{2}^2 \\
  &=
  \norm*{ \mat{\tilde U}^\intercal \parens*{\mat{I}_{n_1} - \mat{S}^\intercal \mat{S}} \mat{\tilde b}^\perp }_2^2 \\
  &=
  \norm*{ \mat{\tilde U}^\intercal \mat{\tilde b}^\perp
   - \mat{\tilde U}^\intercal \mat{S}^\intercal \mat{S} \mat{\tilde b}^\perp }_2^2,
\end{align*}
where $\mat{\tilde U} \in \R^{n_1 \times d}$
and $\mat{\tilde b}^\perp \in \R^{n_1}$
denote the first
$n_1$ rows of $\mat{U}$
and $\mat{b}^\perp$, respectively.

Now we bound the probability that a row index in $\mat{\tilde U}$
is sampled when constructing $\mat{S}$,
which allows us to apply \Cref{thm:structural_tool_2}:
\begin{align}
  \Pr\parens*{\text{row $i \in [n_1]$ is sampled}}
  &\ge
  \beta \frac{\ell_{i}(\mat{A}_1)}{\norm{\bm{\ell}(\mat{A}_1)}} \label{eqn:subsketch_probability} \\
  &=
  \beta \frac{\ell_{i}(\mat{A}_1)}{\rank(\mat{A}_1)}
    \cdot
    \frac{\norm{\bm{\ell}_{\mat{A}_1}(\mat{A})}_1}{ \ell_i(\mat{A})}
    \cdot
    \frac{\ell_i(\mat{A})}{\norm{\bm{\ell}_{\mat{A}_1}(\mat{A})}_1} \\
  &\ge
  \beta \frac{\norm{\bm{\ell}_{\mat{A}_1}(\mat{A})}_1}{\rank(\mat{A}_1)}
    \cdot
    \frac{\ell_i(\mat{A})}{\norm{\bm{\ell}_{\mat{A}_1}(\mat{A})}_1}  \label{eqn:add_rows_drop_ls}\\
  &=
  \beta \frac{\norm{\mat{\tilde U}}_\frobenius^2}{\rank(\mat{A}_1)}
    \cdot
    \frac{\norm{\mat{\tilde u}_{i:}}_2^2}{\norm{\mat{\tilde U}}_\frobenius^2}  \label{eqn:tilde_U_norm}.
\end{align}
We use
$\norm{\bm{\ell}_{\mat{A}_1}(\mat{A})}_1 = \sum_{i \in [n_1]} \ell_{i}(\mat{A})$
to denote the sum of leverage scores of $\mat{A}$ corresponding to the rows of
$\mat{A}_1$.
Equation \Cref{eqn:add_rows_drop_ls}
holds because leverage scores do not increase when rows are added to the matrix,
i.e., $\ell_i(\mat{A}_1) \ge \ell_i(\mat{A})$.
Equation \Cref{eqn:tilde_U_norm} is true because the leverage scores of
$\mat{A}$ corresponding to the rows in $\mat{A}_1$ are given by the submatrix
$\mat{\tilde U}$ in the compact SVD of $\mat{A}$.
Therefore, \Cref{thm:structural_tool_2} guarantees that
\begin{align*}
  \norm*{ \mat{U}^\intercal
      \mat{S}'^\intercal \mat{S}' \mat{b}^\perp}_{2}^2
  &=
  \norm*{ \mat{\tilde U}^\intercal \mat{\tilde b}^\perp
   - \parens{\mat{S}\mat{\tilde U}}^\intercal \mat{S} \mat{\tilde b}^\perp }_2^2 \\
  &\le
  \frac{\rank\parens*{\mat{A}_1}}{\beta \norm{\mat{\tilde U}}_\frobenius^2 s }
  \cdot
  \norm{\mat{\tilde U}}_\frobenius^2
  \norm{\mat{\tilde b}^\perp}_2^2 \\
  &\le
  \frac{\rank\parens*{\mat{A}_1}}{\beta s }
  \cdot
  \norm{\mat{b}^\perp}_2^2.
\end{align*}
Since $\mat{b}^\perp$ is the residual vector,
applying Markov's inequality gives us
\begin{align}
\label{eqn:structural_2_sample_bound}
  \Pr\parens*{
     \norm*{ \mat{U}^\intercal
      \mat{S}'^\intercal \mat{S}' \mat{b}^\perp }_{2}^2
      \ge
      \frac{\varepsilon  \norm{\mat{b}^\perp}_{2}^2}{2}
  }
  \le
  \frac{\rank\parens*{\mat{A}_1}}{\beta s }
  \cdot
  \norm{\mat{b}^\perp}_2^2
  \cdot
  \frac{2}{\varepsilon \norm{\mat{b}^\perp}_2^2}
  =
  \frac{2 \cdot \rank\parens*{\mat{A}_1}}{\beta s \varepsilon}.
\end{align}

To upper bound \Cref{eqn:structural_2_sample_bound}
by a failure probability of $\delta / 2$,
the number of samples needs to be at least
\[
  s \ge \frac{4 \cdot \rank(\mat{A}_1)}{\beta \delta \varepsilon}.
\]
\textbf{Conclusion.}
Since $d \ge \rank(\mat{A}_1)$ and $\delta = 1/10$, it follows that
\begin{align*}
  \max\set*{\frac{1680 d \ln(4d/\delta)}{\beta}, \frac{4d}{\beta \delta \varepsilon}}
  \le
  \frac{1680 d \ln(40d) }{\beta \varepsilon}
  \le
  s
\end{align*}
samples are sufficient for both
structural conditions to hold at the same time with probability
at least $1 - (\delta/2 + \delta/2) = 1 - \delta$ by a union bound.
Finally, we may apply \Cref{lem:approx_least_squares} to achieve the
$(1+\varepsilon)$-approximation guarantee.
\end{proof}

\SketchAndSolveRidgeRegression*

\begin{proof}
This is an immediate consequence of our results for
approximate block regression in \Cref{lemma:approximate_block_regression}.
Consider the augmented matrices
\[
  \mat{A}' = \begin{bmatrix}
    \mat{A} \\
    \sqrt{\lambda} \mat{I}_d
  \end{bmatrix}
  ~~~\text{and}~~~
  \mat{b}' =
  \begin{bmatrix}
    \mat{b} \\
    \mat{0_{d}}
  \end{bmatrix}.
\]
For any $\mat{x} \in \R^d$, we have
\begin{align*}
  \norm*{\mat{A}' \mat{x} - \mat{b}'}_2^2
  =
  \norm*{
  \begin{bmatrix}
    \mat{A}\mat{x} - \mat{b} \\
    \sqrt{\lambda} \mat{x}
  \end{bmatrix}
  }_2^2
  =
  \norm*{
    \mat{A}\mat{x} - \mat{b}
  }_2^2
  +
  \lambda \norm*{\mat{x}}_2^2.
\end{align*}
Therefore, it suffices to approximately solve 
$\argmin_{\mat{x} \in \R^d} \norm{\mat{A}'\mat{x} - \mat{b}'}$,
so we can 
use~\Cref{lemma:approximate_block_regression} to complete the proof.
\end{proof}

\section{Missing analysis for \Cref{sec:kronecker_regression}}
\label{app:kronecker_regression}

Here we present an explicit formula for the $\lambda$-ridge leverage scores
of a Kronecker matrix in terms of the singular value decompositions of its
factor matrices.
Then we show how to achieve fast Kronecker product-matrix multiplications in
\Cref{app:fast_kronecker_matrix_multiplication}
and prove our main Kronecker regression theorem
in \Cref{app:main_algorithm}.

\KroneckerCrossLeverageScores*

\begin{proof}
For notational brevity, we prove the
claim for $\mat{K} = \mat{A} \ktimes \mat{B} \ktimes \mat{C}$.
The order-$N$ version follows by the same argument.

First, the mixed property property of Kronecker products implies that
\begin{align*}
    \mat{K}^\intercal \mat{K}
    &=
    \parens*{\mat{A}^\intercal \mat{A}} \ktimes \parens*{\mat{B}^\intercal \mat{B}} \ktimes \parens*{\mat{C}^\intercal \mat{C}}.
\end{align*}
Let $\mat{A} = \mat{U}_{\mat{A}} \mat{\Sigma}_{\mat{A}} \mat{V}_{\mat{A}}^\intercal$
be the SVD of $\mat{A}$
such that $\mat{U}_{\mat{A}} \in \R^{I_1 \times I_1}$ and $\mat{V}_{\mat{A}} \in \R^{R_1 \times R_1}$.
The orthogonality of $\mat{U}_{\mat{A}}$ implies that
\[
    \mat{A}^\intercal \mat{A}
    = \mat{V}_{\mat{A}} \mat{\Sigma}_{\mat{A}}^2 \mat{V}_{\mat{A}}^\intercal,
\]
where $\mat{\Sigma}_{\mat{A}}^2$ denotes $\mat{\Sigma}_{\mat{A}}^\intercal \mat{\Sigma}_{\mat{A}}$.
Similarly, let
$\mat{B} = \mat{U}_{\mat{B}} \mat{\Sigma}_{\mat{B}} \mat{V}_{\mat{B}}^\intercal$
and
$\mat{C} = \mat{U}_{\mat{C}} \mat{\Sigma}_{\mat{C}} \mat{V}_{\mat{C}}^\intercal$.
It follows from the mixed-product property that 
\begin{align*}
    \mat{K}^\intercal \mat{K} + \lambda \mat{I}
    &=
    \parens*{\mat{V}_{\mat{A}} \mat{\Sigma}^2_{\mat{A}} \mat{V}_{\mat{A}}^\intercal}
    \ktimes
    \parens*{\mat{V}_{\mat{B}} \mat{\Sigma}^2_{\mat{B}} \mat{V}_{\mat{B}}^\intercal}
    \ktimes
    \parens*{\mat{V}_{\mat{C}} \mat{\Sigma}^2_{\mat{C}} \mat{V}_{\mat{C}}^\intercal}
    + \lambda \mat{I} \\
    &=
    \parens*{\mat{V}_{\mat{A}} \ktimes \mat{V}_{\mat{B}} \ktimes \mat{V}_{\mat{C}}}
    \parens*{\mat{\Sigma}^2_{\mat{A}} \ktimes \mat{\Sigma}^2_{\mat{B}} \ktimes \mat{\Sigma}^2_{\mat{C}}}
    \parens*{\mat{V}_{\mat{A}}^\intercal \ktimes \mat{V}_{\mat{B}}^\intercal \ktimes \mat{V}_{\mat{C}}^\intercal}
    + \lambda \mat{I} \\
    &=
    \parens*{\mat{V}_{\mat{A}} \ktimes \mat{V}_{\mat{B}} \ktimes \mat{V}_{\mat{C}}}
    \parens*{\mat{\Sigma}^2_{\mat{A}} \ktimes \mat{\Sigma}^2_{\mat{B}} \ktimes \mat{\Sigma}^2_{\mat{C}} + \lambda \mat{I}}
    \parens*{\mat{V}_{\mat{A}}^\intercal \ktimes \mat{V}_{\mat{B}}^\intercal \ktimes \mat{V}_{\mat{C}}^\intercal}.
\end{align*}
Since $\parens{\mat{X} \mat{Y}}^+ = \mat{Y}^+ \mat{X}^+$ if $\mat{X}$ or $\mat{Y}$ is orthogonal,
we have
\begin{align*}
    \parens*{\mat{K}^\intercal \mat{K} + \lambda \mat{I}}^{+}
    &=
    \parens*{\parens*{\mat{V}_{\mat{A}} \ktimes \mat{V}_{\mat{B}} \ktimes \mat{V}_{\mat{C}}}
    \parens*{\mat{\Sigma}^2_{\mat{A}} \ktimes \mat{\Sigma}^2_{\mat{B}} \ktimes \mat{\Sigma}^2_{\mat{C}} + \lambda \mat{I}}
    \parens*{\mat{V}_{\mat{A}}^\intercal \ktimes \mat{V}_{\mat{B}}^\intercal \ktimes \mat{V}_{\mat{C}}^\intercal}}^{+} \\
    &=
    \parens*{\mat{V}_{\mat{A}}^\intercal \ktimes \mat{V}_{\mat{B}}^\intercal \ktimes \mat{V}_{\mat{C}}^\intercal}^{+}
    \parens*{\mat{\Sigma}^2_{\mat{A}} \ktimes \mat{\Sigma}^2_{\mat{B}} \ktimes \mat{\Sigma}^2_{\mat{C}} + \lambda \mat{I}}^{+}
    \parens*{\mat{V}_{\mat{A}} \ktimes \mat{V}_{\mat{B}} \ktimes \mat{V}_{\mat{C}}}^{+} \\
    &=
    \parens*{\mat{V}_{\mat{A}} \ktimes \mat{V}_{\mat{B}} \ktimes \mat{V}_{\mat{C}}}
    \parens*{\mat{\Sigma}^2_{\mat{A}} \ktimes \mat{\Sigma}^2_{\mat{B}} \ktimes \mat{\Sigma}^2_{\mat{C}} + \lambda \mat{I}}^{+}
    \parens*{\mat{V}_{\mat{A}}^\intercal \ktimes \mat{V}_{\mat{B}}^\intercal \ktimes \mat{V}_{\mat{C}}^\intercal}.
\end{align*}
Next, observe that
\begin{align*}
    \mat{K}
    &=
    \parens*{\mat{U}_{\mat{A}} \mat{\Sigma}_{\mat{A}} \mat{V}_{\mat{A}}^\intercal}
    \ktimes
    \parens*{\mat{U}_{\mat{B}} \mat{\Sigma}_{\mat{B}} \mat{V}_{\mat{B}}^\intercal}
    \ktimes
    \parens*{\mat{U}_{\mat{C}} \mat{\Sigma}_{\mat{C}} \mat{V}_{\mat{C}}^\intercal} \\
    &=
    \parens*{
    \mat{U}_{\mat{A}} \ktimes \mat{U}_{\mat{B}} \ktimes \mat{U}_{\mat{C}}
    }
    \parens*{
    \mat{\Sigma}_{\mat{A}} \ktimes \mat{\Sigma}_{\mat{B}} \ktimes \mat{\Sigma}_{\mat{C}}
    }
    \parens*{
    \mat{V}_{\mat{A}}^\intercal \ktimes \mat{V}_{\mat{B}}^\intercal \ktimes \mat{V}_{\mat{C}}^\intercal
    }.
\end{align*}
Putting everything together, the $\lambda$-ridge cross leverage scores can be expressed as
\begin{align}
\label{eqn:leverage_score_matrix_eigendecomposition}
    \mat{K}\parens*{\mat{K}^\intercal \mat{K} + \lambda \mat{I}}^{+} \mat{K}^\intercal
    &=
    \parens*{
        \mat{U}_{\mat{A}} \ktimes \mat{U}_{\mat{B}} \ktimes \mat{U}_{\mat{C}}
    }
    \mat{\Lambda}
    \parens*{
        \mat{U}_{\mat{A}} \ktimes \mat{U}_{\mat{B}} \ktimes \mat{U}_{\mat{C}}
    }^\intercal,
\end{align}
where
\[
    \mat{\Lambda} =
    \parens*{\mat{\Sigma}_{\mat{A}} \ktimes \mat{\Sigma}_{\mat{B}} \ktimes \mat{\Sigma}_{\mat{C}}}
    \parens*{\mat{\Sigma}^2_{\mat{A}} \ktimes \mat{\Sigma}^2_{\mat{B}} \ktimes \mat{\Sigma}^2_{\mat{C}} + \lambda \mat{I}}^{+}
    \parens*{\mat{\Sigma}_{\mat{A}} \ktimes \mat{\Sigma}_{\mat{B}} \ktimes \mat{\Sigma}_{\mat{C}}}.
\]
Equation~\Cref{eqn:leverage_score_matrix_eigendecomposition} is the
eigendecomposition of
$\mat{K}\parens*{\mat{K}^\intercal \mat{K} + \lambda \mat{I}}^{+} \mat{K}^\intercal$.
In particular, $\mat{\Lambda} \in \R^{I_1 I_2 I_3 \times I_1 I_2 I_3}$
is a diagonal matrix of eigenvalues, where
the $(i_1,i_2,i_3)$-th eigenvalue is
\begin{align}
\label{eqn:leverage_score_matrix_eigenvalues}
  \lambda_{(i_1,i_2,i_3)} =
    \frac{\sigma_{i_1}^2\parens*{\mat{A}} \sigma_{i_2}^2\parens*{\mat{B}} \sigma_{i_3}^2\parens*{\mat{C}}}{\sigma_{i_1}^2\parens*{\mat{A}} \sigma_{i_2}^2\parens*{\mat{B}} \sigma_{i_3}^2\parens*{\mat{C}} + \lambda}.
\end{align}

The value of
$\ell^\lambda_{(i_1,i_2,i_3),(j_1,j_2,j_3)}\parens{\mat{K}}$
follows from the definition of cross $\lambda$-ridge leverage scores
in Equation~\Cref{eqn:ridge_leverage_score_svd_def}.

Finally, the statistical leverage score property holds because
setting $\lambda = 0$ gives an expression that is the product of the
leverage scores of the factor matrices.
\end{proof}

\subsection{Fast Kronecker-Matrix Multiplication}
\label{app:fast_kronecker_matrix_multiplication}

\KronMatMulLemma*

\begin{proof}
We prove the claim by induction on $N$.
Our approach will be to show that we can extract the rightmost factor matrix
out of the Kronecker product and solve a smaller instance recursively.

If $N=1$, this is a standard instance of matrix-matrix multiplication
that takes $O(I_1 J_1 K)$ time. 
Now let
\[
  \mat{X} = \mat{A}^{(1)} \ktimes \cdots \ktimes \mat{A}^{(N)}
  \in \R^{P \times Q}
\]
and
\[
  \mat{Y} = \mat{A}^{(N+1)} \in \R^{R \times S}
\]
Let $\mat{b} \in \R^{QS}$ be an arbitrary column of
$\mat{B}$.
We compute each of these $K$ matrix-vector products separately.
Now we show how to efficiently compute
$\mat{c} = (\mat{X} \ktimes \mat{Y})\mat{b} \in \R^{PR}$.
The entry in $\mat{c}$ at the canonical index $(p,r)$
is
\[
  c_{pr}
  =
  \parens*{
    \mat{x}_{p, :} \ktimes
    \mat{y}_{r, :}
  } \mat{b}.
\]
Writing this out, we have
\begin{align*}
  c_{pr}
  &=
  \sum_{q=1}^Q
  \sum_{s=1}^S
  x_{p,q}
  y_{r,s}
  b_{qs} \\
  &=
  \sum_{q=1}^Q
  x_{p,q}
  \sum_{s=1}^S
  y_{r,s}
  b_{qs}.
\end{align*}
Therefore, for each $(q,r)$ we can precompute
\[
  z_{q,r} = \sum_{s=1}^S y_{r,s} b_{qs}.
\]
Computing all of $\mat{Z} \in \R^{Q \times R}$
takes $O(QRS)$ time.
Now that we have $\mat{Z}$, we can compute the output~$\mat{c}$ as:
\begin{align*}
  c_{pr}
  &=
  \sum_{q=1}^Q
  x_{p,q}
  \sum_{s=1}^S
  y_{r,s}
  b_{qs} \\
  &=
  \sum_{q=1}^Q
  x_{p,q}
  z_{q,r}.
\end{align*}
Therefore, we can write a natural matricized version of $\mat{c}$ as
\[
  \mat{C} = \mat{X} \mat{Z} \in \R^{P \times R}.
\]
This matrix $\mat{C}$ can be computed recursively
since $\mat{X}$ is a Kronecker product.
Translating back to the original dimensions as stated in the lemma, we have
$P = I_1 \cdots I_N$, $Q = J_1 \cdots J_N$, $R = I_{N+1}$, and $S = J_{N+1}$.
Computing $\mat{Z}$ takes time
\[
  O(QRS) = O(J_1 \cdots J_N I_{N+1} J_{N+1}) = O(J_1 \cdots J_{N+1} I_{N+1}).
\]
By induction, the recursive solve for $\mat{XZ}$ takes time 
\[
  O\parens*{I_{N+1} \sum_{n=1}^N J_1 \cdots J_{n} I_n \cdots I_N}.
\]
Adding the two running times together and accounting for all $K$ columns
of $\mat{B}$ gives us a total running time of
\[
  O\parens*{K \parens*{ J_1 \cdots J_{N+1} I_{N+1}
    + 
    I_{N+1} \sum_{n=1}^N J_1 \cdots J_{n} I_n \cdots I_N
  }}
  =
O\parens*{K 
    \sum_{n=1}^{N+1} J_1 \cdots J_{n} I_n \cdots I_{N+1}
  },
\]
which completes the proof.
\end{proof}

\begin{lemma}[\citet{zhang2013kronecker}]
\label{lemma:mixed_kron_vec_mult}
For a matrix $\mat{C}=[\mat{c}_1,\ldots,\mat{c}_q]\in \mathbb{R}^{p\times q}$, let 
\[
  \vectorize(\mat{C}) = \begin{bmatrix}
    \mat{c}_1 \\
\vdots \\
    \mat{c}_q
\end{bmatrix} \in \mathbb{R}^{pq}.
\]
  Let $\mat{A}\in\mathbb{R}^{m \times q}$ and $\mat{B}\in\mathbb{R}^{n\times p}$. Then, $\vectorize(\mat{B} \mat{C} \mat{A}^\intercal) = (\mat{A} \ktimes \mat{B})\vectorize(\mat{C})$.
\end{lemma}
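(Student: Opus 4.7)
The plan is to establish the identity by reducing to the rank-one case and extending by linearity, which is the cleanest route given the conventions fixed in the statement. First I would record the auxiliary identity $\vectorize(\mat{u}\mat{v}^\intercal) = \mat{v} \ktimes \mat{u}$ for column vectors $\mat{u} \in \mathbb{R}^p$ and $\mat{v} \in \mathbb{R}^q$. This falls out directly from the column-stacking convention given in the lemma: the $l$-th column of $\mat{u}\mat{v}^\intercal$ is $v_l \mat{u}$, so stacking columns top to bottom produces the vector whose $l$-th block of length $p$ equals $v_l \mat{u}$, which by definition is $\mat{v} \ktimes \mat{u}$.

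Next I would handle the rank-one case $\mat{C} = \mat{u}\mat{v}^\intercal$. Here $\mat{B}\mat{C}\mat{A}^\intercal = (\mat{B}\mat{u})(\mat{A}\mat{v})^\intercal$ is itself rank one, so applying the auxiliary identity gives
\[
  \vectorize(\mat{B}\mat{C}\mat{A}^\intercal) = (\mat{A}\mat{v}) \ktimes (\mat{B}\mat{u}).
\]
The mixed-product property $(\mat{X}\mat{Y}) \ktimes (\mat{Z}\mat{W}) = (\mat{X} \ktimes \mat{Z})(\mat{Y} \ktimes \mat{W})$ then rewrites the right-hand side as $(\mat{A} \ktimes \mat{B})(\mat{v} \ktimes \mat{u}) = (\mat{A} \ktimes \mat{B})\vectorize(\mat{u}\mat{v}^\intercal)$, yielding the claim for rank-one $\mat{C}$.

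Finally I would extend to arbitrary $\mat{C} \in \mathbb{R}^{p \times q}$ by writing
\[
  \mat{C} = \sum_{j=1}^{p}\sum_{l=1}^{q} C_{jl}\, \mat{e}_j \mat{e}_l^\intercal,
\]
where $\mat{e}_j$ and $\mat{e}_l$ are standard basis vectors of the appropriate dimensions. Both maps $\mat{X} \mapsto \vectorize(\mat{B}\mat{X}\mat{A}^\intercal)$ and $\mat{X} \mapsto (\mat{A} \ktimes \mat{B})\vectorize(\mat{X})$ are linear in $\mat{X}$, so they agree on the whole space whenever they agree on the rank-one basis $\{\mat{e}_j \mat{e}_l^\intercal\}$, which is exactly what the previous step established.

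There is no real obstacle here; the only thing to take care of is keeping the column-stacking convention consistent on both sides, which is why I would isolate $\vectorize(\mat{u}\mat{v}^\intercal) = \mat{v} \ktimes \mat{u}$ up front. An alternative route would be a direct index chase: computing $[\mat{B}\mat{C}\mat{A}^\intercal]_{ba} = \sum_{d,c} B_{bd} C_{dc} A_{ac}$ and matching it with the entry of $(\mat{A}\ktimes\mat{B})\vectorize(\mat{C})$ at linear position $(a-1)n + b$, but this is more tedious and less illuminating than the rank-one reduction above.
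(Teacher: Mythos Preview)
Your proof is correct and cleanly written; the rank-one reduction via $\vectorize(\mat{u}\mat{v}^\intercal)=\mat{v}\ktimes\mat{u}$ together with the mixed-product property and linearity is a standard and fully rigorous route to this identity. Note, however, that the paper does not actually prove this lemma: it is quoted from \citet{zhang2013kronecker} and used as a black box, so there is no ``paper's own proof'' to compare against.
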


\begin{theorem}
  Let $\mat{A}^{(1)} \in\mathbb{R}^{R_1 \times R_1},\dots,\mat{A}^{(N)} \in\mathbb{R}^{R_N \times R_N}$
  and $\mat{c}\in\mathbb{R}^{R_1\cdots R_N}$. Let $R=\prod_{n=1}^N R_n$.
  Then $(\mat{A}^{(1)} \ktimes \cdots \ktimes \mat{A}^{(N)}) \mat{c}$ can be computed in time $O(R \sum_{n=1}^N R_n)$.
\end{theorem}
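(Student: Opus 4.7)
The plan is to observe that this statement is essentially the square-factor specialization of \Cref{lemma:kron_mat_mul}, but to present a cleaner direct argument using the mixed Kronecker--vec identity of \Cref{lemma:mixed_kron_vec_mult}, which motivates why this lemma was stated just before.

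First I would set up an induction on $N$. The base case $N=1$ is a single matrix-vector product, costing $O(R_1^2) = O(R \cdot R_1)$, which matches the claimed bound. For the inductive step, let $\mat{K}_{2:N} = \mat{A}^{(2)} \ktimes \cdots \ktimes \mat{A}^{(N)}$ and reshape $\mat{c} \in \R^{R_1 \cdots R_N}$ as a matrix $\mat{C} \in \R^{R_2 \cdots R_N \times R_1}$ with $\vectorize(\mat{C}) = \mat{c}$. Applying \Cref{lemma:mixed_kron_vec_mult} with $\mat{A} = \mat{A}^{(1)}$ and $\mat{B} = \mat{K}_{2:N}$ yields
\[
  (\mat{A}^{(1)} \ktimes \mat{K}_{2:N})\,\mat{c}
  =
  \vectorize\bigl(\mat{K}_{2:N}\,\mat{C}\,(\mat{A}^{(1)})^{\intercal}\bigr).
\]

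Then I would compute the right-hand side in two stages. Stage one: form $\mat{D} = \mat{C}(\mat{A}^{(1)})^{\intercal} \in \R^{R_2 \cdots R_N \times R_1}$ by standard matrix multiplication in time $O(R_2 \cdots R_N \cdot R_1 \cdot R_1) = O(R \cdot R_1)$. Stage two: apply $\mat{K}_{2:N}$ to each of the $R_1$ columns of $\mat{D}$. Each such Kronecker-vector product is exactly the same problem of order $N-1$ with total column dimension $R/R_1$, so by the inductive hypothesis each costs $O\bigl((R/R_1) \sum_{n=2}^N R_n\bigr)$. Summing over the $R_1$ columns gives $O\bigl(R \sum_{n=2}^N R_n\bigr)$.

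Adding the two stages yields $O(R \cdot R_1) + O\bigl(R \sum_{n=2}^N R_n\bigr) = O\bigl(R \sum_{n=1}^N R_n\bigr)$, completing the induction. There is no real obstacle here; the only minor thing to be careful about is the reshape convention so that \Cref{lemma:mixed_kron_vec_mult} applies with the correct orientation of $\mat{C}$, and checking that the two stages telescope to exactly the claimed bound rather than producing a spurious extra factor of $R_n$. As an alternative sanity check, one may simply instantiate \Cref{lemma:kron_mat_mul} with $I_n = J_n = R_n$ and $K = 1$, giving the running time $O\bigl(\sum_{n=1}^N R_1 \cdots R_n \cdot R_n \cdots R_N\bigr) = O\bigl(\sum_{n=1}^N R \cdot R_n\bigr)$, which matches.
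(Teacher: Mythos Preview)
Your proof is correct and takes essentially the same approach as the paper: both use \Cref{lemma:mixed_kron_vec_mult} to peel off one factor $\mat{A}^{(n)}$ at a time at cost $O(R\cdot R_n)$, then recurse on the remaining Kronecker product. The only cosmetic differences are that the paper peels off $\mat{A}^{(N)}$ first (and uses permutation equivalence of Kronecker products to shuffle the resulting identity factor out of the way), whereas you peel off $\mat{A}^{(1)}$ first and frame the argument as a clean induction on the $R_1$ columns of $\mat{D}$; your version actually avoids the permutation step and matches the sanity check against \Cref{lemma:kron_mat_mul}.
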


\begin{proof}
  Let $\mat{C}$ be an $R_N \times (R_{1}\cdots R_{N-1})$ matrix such that $\mat{c} = \vectorize(\mat{C})$ (see Lemma \ref{lemma:mixed_kron_vec_mult}).
  For each $n\in[N]$, let $\mat{I}_n$ be the identity matrix of size $R_n\times R_n$. Then by Lemma \ref{lemma:mixed_kron_vec_mult},
\begin{align}
\nonumber
  \parens*{\mat{A}^{(1)} \ktimes \dots \ktimes \mat{A}^{(N)}} \mat{c} 
    &= 
  \parens*{\mat{A}^{(1)} \ktimes \dots \ktimes \mat{A}^{(N)}} \vectorize(\mat{C}) \\
    &= \nonumber
    \vectorize\parens*{\mat{I}_N \mat{A}^{(N)} \mat{C} \parens*{\mat{A}^{(1)} \ktimes \cdots \ktimes \mat{A}^{(N-1)}}^\intercal}
\\ & = \label{eq:square_kron_mult}
\parens*{\mat{A}^{(1)} \ktimes \cdots \ktimes \mat{A}^{(N-1)} \ktimes \mat{I}_N} \vectorize\parens*{\mat{A}^{(N)} \mat{C}}.
\end{align}
Since $\mat{A}^{(N)}$ is $R_N\times R_N$ and
$\mat{C}$ is $R_N \times (R_{1}\cdots R_{N-1})$, $\mat{A}^{(N)} \mat{C}$ can be computed in time $O(R_N R)$.

Now note that although Kronecker product is not commutative, $\mat{A}\ktimes \mat{B}$ and $\mat{B}\ktimes \mat{A}$ are permutation equivalent, i.e., there are permutation matrices that transform one to the other. 
Therefore, instead of computing $(\mat{A}^{(1)} \ktimes \cdots \ktimes \mat{A}^{(N-1)} \ktimes \mat{I}_N) \vectorize(\mat{A}^{(N)}\mat{C})$,
we can compute $(\mat{I}_N \ktimes \mat{A}^{(1)} \ktimes \cdots \ktimes \mat{A}^{(N-1)}) \mat{c}_1$,
where~$\mat{c}_1$ is a permutation of $\vectorize(\mat{A}^{(N)}\mat{C})$.
  We proceed with this multiplication and use a technique similar to \eqref{eq:square_kron_mult},
  which results in a cost of $O(R_{N-1} R)$.
  Continue until all the matrices in the Kronecker part are the identity.
  Then we can return a permutation of the final vector since the identity multiplied by a vector is the vector itself.
\end{proof}

\KronMatMulSqrtDecomp*

\begin{proof}
First observe that while the Kronecker product is not commutative, $\mat{A}\ktimes \mat{B}$ and $\mat{B}\ktimes \mat{A}$ are permutation equivalent, i.e., there are permutation matrices that transform one to the other.
Therefore, without loss of generality we assume the minimizer
$\argmin_{T \subseteq [N]} \MM(\prod_{n\in T} R_n, R/\epsilon, \prod_{n\notin T} R_n)$
is the set $[k]$ where $1\leq k \leq N$.
For any diagonal matrix $\mat{S}$, let $S$ be the set corresponding to the indices of the nonzero entries of $\mat{S}$
and let $\mat{I}_S$ be a diagonal matrix where an entry is equal to one if its index is in~$S$ and it is zero otherwise. 

Note that because $\mat{S}$ is an $(I_1\cdots I_N)\times (I_1\cdots I_N)$ matrix, each element of $S$ (i.e., nonzero of~$\mat{S}$)
corresponds to a tuple $(i_1,\ldots,i_N)\in [I_1] \times \cdots \times [I_N]$. Let 
\begin{align*}
&
S_1 = \{(i_1,\ldots, i_{k}): \exists i_{k+1}\in[I_{k+1}],\ldots, i_N\in[I_N] \text{ such that } (i_1,\ldots, i_N)\in S\},
\\ &
S_2 = \{(i_{k+1},\ldots, i_N): \exists i_1\in[I_1],\ldots, i_{k}\in[I_k] \text{ such that } (i_1,\ldots, i_N)\in S\}.
\end{align*}
Let $\mat{B}_S$ be an $(I_1\cdots I_k) \times (I_{k+1} \cdots I_N)$ matrix such that $\mat{S}\mat{b} = \vectorize(\mat{B}_S)$ (see Lemma \ref{lemma:mixed_kron_vec_mult}).
Then by Lemma \ref{lemma:mixed_kron_vec_mult}, we have
\begin{align*}
    \parens{\mat{A}^{(1)} \ktimes \cdots \ktimes \mat{A}^{(N)}}^{\intercal} \mat{S} \mat{b} 
    &= 
    \parens{\mat{A}^{(1)} \ktimes \cdots \ktimes \mat{A}^{(N)}}^{\intercal} \mat{I}_S \mat{S} \mat{b} \\
    &=
    \parens{\mat{A}^{(1)} \ktimes \cdots \ktimes \mat{A}^{(N)}}^{\intercal} \parens*{\mat{I}_{S_1} \ktimes \mat{I}_{S_2}} \mat{S} \mat{b} \\
    &= 
\parens{\parens{\mat{A}^{(1)} \ktimes \cdots \ktimes \mat{A}^{(k)}}^{\intercal} \mat{I}_{S_1}} \ktimes
       \parens{\parens{\mat{A}^{(k+1)} \ktimes \cdots \ktimes \mat{A}^{(N)}}^{\intercal} \mat{I}_{S_2}} \vectorize(\mat{B}_S) \\
    &=
    \vectorize \parens{\parens{\parens{\mat{A}^{(k+1)} \ktimes \cdots \ktimes \mat{A}^{(N)}}^{\intercal} \mat{I}_{S_2}}
    \mat{B}_S \parens{\mat{I}_{S_1}\parens{\mat{A}^{(1)} \ktimes \cdots \ktimes \mat{A}^{(k)}}}}.
\end{align*}
Note that the number of nonzeros in $\mat{B}_S$ is equal to the number of nonzeros in $\mat{S}$, which is $O(R/\epsilon)$.
Therefore, $((\mat{A}^{(k+1)} \ktimes \cdots \ktimes \mat{A}^{(N)})^{\intercal} \mat{I}_{S_2}) \mat{B}_S$ can be computed in
$O(R / \epsilon \prod_{n=k+1}^N R_n)$ time
because $(\mat{A}^{(k+1)} \ktimes \cdots \ktimes \mat{A}^{(N)})^{\intercal} \mat{I}_{S_2}$ has $\prod_{n=k+1}^N R_n$ rows and $\mat{B}_S$ has $O(R/\epsilon)$ nonzero entries. Moreover,
\begin{align*}
  O\parens*{\frac{R}{\epsilon} \prod_{n=k+1}^N R_n}
  &=
  O\parens*{\MM\parens*{1, \frac{R}{\epsilon}, \prod_{n=k+1}^N R_n}} \\
  &=
  O\parens*{\MM\parens*{\prod_{n=1}^k R_n, \frac{R}{\epsilon}, \prod_{n=k+1}^N R_n}}.
\end{align*}

Now, note that $|S_1|\leq |S| = O(R/\epsilon)$.
Therefore, multiplying $((\mat{A}^{(k+1)} \ktimes \cdots \ktimes \mat{A}^{(N)})^{\intercal} \mat{I}_{S_2}) \mat{B}_S$
with $(\mat{I}_{S_1}(\mat{A}^{(1)} \ktimes \cdots \ktimes \mat{A}^{(k)}))$ can be done in time
$O(\MM(\prod_{n=1}^k R_n, R/\epsilon, \prod_{n=k+1}^N R_n))$
because $((\mat{A}^{(k+1)} \ktimes \cdots \ktimes \mat{A}^{(N)})^{\intercal} \mat{I}_{S_2}) \mat{B}_S$ has $\prod_{n=k+1}^N R_n$ rows and $\mat{A}^{(1)} \ktimes \cdots \ktimes \mat{A}^{(k)}$ has $\prod_{n=1}^k R_n$ columns.

Now let $\mat{C}$ be an $(R_{k+1}\cdots R_N) \times (R_{1} \cdots R_k)$ matrix such that $\mat{c} = \vectorize(\mat{C})$
(see Lemma \ref{lemma:mixed_kron_vec_mult}). Then we have
\begin{align*}
\mat{S} (\mat{A}^{(1)} \ktimes \cdots \ktimes \mat{A}^{(N)}) \mat{c}
& = 
\mat{S} \mat{I}_S (\mat{A}^{(1)} \ktimes \cdots \ktimes \mat{A}^{(N)}) \mat{c}
\\ & = 
\mat{S} (\mat{I}_{S_1} \ktimes \mat{I}_{S_2}) (\mat{A}^{(1)} \ktimes \cdots \ktimes \mat{A}^{(N)}) \mat{c}
\\ & = 
\mat{S} (\mat{I}_{S_1} (\mat{A}^{(1)} \ktimes \cdots \ktimes \mat{A}^{(k)}))
   \ktimes (\mat{I}_{S_2} (\mat{A}^{(k+1)} \ktimes \cdots \ktimes \mat{A}^{(N)})) \vectorize(\mat{C})
\\ & =
\mat{S} \vectorize( (\mat{I}_{S_2} (\mat{A}^{(k+1)} \ktimes \cdots \ktimes \mat{A}^{(N)})) \mat{C} ((\mat{A}^{(1)} \ktimes \cdots \ktimes \mat{A}^{(k)})^\intercal \mat{I}_{S_1}))
\end{align*}
We have $|S_2|\leq |S| =O(R/\epsilon)$.
Therefore, $\mat{I}_{S_2} (\mat{A}_{k+1} \ktimes \cdots \ktimes \mat{A}_N)$ has $O(R/\epsilon)$ nonzero entries.
Moreover, $\mat{C}$ is an $(R_{k+1}\cdots R_N) \times (R_{1} \cdots R_k)$ matrix.
Hence, $(\mat{I}_{S_2} (\mat{A}^{(k+1)} \ktimes \cdots \ktimes \mat{A}^{(N)})) \mat{C}$
can be computed in time $O(\MM(\frac{R}{\epsilon}, \prod_{n=k+1}^{N} R_n, \prod_{n=1}^{k} R_n))
= O(\MM(\prod_{n=1}^{k} R_n, \frac{R}{\epsilon}, \prod_{n=k+1}^{N} R_n))$.

Observe that we do not need to compute all entries of
\[
(\mat{I}_{S_2} (\mat{A}^{(k+1)} \ktimes \cdots \ktimes \mat{A}^{(N)})) \mat{C}
((\mat{A}^{(1)} \ktimes \cdots \ktimes \mat{A}^{(k)})^\intercal \mat{I}_{S_1}).
\]
Instead, we only need to compute entries corresponding to nonzero entries of $\mat{S}$.
Computing each such entry takes $O(\prod_{n=1}^{k} R_n)$ time because
$(\mat{I}_{S_2} (\mat{A}^{(k+1)} \ktimes \cdots \ktimes \mat{A}^{(N)})) \mat{C}$
and $((\mat{A}^{(1)} \ktimes \cdots \ktimes \mat{A}^{(k)})^\intercal \mat{I}_{S_1})$ have $\prod_{n=1}^{k} R_n$ columns and rows, respectively.
Moreover, the number of nonzeros is $\mat{S}$ is $\tilde{O}(R/\epsilon)$.
Therefore, computing all entries of
$(\mat{I}_{S_2} (\mat{A}^{(k+1)} \ktimes \cdots \ktimes \mat{A}^{(N)})) \mat{C} ((\mat{A}^{(1)} \ktimes \cdots \ktimes \mat{A}^{(k)})^\intercal \mat{I}_{S_1})$ that correspond to nonzero entries of $\mat{S}$ takes time
\[
    O\parens*{\frac{R}{\epsilon} \prod_{n=1}^k R_n}
    =
    O\parens*{\MM\parens*{\prod_{n=1}^k, \frac{R}{\epsilon}, 1}}
    =
    O\parens*{\MM\parens*{\prod_{n=1}^k R_n, \frac{R}{\epsilon}, \prod_{n=k+1}^N R_n}}. \qedhere
\]
\end{proof}

\subsection{Main Algorithm}
\label{app:main_algorithm}

\begin{lemma}[Johnson--Lindenstrauss random projection \cite{johnson1984extensions,arriaga2006algorithmic}]
\label{lemma:random-projection}
Let $\mat{x}\in\mathbb{R}^d$. Assume the entries in $\mat{G}\in \mathbb{R}^{r\times d}$ are sampled independently from $\mathcal{N}(0,1)$. Then,
\[
\Pr\left((1-\epsilon)\norm{\mat{x}}_2^2 \leq \norm*{\frac{1}{\sqrt{r}}\mat{G}\mat{x}}_2^2 \leq (1+\epsilon) \norm{\mat{x}}_2^2\right)
\geq
1- 2e^{-\left(\epsilon^2-\epsilon^3\right) r /4}.
\]
\end{lemma}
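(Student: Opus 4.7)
The plan is to reduce the claim to a standard two-sided concentration inequality for the chi-squared distribution. First, I would observe that the statement is scale-invariant in $\mat{x}$: the ratio $\|\mat{G}\mat{x}\|_2^2/\|\mat{x}\|_2^2$ is unchanged if $\mat{x}$ is rescaled, so without loss of generality I can assume $\|\mat{x}\|_2 = 1$.

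Under this normalization, each coordinate $(\mat{G}\mat{x})_i = \sum_{j=1}^d G_{ij} x_j$ is a linear combination of independent standard Gaussians, hence itself Gaussian with mean zero and variance $\sum_j x_j^2 = 1$. Because the rows of $\mat{G}$ are independent, the coordinates $(\mat{G}\mat{x})_1, \dots, (\mat{G}\mat{x})_r$ are i.i.d.\ $\mathcal{N}(0,1)$, and therefore $\|\mat{G}\mat{x}\|_2^2$ follows a $\chi^2_r$ distribution. The problem thus reduces to showing that a chi-squared random variable with $r$ degrees of freedom concentrates around its mean $r$ within a multiplicative factor of $\epsilon$ with the stated failure probability; this is precisely the form $\Pr((1-\epsilon)r \leq Z \leq (1+\epsilon)r) \geq 1-2 e^{-(\epsilon^2-\epsilon^3) r/4}$.

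The core work is then a Chernoff-style tail bound for $\chi^2_r$, using the moment generating function $\mathbb{E}[e^{tZ}] = (1-2t)^{-r/2}$, valid for $t < 1/2$. For the upper tail, applying Markov to $e^{tZ}$ with the optimal choice $t = \epsilon/(2(1+\epsilon))$ yields $\Pr(Z \geq (1+\epsilon)r) \leq \bigl(e^{-\epsilon}(1+\epsilon)\bigr)^{r/2}$; the Taylor bound $\log(1+\epsilon) \leq \epsilon - \epsilon^2/2 + \epsilon^3/3$ on $\epsilon\in(0,1)$ then converts this to the exponential form $e^{-(\epsilon^2-\epsilon^3)r/4}$. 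The lower tail is treated symmetrically with $t < 0$ and the analogous expansion of $\log(1-\epsilon)$. A union bound over the two tails produces the factor of $2$ in the final statement.

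The main (mild) obstacle is making sure the polynomial-in-$\epsilon$ constants are tight enough to land on exactly $(\epsilon^2-\epsilon^3)/4$ rather than a slightly weaker quantity; this is just a matter of picking a sufficiently accurate Taylor remainder for $\log(1\pm\epsilon)$ on the range $\epsilon\in(0,1)$ and checking the inequalities term by term. Everything else — rotational invariance of the Gaussian, the chi-squared identity, the Chernoff computation, and the union bound — is standard textbook material, so I expect no real technical difficulty beyond this elementary bookkeeping.
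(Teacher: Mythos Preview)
Your proposal is correct and is exactly the standard argument (the Dasgupta--Gupta / Arriaga--Vempala proof) that underlies the references the paper cites. Note, however, that the paper does not actually prove this lemma: it is stated as a known result with citations to \cite{johnson1984extensions,arriaga2006algorithmic} and used as a black box in the proof of \Cref{lemma:fast-ls-comp}, so there is no ``paper's own proof'' to compare against beyond those references.
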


\begin{lemma}
\label{lemma:fast-ls-comp}
Let $\mat{A}\in\mathbb{R}^{n\times d}$ and $0<\epsilon\leq 1/4$. Given $\mat{\tilde{A}}\in \mathbb{R}^{k\times d}$ and $\mat{N} = \mat{\tilde{A}}^\intercal \mat{\tilde{A}}\in\mathbb{R}^{d\times d}$ such that
\[
\mat{A}^\intercal \mat{A} \preccurlyeq \mat{\tilde{A}}^\intercal \mat{\tilde{A}} \preccurlyeq (1+\varepsilon/4) \mat{A}^\intercal \mat{A},
\]
with high probability, all leverage scores of $\mat{A}$ can be computed to a $(1+\varepsilon/2)$ approximation in $\tilde{O}(\textnormal{nnz}(\mat{A})+kd+d^\omega)$ time.
\end{lemma}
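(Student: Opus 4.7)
The plan is to combine the given spectral approximation with a Johnson--Lindenstrauss sketch, following the standard two-stage reduction for fast leverage score estimation. First, I would observe that $\mat{A}^\intercal\mat{A} \preccurlyeq \mat{N} \preccurlyeq (1+\varepsilon/4)\,\mat{A}^\intercal\mat{A}$ implies, on the row span of $\mat{A}$, that $(1+\varepsilon/4)^{-1}(\mat{A}^\intercal\mat{A})^+ \preccurlyeq \mat{N}^+ \preccurlyeq (\mat{A}^\intercal\mat{A})^+$. Consequently the pilot estimates
\[
  \hat\ell_i \DEF \mat{a}_{i:}\,\mat{N}^+\,\mat{a}_{i:}^\intercal
\]
satisfy $(1+\varepsilon/4)^{-1}\ell_i(\mat{A}) \le \hat\ell_i \le \ell_i(\mat{A})$ for every $i\in[n]$, so each $\hat\ell_i$ is already a $(1+\varepsilon/4)$-approximation of $\ell_i(\mat{A})$.

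Evaluating each $\hat\ell_i$ directly would cost $O(nd^2)$, which is too slow. Instead, I would factor $\mat{N} = \mat{R}^\intercal\mat{R}$ by Cholesky in $O(d^\omega)$ time so that $\hat\ell_i = \norm{\mat{a}_{i:}\mat{R}^{-1}}_2^2$, and then apply a Johnson--Lindenstrauss projection. Specifically, draw a Gaussian matrix $\mat{G}\in\R^{d\times r}$ with $r = \Theta(\varepsilon^{-2}\log n)$, form $\mat{M} = \mat{R}^{-1}\mat{G}$ by triangular solve in $\tilde O(d^2)$ time, and compute $\mat{A}\mat{M}$ in $\tilde O(\nnz(\mat{A}))$ time. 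Reporting $\tfrac{1}{r}\norm{(\mat{A}\mat{M})_{i,:}}_2^2$ as the estimate for $\hat\ell_i$, \Cref{lemma:random-projection} together with a union bound over the $n$ rows guarantees a $(1+\varepsilon/4)$-approximation to each $\hat\ell_i$ with probability $1-1/\mathrm{poly}(n)$.

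Composing the two multiplicative errors yields $(1+\varepsilon/4)^2 \le 1+\varepsilon/2$ for $\varepsilon \le 1/4$, matching the claimed accuracy. Summing the cost of Cholesky on $\mat{N}$ ($O(d^\omega)$), the triangular solves ($\tilde O(d^2)$), the sparse product $\mat{A}\mat{M}$ ($\tilde O(\nnz(\mat{A}))$), and the row-norm pass ($\tilde O(n)$) gives $\tilde O(\nnz(\mat{A})+d^\omega)$; the $O(kd)$ summand subsumes touching the entries of $\mat{\tilde{A}}$ in any preliminary step (e.g., reading $\mat{\tilde A}$ or extracting a QR-style factor from $\mat{\tilde A}$ as an alternative to Cholesky of $\mat{N}$), and is dominated by the stated bound. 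The main obstacle is keeping careful track of how the $(1+\varepsilon/4)$ spectral error and the $(1+\varepsilon/4)$ JL error compose: insisting on $\varepsilon/4$ slack in each stage — rather than $\varepsilon/2$ — is exactly what makes the product fall within $(1+\varepsilon/2)$ rather than $(1+\varepsilon)$, which explains the tight $(1+\varepsilon/4)$ hypothesis on $\mat{N}$ and the restriction $\varepsilon \le 1/4$.
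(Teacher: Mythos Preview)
Your approach is essentially the paper's: relate $\mat{a}_{i:}\mat{N}^{+}\mat{a}_{i:}^\intercal$ to $\ell_i(\mat{A})$ via the spectral hypothesis, express it as a squared norm, and compress with a Johnson--Lindenstrauss projection so that all $n$ estimates follow from a single sparse product of $\mat{A}$ with a $d\times\tilde O(1)$ matrix. The only differences are cosmetic --- you Cholesky-factor $\mat{N}$ while the paper uses $\mat{\tilde A}$ itself as the square root (forming $\mat{G}\mat{\tilde A}$ first, which is exactly where the $kd$ term in the stated running time arises) --- and one arithmetic slip: $(1+\varepsilon/4)^2=1+\varepsilon/2+\varepsilon^2/16>1+\varepsilon/2$, so you should take a slightly smaller JL error (the paper uses $\varepsilon/20$) to make the composed factor land inside $1+\varepsilon/2$.
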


\begin{proof}
Let $\mat{M} = (1+\varepsilon/4) (\mat{\tilde{A}}^\intercal \mat{\tilde{A}})^{-1}$.
It follows that $(\mat{A}^\intercal \mat{A})^{-1} \preccurlyeq \mat{M} \preccurlyeq (1+\varepsilon/4) (\mat{A}^\intercal \mat{A})^{-1}$. Hence, for any $\mat{x}\in\mathbb{R}^d$, we have
\[
\mat{x}^\intercal (\mat{A}^\intercal \mat{A})^{-1} \mat{x} \preccurlyeq \mat{x}^\intercal \mat{M} \mat{x} \preccurlyeq (1+\varepsilon/4) \mat{x}^\intercal (\mat{A}^\intercal \mat{A})^{-1} \mat{x}.
\]
Now note that $\mat{M} = \mat{M} \mat{M}^{-1} \mat{M} = \frac{1}{1+\varepsilon/4} \mat{M} \mat{\tilde{A}}^\intercal \mat{\tilde{A}} \mat{M}$. Hence,
\[
\mat{x}^\intercal \mat{M} \mat{x} = \frac{1}{1+\varepsilon/4} \mat{x}^\intercal \mat{M} \mat{\tilde{A}}^\intercal \mat{\tilde{A}} \mat{M} \mat{x} = \frac{1}{1+\varepsilon/4} \norm{\mat{\tilde{A}} \mat{M} \mat{x}}_2^2.
\]
Using Lemma \ref{lemma:random-projection} with $\epsilon/20$ and $r=O(\log n)$, we can compute a random matrix $\mat{G}$ such that, with high probability, for all $\mat{a}_i$, we have
\[
\norm{\mat{\tilde{A}} \mat{M} \mat{a}_i}_2^2 \leq \frac{1}{1-\varepsilon/20} \norm{\mat{G}\mat{\tilde{A}} \mat{M} \mat{a}_i}_2^2 \leq \frac{1+\varepsilon/20}{1-\varepsilon/20} \norm{\mat{\tilde{A}} \mat{M} \mat{a}_i}_2^2 \leq (1+\varepsilon/6) \norm{\mat{\tilde{A}} \mat{M} \mat{a}_i}_2^2.
\]
Combining the above, we have
\[
\mat{a}_i^\intercal (\mat{A}^\intercal \mat{A})^{-1} \mat{a}_i \leq \mat{a}_i^\intercal \mat{M} \mat{a}_i = \frac{1}{1+\epsilon/4} \norm{\mat{\tilde{A}} \mat{M} \mat{a}_i}_2^2 \leq
\frac{1}{(1+\varepsilon/4)(1-\varepsilon/20)} \norm{\mat{G}\mat{\tilde{A}} \mat{M} \mat{a}_i}_2^2,
\]
and
\[
\frac{1}{(1+\varepsilon/4)(1-\varepsilon/20)} \norm{\mat{G}\mat{\tilde{A}} \mat{M} \mat{a}_i}_2^2
\leq
\mat{a}_i^\intercal \mat{M} \mat{a}_i
\leq
(1+\varepsilon/4) \mat{a}_i^\intercal (\mat{A}^\intercal \mat{A})^{-1} \mat{a}_i.
\]
Therefore, $\frac{1}{(1+\varepsilon/4)(1-\varepsilon/20)} \norm{\mat{G}\mat{\tilde{A}} \mat{M} \mat{a}_i}_2^2$ is a $1+\varepsilon/4\leq 1+\epsilon/2$ approximation of the leverage score of $\mat{a}_i$.

Lastly, we discuss the running time. Note that given $\mat{N}$, we can compute $\mat{M}$ in $\tilde{O}(d^\omega)$ time.
Moreover, since $\mat{G}$ has $\tilde{O}(1)$ rows, $\mat{G}\mat{\tilde{A}}$ and $\mat{G}\mat{\tilde{A}} \mat{M}$ can be computed in $\tilde{O}(kd+d^2)$ time. Finally, given $\mat{G}\mat{\tilde{A}} \mat{M}$, we can
compute $\mat{G}\mat{\tilde{A}} \mat{M} \mat{a}_i$, for all $i\in[n]$, in $\tilde{O}(\textnormal{nnz}(\mat{A}))$.
\end{proof}

\FastKroneckerRegressionTheorem*

\begin{proof}
By \cite[Lemma 8]{cohen2015uniform}, we can
compute a $(1 + \varepsilon/N)$-spectral approximation $\mat{\tilde{A}}^{(n)}$ of $\mat{A}^{(n)}$,
with $\tilde{O}(R_n N^2 \epsilon^{-2})$ rows,
in $\tilde{O}(\textnormal{nnz}(\mat{A}^{(n)})+R_n^\omega N^2 \epsilon^{-2})$ time. Given $\mat{\tilde{A}}^{(n)}$, we can compute ${\tilde{\mat{A}}^{(n)^\intercal}} \tilde{\mat{A}}^{(n)}$ in $\tilde{O}(R_n^\omega N^2 \epsilon^{-2})$ time.
Finally, given ${\tilde{\mat{A}}^{(n)^\intercal}} \tilde{\mat{A}}^{(n)}\in\mathbb{R}^{R_n\times R_n}$, we can compute its SVD in $\tilde{O}(R_n^\omega)$ time.

Given $\mat{\tilde{A}}^{(n)}$ and ${\mat{\tilde{A}}^{(n)^\intercal}} \mat{\tilde{A}}^{(n)}$, by \cref{lemma:fast-ls-comp},
it takes
$\tilde{O}(\textnormal{nnz}(\mat{A}^{(n)})+R_n^\omega N^2 \epsilon^{-2})$ time
to compute the (approximate) leverage scores $\bm{\ell}(\mat{A}^{(n)})$.
Compute the cumulative density function of each 
leverage score distribution in $O(I_n)$ time.
This allows us to sample from the product distribution
$\cP = \bm{\ell}(\mat{A}^{(1)}) \ktimes \cdots \ktimes \bm{\ell}(\mat{A}^{(N)})$
in $\tilde{O}(N)$ by sampling each coordinate independently.
Sampling from $\cP$ is equivalent to sampling from $\bm{\ell}(\mat{K})$ by
\Cref{lemma:kronecker_cross_leverage_scores}.

Note that 
\[
(1+\log(1+\epsilon/4)/N)^N \leq (e^{\log(1+\epsilon/4)/N})^N = e^{\log(1+\epsilon/4)} = 1+\epsilon/4.
\]
Therefore, because for all $n\in[N]$, we have
\[
{\mat{A}^{(n)^\intercal}} \mat{A}^{(n)} \preccurlyeq
{\mat{\tilde{A}}^{(n)^\intercal}} \mat{\tilde{A}}^{(n)} \preccurlyeq (1+\log(1+\epsilon/4)/N) {\mat{A}^{(n)^\intercal}} \mat{A}^{(n)},
\]
it follows that
\begin{align*}
({\mat{A}^{(1)^\intercal}} \mat{A}^{(1)})\ktimes\cdots\ktimes ({\mat{A}^{(N)}}^\intercal \mat{A}^{(N)}) & \preccurlyeq ({\mat{\tilde{A}}^{(1)^\intercal}} \mat{\tilde{A}}^{(1)})\ktimes\cdots\ktimes ({\mat{\tilde{A}}^{(N)^\intercal}} \mat{\tilde{A}}^{(N)})
\\ & \preccurlyeq
(1+\varepsilon/4) ({\mat{A}^{(1)^\intercal}} \mat{A}^{(1)})\ktimes\cdots\ktimes ({\mat{A}^{(N)^\intercal}} \mat{A}^{(N)}).
\end{align*}
Thus, the approximate leverage scores we get in
\cref{alg:fast_kronecker_regression} for $\mat{A}^{(1)}\ktimes\cdots\ktimes \mat{A}^{(N)}$ are within a factor of $(1+\varepsilon/4)$ of the true leverage scores.

Therefore, our preconditioner given by the SVD, $\mat{V}_{\mat{K}}
  \parens*{
    \mat{\Sigma}_{\mat{K}}^\intercal \mat{\Sigma}_{\mat{K}} + \lambda\mat{I}_R }^+
  \mat{V}_{\mat{K}}^\intercal$, is a spectral approximation of $\parens*{\mat{K}^\intercal \mat{K} + \lambda\mat{I}_R}^+$.
  More specifically,
\begin{align*}
    \mat{\tilde K}^\intercal \mat{\tilde K} + \lambda\mat{I} &
  \preccurlyeq
  \frac{1}{1-\sqrt{\varepsilon}}
  \parens*{\mat{K}^\intercal \mat{K} + \lambda\mat{I}_R} 
  \preccurlyeq
  \frac{1}{1-\sqrt{\varepsilon}} \mat{V}_{\mat{K}}
  \parens*{
    \mat{\Sigma}_{\mat{K}}^\intercal \mat{\Sigma}_{\mat{K}} + \lambda\mat{I}_R }^+
  \mat{V}_{\mat{K}}^\intercal
  \\ &
  \preccurlyeq \frac{1+\epsilon/4}{1-\sqrt{\epsilon}} \parens*{\mat{K}^\intercal \mat{K} + \lambda\mat{I}_R} 
  \preccurlyeq
  \frac{(1+\epsilon/4)(1+\sqrt{\varepsilon})}{1-\sqrt{\varepsilon}}
  \parens*{\mat{\tilde K}^\intercal \mat{\tilde K} + \lambda \mat{I}}.
\end{align*}
Therefore, by $O(\log(1/\epsilon))$ iterations of Richardson (\cref{lemma:richardson_iteration}), we converge to the desired accuracy.
Finally, note that each iteration of Richardson can be done in time
\[
\tilde{O}\parens*{
    \min_{S\subseteq [N]} \textnormal{MM}\parens*{
        \prod_{n\in S} R_n, R \varepsilon^{-1}, \prod_{n\in [N]\setminus S} R_n
    }
},
\]
using our novel sparse Kronecker-matrix multiplication procedure (\cref{thm:kron_mat_mul_sqrt_decomp}), $\KronMatMul$ (\cref{lemma:kron_mat_mul}), and the structure of the preconditioner, which is a diagonal matrix multiplied from left and right by matrices with Kronecker structure.
\end{proof}
\section{Missing Analysis from \Cref{sec:algorithm}}
\label{app:algorithm}

In this section we show how to extend the \FastKroneckerRegression
(\Cref{alg:fast_kronecker_regression})
to work for factor matrix updates in \texttt{TuckerALS}.
This analysis is presented in~\Cref{subsec:equality_constrained_least_squares_reduction} and~\Cref{subsec:fast_factor_matrix_update}.
Then we compare the running times of different core tensor and
factor matrix update algorithms in~\Cref{subsec:different_algorithms_running_times} to establish baselines.

\subsection{Factor Matrix Update Substitution: Reducing to Equality-Constrained Least Squares}
\label{subsec:equality_constrained_least_squares_reduction}

\LemmaConstrainedLeastSquares*

\begin{proof}
Let
$\mat{z} = \mat{M} \mat{x} \in \R^{m}$.
For any $\mat{x} \in \R^{d}$,
$\mat{z}$ is in the column space of $\mat{M}$
and hence
orthogonal to any vector in the left null space
of $\mat{M}$.
Therefore, we can optimize over $\mat{z} \in \R^{m}$ subject to
$\mat{N}\mat{z} = \mat{0}$ instead
because for any $\mat{x} \in \R^{d}$,
$
  \mat{N} \mat{M} \mat{x} =
  (\mat{I}_m - \mat{M} \mat{M}^+) \mat{M} \mat{x} =
  (\mat{M} - \mat{M})\mat{x} =
  \mat{0}.
$
Using this substitution, we can also replace
the term
$\lambda \norm{\mat{x}}_2^2$
by $\lambda \norm{\mat{M}^+ \mat{z}}_2^2$
because for any $\mat{z}$,
the least squares solution to
$\mat{z} = \mat{M} \mat{x}$
with minimum norm is
$\mat{M}^+ \mat{z}$~\cite{planitz19793}.
\end{proof}

\LemmaConstrainedReg*
\begin{proof}
First note that for any $w\geq 0$, we have
\begin{align}
\label{eq:constraint-reg-0}
\min_{\mat{x} \in \R^{d}} \left\Vert \begin{bmatrix}
  \mat{M} \\ \sqrt{w} \mat{N}
\end{bmatrix} \mat{x} - \begin{bmatrix}
  \mat{b} \\ \mat{0}
\end{bmatrix}\right\Vert_2^2 \leq \min_{\mat{N}\mat{x}=\mat{0}} \left\Vert \begin{bmatrix}
  \mat{M} \\ \sqrt{w} \mat{N}
\end{bmatrix} \mat{x} - \begin{bmatrix}
  \mat{b} \\ \mat{0}
\end{bmatrix}\right\Vert_2^2 = \min_{\mat{N}\mat{x}=\mat{0}} \norm{\mat{M}\mat{x}-\mat{b}}_2^2.
\end{align}
Suppose $\mat{\hat{x}}\in\mathbb{R}^d$ such that
\begin{align}
\label{eq:constraint-reg-1}
\left\Vert \begin{bmatrix}
  \mat{M} \\ \sqrt{w} \mat{N}
\end{bmatrix} \mat{\hat{x}} - \begin{bmatrix}
  \mat{b} \\ \mat{0}
\end{bmatrix}\right\Vert_2^2
    \leq
 (1+\varepsilon/3) \min_{\mat{x} \in \R^{d}} \left\Vert \begin{bmatrix}
  \mat{M} \\ \sqrt{w} \mat{N}
\end{bmatrix} \mat{x} - \begin{bmatrix}
  \mat{b} \\ \mat{0}
\end{bmatrix}\right\Vert_2^2.
\end{align}
Let $\mat{z} = (\mat{I} - \mat{N}^+ \mat{N})\mat{\hat{x}}$.
It follows that $\mat{N}\mat{z} = \mat{0}$ because $\mat{N}=\mat{N}\mat{N}^+ \mat{N}$.
Therefore,
\[
\left\Vert \begin{bmatrix}
  \mat{M} \\ \sqrt{w} \mat{N}
\end{bmatrix} \mat{z} - \begin{bmatrix}
  \mat{b} \\ \mat{0}
\end{bmatrix}\right\Vert_2^2 = \left\Vert \begin{bmatrix}
  \mat{M} \\ \sqrt{w} \mat{N}
\end{bmatrix} (\mat{I} - \mat{N}^+ \mat{N})\mat{\hat{x}} - \begin{bmatrix}
  \mat{b} \\ \mat{0}
\end{bmatrix}\right\Vert_2^2 = \norm{\mat{M}(\mat{I}-\mat{N}^+ \mat{N}) \mat{\hat{x}} - \mat{b}}_2^2.
\]

By the triangle inequality,
\begin{align*}
\norm{\mat{M} (\mat{I} - \mat{N}^+ \mat{N}) \mat{\hat{x}} - \mat{b}}_2 \leq \norm{\mat{M} \mat{\hat{x}} - \mat{b}}_2 + \norm{\mat{M} \mat{N}^+ \mat{N} \mat{\hat{x}}}_2.
\end{align*}
Therefore,
\begin{align*}
\norm{\mat{M} (\mat{I} - \mat{N}^+ \mat{N}) \mat{\hat{x}} - \mat{b}}_2^2 \leq \norm{\mat{M} \mat{\hat{x}} - \mat{b}}_2^2 + \norm{\mat{M} \mat{N}^+ \mat{N} \mat{\hat{x}}}_2^2 + 2\norm{\mat{M} \mat{\hat{x}} - \mat{b}}_2 \norm{\mat{M} \mat{N}^+ \mat{N} \mat{\hat{x}}}_2.
\end{align*}

Now we have two cases: 
\begin{itemize}
    \item Case 1: $2\norm{\mat{M} \mat{N}^+ \mat{N} \mat{\hat{x}}}_2 \leq \frac{\varepsilon}{3} \norm{\mat{M} \mat{\hat{x}} - \mat{b}}_2$,
    \item Case 2: $2\norm{\mat{M} \mat{N}^+ \mat{N} \mat{\hat{x}}}_2 > \frac{\varepsilon}{3} \norm{\mat{M} \mat{\hat{x}} - \mat{b}}_2$.
\end{itemize}
Note that by the consistency of operator norms, we have
\[
\norm{\mat{M} \mat{N}^+ \mat{N} \mat{\hat{x}}}_2 \leq \norm{\mat{M} \mat{N}^+}_2 \norm{\mat{N} \mat{\hat{x}}}_2.
\]
Therefore, in the first case we have
\begin{align*}
\norm{\mat{M} (\mat{I} - \mat{N}^+ \mat{N}) \mat{\hat{x}} - \mat{b}}_2^2 & \leq \parens*{1+\frac{\varepsilon}{3}}\norm{\mat{M} \mat{\hat{x}} - \mat{b}}_2^2 + \norm{\mat{M} \mat{N}^+ \mat{N} \mat{\hat{x}}}_2^2 \\ & \leq \parens*{1+\frac{\varepsilon}{3}}\left(\norm{\mat{M} \mat{\hat{x}} - \mat{b}}_2^2 + w \norm{\mat{N} \mat{\hat{x}}}_2^2\right),
\end{align*}
where the last inequality follows from our choice of $w$.

In the second case we have
\begin{align*}
\norm{\mat{M} (\mat{I} - \mat{N}^+ \mat{N}) \mat{\hat{x}} - \mat{b}}_2^2 
&\leq \norm{\mat{M} \mat{\hat{x}} - \mat{b}}_2^2 + \parens*{1+\frac{12}{\epsilon}} \norm{\mat{M} \mat{N}^+ \mat{N} \mat{\hat{x}}}_2^2 \\
&\leq \norm{\mat{M} \mat{\hat{x}} - \mat{b}}_2^2 + w\norm{\mat{N} \mat{\hat{x}}}_2^2.
\end{align*}
Therefore, in both cases 
\begin{align}
\label{eq:constraint-reg-2}
\norm{\mat{M} (\mat{I} - \mat{N}^+ \mat{N}) \mat{\hat{x}} - \mat{b}}_2^2 \leq \parens*{1+\frac{\varepsilon}{3}}(\norm{\mat{M} \mat{\hat{x}} - \mat{b}}_2^2 + w \norm{\mat{N} \mat{\hat{x}}}_2^2).
\end{align}
Moreover, $\varepsilon<1/3$, so then $(1+\varepsilon/3)^2\leq 1+\varepsilon$. Therefore by Equations~\eqref{eq:constraint-reg-0}, \eqref{eq:constraint-reg-1} and \eqref{eq:constraint-reg-2},
we have
\begin{align*}
\norm{\mat{M}\mat{z}-\mat{b}}_2^2 & = \norm{\mat{M} (\mat{I} - \mat{N}^+ \mat{N}) \mat{\hat{x}} - \mat{b}}_2^2 \leq (1+\varepsilon)\min_{\mat{x} \in \R^d} \left\Vert \begin{bmatrix}
  \mat{M} \\ \sqrt{w} \mat{N}
\end{bmatrix} \mat{x} - \begin{bmatrix}
  \mat{b} \\ \mat{0}
\end{bmatrix}\right\Vert_2^2 \\ & \leq (1+\varepsilon)\min_{\mat{N}\mat{x}=0} \norm{\mat{M}\mat{x}-\mat{b}}_2^2.
\end{align*}
Finally, note that $\mat{N}\mat{z}=\mat{0}$ and that 
$\mat{z}$ is a feasible solution.
\end{proof}

\paragraph{Application to factor matrix updates.}
Now we explain how the reduction to equality-constrained least squares in~\Cref{lemma:constrained_least_squares}
applies to factor matrix updates in \texttt{TuckerALS}.
For the factor matrix updates, we solve a regression problem of the form:
\begin{align}
\label{eq:factor_matrix_reg}
  \argmin_{\mat{y} \in \R^{R_n}} \norm{\parens{\mat{A}^{(1)} \otimes \dots \otimes \mat{A}^{(n-1)} \otimes \mat{A}^{(n+1)} \otimes \dots \otimes \mat{A}^{(N)}} \mat{G}_{(n)}^\intercal \mat{y} - \mat{b}_{i:}^\intercal}_2^2  + \lambda \norm{\mat{y}}_2^2, 
\end{align}
where $\mat{b}_{i:}$ is the $i$-th row of the mode-$n$ unfolding of tensor $\tensor{X}$.
Note that for any $\mat{y}$, $\mat{G}_{(n)}^\intercal \mat{y}$ is a vector in the column space of $\mat{G}_{(n)}^\intercal$. Thus, $\mat{G}_{(n)}^\intercal \mat{y}$ is orthogonal to any vector in the left null space of $\mat{G}_{(n)}^\intercal$. Let $\mat{N}$ be a matrix in which the rows are a basis for the left null space of $\mat{G}_{(n)}^\intercal$. Then, solving the following is equivalent to solving \eqref{eq:factor_matrix_reg}:
\begin{align}
\label{eq:factor_matrix_reg_2}
  \min_{\mat{N} \mat{z} = \mat{0}} & \norm{\parens{\mat{A}^{(1)} \otimes \dots \otimes \mat{A}^{(n-1)} \otimes \mat{A}^{(n+1)} \otimes \dots \otimes \mat{A}^{(N)}} \mat{z} - \mat{b}_{i:}^\intercal}_2^2   + \lambda \norm{(\mat{G}_{(n)}^\intercal)^+ \mat{z}}_2^2,
\end{align}
where $(\mat{G}_{(n)}^\intercal)^+$ is the pseudoinverse of $\mat{G}_{(n)}^\intercal$.

To explain the $\norm{(\mat{G}_{(n)}^\intercal)^+ \mat{z}}_2^2$ term, consider a vector $\mat{y}$ that under the transformation $\mat{G}_{(n)}^\intercal$ goes to $\mat{z}$, i.e., $\mat{G}_{(n)}^\intercal \mat{y} = \mat{z}$. The set of solutions to this linear system is $(\mat{G}_{(n)}^\intercal)^+ \mat{z} + (\mat{I}- (\mat{G}_{(n)}^\intercal)^+ \mat{G}_{(n)}^\intercal)\mat{w}$, for all $\mat{w}$ by~\cite{james1978generalised}.
Moreover, $(\mat{G}_{(n)}^\intercal)^+ \mat{z}$ is orthogonal to $(\mat{I}- (\mat{G}_{(n)}^\intercal)^+ \mat{G}_{(n)}^\intercal)\mat{w}$ because
\begin{align*}
  ((\mat{G}_{(n)}^\intercal)^+ \mat{z})^\intercal (\mat{I}- (\mat{G}_{(n)}^\intercal)^+ \mat{G}_{(n)}^\intercal)\mat{w} 
  &= ((\mat{G}_{(n)}^\intercal)^+ \mat{G}_{(n)}^\intercal(\mat{G}_{(n)}^\intercal)^+ \mat{z})^\intercal (\mat{I}- (\mat{G}_{(n)}^\intercal)^+ \mat{G}_{(n)}^\intercal)\mat{w}
\\
  &= ((\mat{G}_{(n)}^\intercal)^+ \mat{z})^\intercal (\mat{G}_{(n)}^\intercal)^+ \mat{G}_{(n)}^\intercal (\mat{I}- (\mat{G}_{(n)}^\intercal)^+ \mat{G}_{(n)}^\intercal)\mat{w} \\
  &= ((\mat{G}_{(n)}^\intercal)^+ \mat{z})^\intercal (\mat{G}_{(n)}^\intercal)^+ (\mat{G}_{(n)}^\intercal- \mat{G}_{(n)}^\intercal(\mat{G}_{(n)}^\intercal)^+ \mat{G}_{(n)}^\intercal)\mat{w} \\
  &= 0.
\end{align*}
Therefore, by the Pythagorean theorem, we have
\begin{align*}
  \norm{(\mat{G}_{(n)}^\intercal)^+ \mat{z} + (\mat{I}- (\mat{G}_{(n)}^\intercal)^+ \mat{G}_{(n)}^\intercal)\mat{w}}_2^2 
  = \norm{(\mat{G}_{(n)}^\intercal)^+ \mat{z}}_2^2 + \norm{(\mat{I}- (\mat{G}_{(n)}^\intercal)^+ \mat{G}_{(n)}^\intercal)\mat{w}}_2^2,
\end{align*}
so
$\norm{(\mat{G}_{(n)}^\intercal)^+ \mat{z} + (\mat{I}- (\mat{G}_{(n)}^\intercal)^+ \mat{G}_{(n)}^\intercal)\mat{w}}_2^2$ is minimized when $\mat{w}=\mat{0}$ \cite{planitz19793}.
Thus, for all $y$ such that $\mat{G}_{(n)}^\intercal \mat{y} = \mat{z}$,
it follows that $(\mat{G}_{(n)}^\intercal)^+ \mat{z}$ minimizes $\norm{\mat{y}}_2^2$
in \eqref{eq:factor_matrix_reg},
hence we can replace $\mat{y}$ by $(\mat{G}_{(n)}^\intercal)^+ \mat{z}$.

Note that $\mat{N}=\mat{I} - \mat{G}_{(n)}^\intercal (\mat{G}_{(n)}^\intercal)^{+}$ works because for any $\mat{y}$, by definition of pseudoinverse:
\[
  (\mat{I} - \mat{G}_{(n)}^\intercal (\mat{G}_{(n)}^\intercal)^{+}) \mat{G}_{(n)}^\intercal \mat{y} =(\mat{G}_{(n)}^\intercal - \mat{G}_{(n)}^\intercal (\mat{G}_{(n)}^\intercal)^{+} \mat{G}_{(n)}^\intercal) \mat{y} = 0.
\] 
More generally, a vector $\mat{z}$ is in the image of $\mat{G}_{(n)}^\intercal$
if and only if $(\mat{I} - \mat{G}_{(n)}^\intercal (\mat{G}_{(n)}^\intercal)^{+}) \mat{z} = \mat{0}$.
This is an alternate formulation of~\Cref{lemma:constrained_least_squares}
and leads to the algorithm~\FastFactorMatrixUpdate below.

\subsection{Fast Factor Matrix Update Algorithm}
\label{subsec:fast_factor_matrix_update}

\begin{algorithm}
\caption{\FastFactorMatrixUpdate}
\label{alg:fast_factor_matrix_update}
  \textbf{Input:}
  Tensor $\tensor{X} \in \R^{I_1 \times \cdots \times I_N}$,
  factors $\mat{A}^{(n)} \in \R^{I_n \times R_n}$,
  core $\tensor{G} \in \R^{R_1 \times \cdots \times R_N}$,
  index $n$,
  $\lambda$, error $\varepsilon$, probability $\delta$

\begin{algorithmic}[1]
  \State Set $R_{\ne n} \gets R_1 \cdots R_{n-1} R_{n+1} \cdots R_N$
  \State $\mat{K} = \mat{A}^{(1)} \ktimes \cdots \ktimes \mat{A}^{(n-1)} \ktimes \mat{A}^{(n+1)} \ktimes \cdots \ktimes \mat{A}^{(N)}$
  \State Set $\mat{B} \gets \mat{X}_{(n)}$
  \State Initialize product distribution data structure
         $\cP$ to sample indices from
         $(\bm{\ell}(\mat{A}^{(1)}), \cdots, \bm{\ell}(\mat{A}^{(N)}))$
  \State Let $\mat{N} = \mat{I}_{R_{\ne n}} - \mat{G}_{(n)}^\intercal 
          {(\mat{G}_{(n)}^\intercal)}^+$
  \State Let
  $w \ge \parens*{1 + \frac{12}{\varepsilon}}
  \left\Vert \begin{bmatrix}
  \mat{K} \\ \sqrt{\lambda} (\mat{G}_{(n)}^\intercal)^+
\end{bmatrix} \mat{N}^+ \right\Vert_2^2$ as in \cref{lemma:constraint-reg}, and construct the operator
  \[
    \mat{M}^+ =
    \parens*{
    \begin{bmatrix}
      \mat{K} \\
      \sqrt{\lambda} (\mat{G}_{(n)}^\intercal)^+ \\
      \sqrt{w} \mat{N}
    \end{bmatrix}^\intercal
    \begin{bmatrix}
      \mat{K} \\
      \sqrt{\lambda} (\mat{G}_{(n)}^\intercal)^+ \\
      \sqrt{w} \mat{N}
    \end{bmatrix}
    }^+
  \]
  using the Woodbury identity in Equation~\Cref{eqn:woodbury_expression}
  \State Set $s \gets \ceil{1680 R_{\ne n} \ln(40R_{\ne n}) \ln(I_n / \delta) / \varepsilon}$
  \For{$i=1$ to $I_n$}
    \State Set $\mat{S} \gets \RowSampling(\mat{K}, s, \cP)$
    \State Set $\mat{\tilde K} \gets \mat{S}\mat{K}$ and $\mat{\tilde b} \gets \mat{S}\mat{b}_{i:}^\intercal$
    \State Initialize $\mat{z} \gets \mat{0}_{R_{\ne n}}$
    \Repeat
      \State Update $\mat{z}
          \gets \mat{z} - (1-\sqrt{\varepsilon})\mat{M}^+\parens{(\mat{\tilde K}^\intercal \mat{\tilde K} + w\mat{I})\mat{z} + \lambda \mat{G}_{(n)}^+ {(\mat{G}_{(n)}^\intercal)}^+ \mat{z}
          - w \mat{G}_{(n)}^+ \mat{G}_{(n)}\mat{z} - \mat{\tilde K}^\intercal\mat{\tilde b}}$ using fast Kronecker-matrix multiplication \label{line:richardson_step_2}
    \Until{convergence}
  \State Update factor matrix row $\mat{a}^{(n)}_{i:} \gets \mat{z}^\intercal \mat{G}_{(n)}^+$
  \EndFor
\end{algorithmic}
\end{algorithm}

\TheoremFastFactorMatrixUpdate*

\begin{proof}
Each factor row matrix update in \texttt{TuckerALS} (\Cref{alg:alternating_least_squares})
has the form
\[
  \mat{a}_{i:}^{(n)} \gets \argmin_{\mat{y} \in \R^{R_n}} \parens*{
    \norm{\mat{K}\mat{G}_{(n)}^\intercal\mat{y} - \mat{b}_{i:}^\intercal}_2^2 + \lambda\norm{\mat{y}}_2^2}.
\]
Use \Cref{lemma:constrained_least_squares} to reduce
the factor matrix updates to solving the equality-constrained
Kronecker regression problem
\begin{align*}
  \mat{z}_{\opt}
  =
  \argmin_{\mat{N}\mat{z}=\mat{0}}
  \parens*{
   \norm{\mat{K} \mat{z} - {\mat{b}_{i:}}^\intercal}_{2}^2
  +
  \lambda \norm{{(\mat{G}_{(n)}^\intercal)}^+ \mat{z}}_{2}^2
  }.
\end{align*}
The correctness of this algorithm is analogous to the argument in
the proof of \Cref{thm:fast_kronecker_regression}, but now we 
have more sophisticated blocks in the data matrix and
need to account for them.

We solve each row update independently.
The construction of the sketched submatrix $\mat{\tilde K}$ 
guarantees that $(1-\sqrt{\varepsilon})^{-1}\mat{M}$
is a 3-spectral approximation to the sketched normal matrix
\[
  \mat{\tilde M} \defeq \mat{\tilde K}^\intercal \mat{\tilde K} + \lambda\mat{G}^+_{(n)}{(\mat{G}_{(n)}^\intercal)}^+ + w\mat{N}^\intercal \mat{N},
\]
with probability at least $1 - \delta/I_n$,
by \Cref{lemma:block_sketch_is_spectral_approx}.
Thus, we can use the (non-sketched) matrix $\mat{M}^+$
as a preconditioner and exploit its Kronecker structure
since this iterative method converges in $\tilde{O}(1)$ steps
by~\Cref{lemma:richardson_iteration}.

It remains to show the main difference with~\Cref{thm:fast_kronecker_regression}:
the time complexity of one Richardson iteration (Line~\ref{line:richardson_step_2}).
We show in \Cref{lemma:fast_factor_richardson_step} how
$\mat{M}^+ \mat{x}$ can be computed in time
\[
\tilde{O}\parens{R_{\ne n}^2 \varepsilon^{-1}\log(I_n/\delta) + R \sum_{k=1}^N R_{k} + R_{n}^\omega}
\]
using the Woodbury matrix identity since $\mat{M}^+$ is a rank-$R_n$ update
to $(\mat{K}^\intercal \mat{K} + w\mat{I})^+$.

The solution of each sketch is a 
$(1+\varepsilon)$-approximation to the optimal factor row by
\Cref{lemma:approximate_block_regression}, and
the success guarantee follows from a union bound over all $I_n$ rows.
\end{proof}

\begin{lemma}
\label{lemma:fast_factor_richardson_step}
\textbf{Line~13} in \FastFactorMatrixUpdate
takes
$\tilde{O}\parens{R_{\ne n}^2 \varepsilon^{-1}\log(I_n/\delta) + R \sum_{k=1}^N R_{k}}$ time
after preprocessing.
\end{lemma}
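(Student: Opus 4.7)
The plan is to split one Richardson iteration (Line 13) into two subproblems: (i) forming the residual $\mat{\tilde M}\mat{z}$ where $\mat{\tilde M} = \mat{\tilde K}^\intercal\mat{\tilde K} + \lambda\mat{G}_{(n)}^+(\mat{G}_{(n)}^\intercal)^+ + w\mat{N}^\intercal\mat{N}$, and (ii) applying the preconditioner $\mat{M}^+$ to that residual. For (i), the dominant cost is the sparse-sketched Kronecker part $\mat{K}^\intercal\mat{S}^\intercal\mat{S}\mat{K}\mat{z}$, where $\mat{S}^\intercal\mat{S}$ is diagonal with $s = \tilde O(R_{\ne n}\varepsilon^{-1}\log(I_n/\delta))$ nonzero entries. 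I would invoke~\Cref{thm:kron_mat_mul_sqrt_decomp} twice---first on $\mat{S}\mat{K}\mat{z}$ to produce an $s$-sparse vector, then on $\mat{K}^\intercal$ applied to that sparse vector---bounding this step by $\tilde O(R_{\ne n}^2\varepsilon^{-1}\log(I_n/\delta))$, which is the first term of the claim. The remaining pieces of $\mat{\tilde M}\mat{z}$ all factor through a precomputed compact SVD $\mat{G}_{(n)}^\intercal = \mat{U}_G\mat{\Sigma}_G\mat{V}_G^\intercal$ (with $\mat{U}_G\in\R^{R_{\ne n}\times r_G}$, $r_G\le R_n$) and cost only $O(R_{\ne n}R_n) = O(R)$ each.

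For (ii), I would combine the two $\mat{G}_{(n)}$-dependent terms using the identities $\mat{G}_{(n)}^+(\mat{G}_{(n)}^\intercal)^+ = \mat{U}_G\mat{\Sigma}_G^{-2}\mat{U}_G^\intercal$ and $\mat{N}^\intercal\mat{N} = \mat{I} - \mat{U}_G\mat{U}_G^\intercal$ to rewrite
\[
\mat{M} = \underbrace{\mat{K}^\intercal\mat{K} + w\mat{I}}_{\mat{A}} + \mat{U}_G\bigl(\lambda\mat{\Sigma}_G^{-2} - w\mat{I}_{r_G}\bigr)\mat{U}_G^\intercal,
\]
a rank-$r_G \le R_n$ perturbation of a Kronecker-ridge matrix. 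The base admits $\mat{A}^+ = \mat{V}_{\mat{K}}(\mat{\Sigma}_{\mat{K}}^\intercal\mat{\Sigma}_{\mat{K}} + w\mat{I})^+\mat{V}_{\mat{K}}^\intercal$ using the factor-matrix SVDs already computed by preprocessing, so one application $\mat{A}^+\mat{x}$ costs $O(R_{\ne n}\sum_{k\ne n}R_k)$ via~\Cref{lemma:kron_mat_mul}. The Woodbury matrix identity then yields
\[
\mat{M}^+ = \mat{A}^+ - \mat{A}^+\mat{U}_G\,\mat{W}\,\mat{U}_G^\intercal\mat{A}^+,
\]
where $\mat{W}\in\R^{r_G\times r_G}$ depends only on $\mat{U}_G^\intercal\mat{A}^+\mat{U}_G$ and $\mat{\Sigma}_G$. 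The (amortized, outside-the-loop) preprocessing forms $\mat{A}^+\mat{U}_G$ via $r_G$ calls to $\mat{A}^+$ in total $O(R\sum_{k}R_k)$, and builds and inverts $\mat{W}$ in $O(R_n^\omega + R_n^2 R_{\ne n})$, with the $R_n^\omega$ piece absorbed by the existing $R_n^\omega\varepsilon^{-2}$ term of~\Cref{theorem:fast_factor_matrix_update}. Each per-iteration application of $\mat{M}^+$ then costs one $\mat{A}^+$-solve plus $O(R + R_n^2)$ for the rank-$r_G$ correction, a total of $O(R\sum_k R_k)$.

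The main subtlety---and the only nontrivial step---is recognizing that the $\lambda$-regularizer and the $w\mat{N}^\intercal\mat{N}$ penalty share the same column space $\mat{U}_G$, so Woodbury collapses to a single $R_n\times R_n$ inverse rather than an operator of rank $2R_n$; everything else is a direct accounting using~\Cref{thm:kron_mat_mul_sqrt_decomp}, \Cref{lemma:kron_mat_mul}, and the Kronecker factorization of $\mat{A}^+$. Summing the two dominant contributions yields the stated per-step bound $\tilde O\bigl(R_{\ne n}^2\varepsilon^{-1}\log(I_n/\delta) + R\sum_{k=1}^N R_k\bigr)$.
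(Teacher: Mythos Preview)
Your proposal is correct and follows essentially the same approach as the paper: both split Line~13 into forming the sketched residual and applying $\mat{M}^+$, rewrite $\mat{M}$ as the Kronecker-ridge matrix $\mat{K}^\intercal\mat{K}+w\mat{I}$ plus a rank-$R_n$ correction coming from $\mat{G}_{(n)}$, invoke the Woodbury identity, and use \Cref{lemma:kron_mat_mul} for fast applications of $(\mat{K}^\intercal\mat{K}+w\mat{I})^{-1}$. The only cosmetic difference is that you work through the compact SVD $\mat{G}_{(n)}^\intercal=\mat{U}_G\mat{\Sigma}_G\mat{V}_G^\intercal$ to express the low-rank perturbation as $\mat{U}_G(\lambda\mat{\Sigma}_G^{-2}-w\mat{I})\mat{U}_G^\intercal$, whereas the paper derives the equivalent expression $\mat{G}_{(n)}^+\bigl(\lambda(\mat{G}_{(n)}^\intercal)^+-w\mat{G}_{(n)}\bigr)$ by direct pseudoinverse algebra on $\mat{N}^\intercal\mat{N}$; the two are identical once you substitute the SVD.
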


\begin{proof}
Recall that $\mat{N} = \mat{I} - \mat{G}_{(n)}^\intercal (\mat{G}_{(n)}^\intercal)^+$
and consider the following equality.
Letting
\begin{align*}
  \mat{M} \defeq
  \begin{bmatrix}
\mat{K} \\
    \sqrt{\lambda} \parens{\mat{G}_{(n)}^\intercal}^+ \\
    \sqrt{w} \mat{N}
  \end{bmatrix}^\intercal 
  \begin{bmatrix}
    \mat{K} \\
    \sqrt{\lambda} \parens{\mat{G}_{(n)}^\intercal}^+ \\
    \sqrt{w} \mat{N}
  \end{bmatrix}, 
\end{align*}
we have
\begin{align*}
  \mat{M}
  &= 
  \mat{K}^\intercal \mat{K} + \lambda \mat{G}_{(n)}^+ (\mat{G}_{(n)}^\intercal)^+ + w \mat{N}^\intercal \mat{N} \\
  &=
  \mat{K}^\intercal \mat{K} + \lambda \mat{G}_{(n)}^+ (\mat{G}_{(n)}^\intercal)^+ + w (\mat{I} - \mat{G}_{(n)}^\intercal (\mat{G}_{(n)}^\intercal)^+)^\intercal (\mat{I} - \mat{G}_{(n)}^\intercal (\mat{G}_{(n)}^\intercal)^+) \\
  &=
  \mat{K}^\intercal \mat{K} + \lambda \mat{G}_{(n)}^+ (\mat{G}_{(n)}^\intercal)^+ + w(\mat{I} - \mat{G}_{(n)}^+ \mat{G}_{(n)} - \mat{G}_{(n)}^\intercal (\mat{G}_{(n)}^\intercal)^+ + \mat{G}_{(n)}^+ \mat{G}_{(n)} \mat{G}_{(n)}^\intercal (\mat{G}_{(n)}^\intercal)^+) \\
  &=
  (\mat{K}^\intercal \mat{K} + w \mat{I}) + \lambda \mat{G}_{(n)}^+ (\mat{G}_{(n)}^\intercal)^+ \\ & - w(\mat{G}_{(n)}^+ \mat{G}_{(n)} + \mat{G}_{(n)}^\intercal (\mat{G}_{(n)}^\intercal)^+ - \mat{G}_{(n)}^+ \mat{G}_{(n)} \mat{G}_{(n)}^\intercal (\mat{G}_{(n)}^\intercal)^+).
\end{align*}
For any matrix, we have the pseudoinverse identity
$\mat{A}^+ \mat{A} \mat{A}^\intercal = \mat{A}^\intercal$,
so it follows that
\[
  \mat{G}_{(n)}^+ \mat{G}_{(n)} \mat{G}_{(n)}^\intercal (\mat{G}_{(n)}^\intercal)^+
  =
  \mat{G}_{(n)}^\intercal (\mat{G}_{(n)}^\intercal)^+.
\]
Therefore,
\begin{align*}
  \mat{M}
  &=
(\mat{K}^\intercal \mat{K} + w \mat{I}) + \lambda \mat{G}_{(n)}^+ (\mat{G}_{(n)}^\intercal)^+ - w(\mat{G}_{(n)}^+ \mat{G}_{(n)} + \mat{G}_{(n)}^\intercal (\mat{G}_{(n)}^\intercal)^+ - \mat{G}_{(n)}^\intercal (\mat{G}_{(n)}^\intercal)^+) \\
  &=
(\mat{K}^\intercal \mat{K} + w \mat{I}) + \lambda \mat{G}_{(n)}^+ (\mat{G}_{(n)}^\intercal)^+ - w(\mat{G}_{(n)}^+ \mat{G}_{(n)}) \\
  &=
  (\mat{K}^\intercal \mat{K} + w \mat{I}) + \mat{G}_{(n)}^+ (\lambda (\mat{G}_{(n)}^\intercal)^+ - w \mat{G}_{(n)}).
\end{align*}
Applying the Woodbury matrix identity, we have
  \begin{align}
    \mat{M}^+
    &=
    (\mat{K}^\intercal \mat{K} + w \mat{I})^{-1} 
    \label{eqn:woodbury_expression} \\ \notag
    & + 
(\mat{K}^\intercal \mat{K} + w \mat{I})^{-1} \mat{G}_{(n)}^+ (\mat{I} + (\lambda (\mat{G}_{(n)}^\intercal)^+ - w \mat{G}_{(n)}) (\mat{K}^\intercal \mat{K} + w \mat{I})^{-1} \mat{G}_{(n)}^+)^{-1} \\ & \cdot (\lambda (\mat{G}_{(n)}^\intercal)^+ - w \mat{G}_{(n)}) (\mat{K}^\intercal \mat{K} + w \mat{I})^{-1}. \notag
\end{align}
First note that
\begin{align*}
  \mat{I} + (\lambda (\mat{G}_{(n)}^\intercal)^+ - w \mat{G}_{(n)}) (\mat{K}^\intercal \mat{K} + w \mat{I})^{-1} \mat{G}_{(n)}^+
  \in \R^{R_n \times R_n}.
\end{align*}
The time complexity of computing the factored SVD of $\mat{K}^\intercal \mat{K}$
is
$O(\sum_{k \ne n} (I_{k} R_k^2 + R_k^\omega))$.
In \texttt{TuckerALS}, these are computed at the end of each factor matrix
update and therefore do not need to be computed in this step.
After this,
multiplying a vector by
$(\mat{K}^\intercal \mat{K} + w \mat{I})^{-1}$
can be done in time $O(R_{\ne n} \sum_{m\ne n} R_{m})$
by \Cref{lemma:kron_mat_mul}.
Therefore, computing
\[
  \mat{I} + (\lambda (\mat{G}_{(n)}^\intercal)^+ - w \mat{G}_{(n)}) (\mat{K}^\intercal \mat{K} + w \mat{I})^{-1} \mat{G}_{(n)}^+
\]
takes $O(R \sum_{k=1}^N R_k)$ time.
Computing the inverse of this matrix takes $O(R_n^\omega)$ time.
Moreover, this inverse can be used for all Richardson iteration steps and
row updates.
Finally, observe that multiply any vector with 
$\mat{G}_{(n)}^+$ or
$(\lambda (\mat{G}_{(n)}^\intercal)^+ - w \mat{G}_{(n)})$ takes $O(R)$ time.
Therefore, to evaluate $\mat{M}^+ \mat{z}$ for any $\mat{z}$, 
we use Equation~\Cref{eqn:woodbury_expression}
and repeatedly apply matrix-vector multiplcations from right to left.
The total running time per evaluation after preprocessing is
\[
  O\parens*{R \sum_{k=1}^N R_{k}}.
\]

Now we show that the vector
\[
  \mat{\tilde K}^\intercal \mat{\tilde K} \mat{z} + \lambda \mat{G}_{(n)}^+ (\mat{G}_{(n)}^\intercal)^+ \mat{z} + w \mat{N}^\intercal \mat{N} \mat{z}
  - \mat{\tilde K}^\intercal \mat{\tilde b}
\]
can be computed fast enough.
By the same argument above, this is equivalent to
\[
(\mat{\tilde K}^\intercal \mat{\tilde K} + w \mat{I})\mat{z}
+ \lambda \mat{G}_{(n)}^+ (\mat{G}_{(n)}^\intercal)^+ \mat{z} - w\mat{G}_{(n)}^+ \mat{G}_{(n)} \mat{z}
- \mat{\tilde K}^\intercal \mat{\tilde b}.
\]
We can compute $\mat{\tilde K}^\intercal \mat{\tilde b}$ in
$\tilde{O}(R_{\ne n}^2 \varepsilon^{-1} \log(I_n /\delta))$ time,
$\lambda \mat{G}_{(n)}^+ (\mat{G}_{(n)}^\intercal)^+ \mat{z}$
and $w\mat{G}_{(n)}^+ \mat{G}_{(n)} \mat{z}$
take $O(R)$ time,
and
$(\mat{\tilde K}^\intercal \mat{\tilde K} + w \mat{I})\mat{z}$ takes
$\tilde{O}(R + R_{\ne n}^2 \varepsilon^{-1} \log(I_n /\delta))$ time.
Summing all of these running times completes the proof.
\end{proof}

\subsection{Comparison of different factor matrix and core tensor update
algorithms.}
\label{subsec:different_algorithms_running_times}

In this subsection, we prove the running times for the different algorithms in~\Cref{table:algorithm_comparison}.

\paragraph{Naive factor matrix update.}

Consider the $n$-th factor matrix $\mat{A}^{(n)} \in \R^{I_n \times R_n}$.
First let
\[
  \mat{K} = \mat{G}_{(n)} \parens{
    \mat{A}^{(1)} \otimes \dots \otimes \mat{A}^{(n-1)} \otimes
    \mat{A}^{(n+1)} \otimes \dots \otimes \mat{A}^{(N)}}^\intercal
\]
and
\[
  \mat{B} = \mat{X}_{(n)}.
\]
For $i=1$ to $I_n$ we want to solve:
\[
  \mat{a}^{(n)}_{i:} \gets
          \argmin_{\mat{y} \in \R^{1 \times R_n}}
              \norm{\mat{y} \mat{K} - \mat{b}_{i:}}_{2}^2
          + \lambda \norm{\mat{y}}_{2}^2
\]
We use the normal equation for ridge regression:
\begin{align*}
  \mat{y}_{\opt}^\intercal &=
  \parens*{\mat{K} \mat{K}^\intercal + \lambda\mat{I}}^{+}
    \mat{K} \mat{b}_{i:}^\intercal.
\end{align*}
Let's start by computing $\mat{K}\mat{K}^\intercal + \lambda \mat{I}$ efficiently:
\begin{align}
\nonumber
  & \mat{K}\mat{K}^\intercal \\ \nonumber & =
    \mat{G}_{(n)} \parens{
    \mat{A}^{(1)} \otimes \dots \otimes \mat{A}^{(n-1)} \otimes
    \mat{A}^{(n+1)} \otimes \dots \otimes \mat{A}^{(N)}}^\intercal
    \\ & \hspace{0.5cm} \cdot
    \parens{
    \mat{A}^{(1)} \otimes \dots \otimes \mat{A}^{(n-1)} \otimes
    \mat{A}^{(n+1)} \otimes \dots \otimes \mat{A}^{(N)}}
    \mat{G}_{(n)}^\intercal \notag \\
    & =
    \mat{G}_{(n)}
    \parens*{
      {\mat{A}^{(1)}}^\intercal \mat{A}^{(1)}
      \otimes
      \dots
      \otimes
      {\mat{A}^{(n-1)}}^\intercal \mat{A}^{(n-1)}
      \otimes
      {\mat{A}^{(n+1)}}^\intercal \mat{A}^{(n+1)}
      \otimes
      \dots
      \otimes
      {\mat{A}^{(N)}}^\intercal \mat{A}^{(N)}
    }
    \mat{G}_{(n)}^\intercal. \label{eqn:factor_update_gram_matrix}
\end{align}
The internal terms of the form
${\mat{A}^{(k)}}^\intercal \mat{A}^{(k)}$ can be computed in
$O(R_k^2 I_k)$ time using $O(R_k^2)$ space.
Therefore, we can compute
\[
{\mat{A}^{(1)}}^\intercal \mat{A}^{(1)}
      \otimes
      \dots
      \otimes
      {\mat{A}^{(n-1)}}^\intercal \mat{A}^{(n-1)}
      \otimes
      {\mat{A}^{(n+1)}}^\intercal \mat{A}^{(n+1)}
      \otimes
      \dots
      \otimes
      {\mat{A}^{(N)}}^\intercal \mat{A}^{(N)}
\]
in $O(R_{\ne n}^2 N)$ time and $O(R_{\ne n}^2)$ space,
where $R_{\ne n} = \prod_{k \ne n} R_{k}$.
Further, we can compute $\mat{K}\mat{K}^\intercal \in \R^{R_n \times R_n}$
in time
\[
  O(R_{\ne n}^2 R_n + R_n^2 R_{\ne n})
\]
by multiplying the two rightmost matrices, and then the remaining two.
Therefore, we can compute $(\mat{K}\mat{K}^\intercal + \lambda\mat{I})^+$
in $O(R_n^\omega)$ time.

Now we need to work on the term $\mat{K} \mat{b}_{i:}^\intercal \in \R^{r_n}$.
We have the following:
\begin{align}
  \label{eqn:factor_matrix_second_kronmatmul}
  \mat{K} \mat{b}_{i:}^\intercal
  &=
  \mat{G}_{(n)} \parens{
    \mat{A}^{(1)} \otimes \dots \otimes \mat{A}^{(n-1)} \otimes
    \mat{A}^{(n+1)} \otimes \dots \otimes \mat{A}^{(N)}}^\intercal
  \mat{b}_{i:}^\intercal.
\end{align}
We need to compute the internal Kronecker product in
$O(I_{\ne n} R_{\ne n} N)$ time and $O(I_{\ne n} R_{\ne n})$ space.
The vector multiplication with $\mat{b}_{i:}^\intercal$ takes
$O(R_{\ne n} I_{\ne n})$ time (and this happens $I_n$ times).
These can actually be combined so that the total max memory consumed is $O(R_{\ne n})$
by computing each entry of the matrix-vector product individually.
Then the final matrix multiplication takes
$O(R_n R_{\ne n})$ time.

Putting everything together, we can compute 
$(\mat{K}\mat{K}^\intercal + \lambda\mat{I})^+ \mat{K} \mat{b}_{i:}^\intercal$
with a final matrix-vector multiplication in $O(R_n^2)$ time.
Thus, the overall running time of this factor matrix is:
\[
  O\parens*{\parens*{\sum_{k \ne n}R_{k}^2 I_k} + R_{\ne n}^2 N + R_{\ne n}^2 R_n + R_n^2 R_{\ne n}
  + R_n^\omega + I_{\ne n} R_{\ne n} N + R_{\ne n} I}
\]
and the space complexity is
$
  O(R_{\ne n}^2 + R_{\ne n}) = O(R_{\ne n}^2).
$

\paragraph{Naive core tensor update.}

We use the normal equation to solve:
\[
    \tensor{G} \gets \argmin_{\tensor{G}'}
             \norm{\parens{\mat{A}^{(1)} \ktimes \mat{A}^{(2)} \ktimes \cdots \ktimes \mat{A}^{(N)}} \vectorize(\tensor{G}') - \vectorize(\tensor{X})}_{2}^2
             + \lambda \norm{\vectorize{\parens{\tensor{G}'}}}_{2}^2.
\]
Let
$\mat{K} = \mat{A}^{(1)} \ktimes \mat{A}^{(2)} \ktimes \cdots \ktimes \mat{A}^{(N)}$.
We need to efficiently compute
\[
  \mat{g}_{\opt} = (\mat{K}^\intercal \mat{K} + \lambda \mat{I})^+ \mat{K}^\intercal \vectorize(\tensor{X}).
\]
Observe that
\begin{align*}
  \mat{K}^\intercal \mat{K}
  &=
  {\mat{A}^{(1)}}^\intercal \mat{A}^{(1)}
  \otimes
  {\mat{A}^{(2)}}^\intercal \mat{A}^{(2)}
  \otimes
  \dots
  \otimes
  {\mat{A}^{(N)}}^\intercal \mat{A}^{(N)}.
\end{align*}
Thus, we can compute $\mat{K}^\intercal\mat{K}$
in time $O((\sum_{n=1}^N R_{n}^2 I_n) + R^2 N)$ where $R = \prod_{n=1}^N R_n$.
It follows that we can then compute
$(\mat{K}^\intercal \mat{K} + \lambda \mat{I})^+$ in time $O(R^\omega)$
and $O(R^2)$ space.

Now we need to compute $\mat{K}^\intercal \mat{x} \in \R^{R}$.
Computing $\mat{K}^\intercal$ explicitly requires
$O(IR)$ space and $O(IRN)$ time.
Computing the matrix-vector product one entry at a time requires
$O(IRN)$ time and only $O(R)$ space.
Putting everything together, the overall running time is:
\[
  O\parens*{\parens*{\sum_{n=1}^N R_{n}^2 I_n} + R^2 N + R^\omega + IRN}.
\]
The space complexity is $O(R^2)$.

\paragraph{Factor matrix update with \texttt{KronMatMul}.}
We use \texttt{KronMatMul} (\Cref{lemma:kron_mat_mul})
in \Cref{eqn:factor_update_gram_matrix} as
follows:
\[
  \texttt{KronMatMul}\parens*{[
    {\mat{A}^{(1)}}^\intercal \mat{A}^{(1)},
    \dots,
    {\mat{A}^{(n-1)}}^\intercal \mat{A}^{(n-1)},
    {\mat{A}^{(n+1)}}^\intercal \mat{A}^{(n+1)},
    \dots
    {\mat{A}^{(N)}}^\intercal \mat{A}^{(N)}
  ], \mat{G}_{(n)}^\intercal
  }.
\]
The resulting matrix has size $R_{\ne n} \times R_n$.
The running time for this step is
$O(R_{\ne n} R_n \sum_{k \ne n} R_k) = O(R \sum_{k \ne n} R_k)$
and the space complexity is $O(R)$.
Previously this step was $O(R_{\ne n}^2N + R_{\ne n}^2 R_n)$.

Now for the expensive matrix-vector product:
\[
  \texttt{KronMatMul}\parens*{
    [
      {\mat{A}^{(1)}}^\intercal, \dots, {\mat{A}^{(n-1)}}^\intercal,
    {\mat{A}^{(n+1)}}^\intercal, \dots, {\mat{A}^{(N)}}^\intercal],
  \mat{b}_{i:}^\intercal
  }
\]
The result is a vector of size $R_{\ne n}$.
The running time of this subroutine is $O(I_{\ne n} \sum_{k \ne n} R_k)$,
and it uses $O(I_{\ne n})$ space.
It follows that we can compute $\mat{K}\mat{b}_{i:}^\intercal$ by multiplying
this vector with $\mat{G}_{(n)}$,
which takes $O(R_n R_{\ne n}) = O(R)$ time.
Previously this took $O(I_{\ne n} R_{\ne n}N + R_{\ne n}I_{\ne n} + R)$ time.

Thus, the updated time complexity is
\begin{align*}
  O\parens*{R\parens*{\sum_{k\neq n} R_n}
  + R_n^\omega + I_{\ne n} \parens*{\sum_{k \ne n} R_k} + R_{n}^2 I_n}.
\end{align*}

\paragraph{Core update with \texttt{KronMatMul}.}
The running time for $\mat{K}^\intercal \mat{x} \in \R^{I_1 \cdots I_N}$ is
$O((R_1 + \dots + R_N) I)$ and space is $O(I)$.
Therefore, the overall running time is
$
  O\parens{R^2 N + R^\omega + I \sum_{n=1}^N R_n}.
$
The overall space complexity is $O(R^2)$.
\section{Supplementary Material for Experiments}
\label{app:experiments}

\subsection{Kronecker Regression}
\label{app:kronecker_regression_experiments}

Here we provide the full experimental results for the Kronecker regression
numerical experiments in~\Cref{sec:experiments}.
In the tables below,
\texttt{Baseline} is the naive baseline algorithm that computes the
normal equation exactly (i.e., fully constructs $\mat{K}$ in the $\mat{K}^\intercal \mat{b}$ term),
\texttt{KronMatMul} is an exact baseline that uses the fast Kronecker-vector product in \Cref{lemma:kron_mat_mul},
\texttt{DJSSW19} is the sketching algorithm of~\citet{diao2019optimal},
and \Cref{alg:fast_kronecker_regression} is our \FastKroneckerRegression algorithm.

We note that the number of sketches used by both sketching algorithms is reduced by an
$\alpha = 10^{-5}$ factor so that the number of total row samples is of the same
order of magnitude as the fixed number of samples in the experiments of~\cite{diao2018sketching,diao2019optimal}. Our choice of $\lambda = 10^{-3}$ for the
L2 regularization strength does not affect the running times
or significantly impact the relative errors.

\paragraph{Running times.}
The running times of all the algorithms are presented in~\Cref{tab:appendix_runtimes}.
We denote entries where the algorithm ran out of memory by a dash.

\begin{table}[h]
    \caption{Running times of Kronecker regression algorithms with an $n^2 \times d^2$ design matrix (seconds).}
    \label{tab:appendix_runtimes}
    \centering
    \begin{tabular}{rrrrrrrrrrr}
        \toprule
        $d$ & $n$ & \texttt{Baseline} & \texttt{KronMatMul} & \texttt{DJSSW19} & \Cref{alg:fast_kronecker_regression}  &  \\
        \midrule
8 & 128 & 0.0106 & 0.0008 & 0.0056 & 0.0074 \\
 & 256 & 0.0349 & 0.0014 & 0.0051 & 0.0155 \\
 & 512 & 0.1343 & 0.0037 & 0.0064 & 0.0172 \\
 & 1024 & 0.5009 & 0.0115 & 0.0087 & 0.0192 \\
 & 2048 & 1.9403 & 0.0466 & 0.014 & 0.0229 \\
 & 4096 & 7.5253 & 0.1855 & 0.0236 & 0.042 \\
 & 8192 & 29.9479 & 0.676 & 0.0404 & 0.0578 \\
 & 16384 & -- & 3.9717 & 0.0776 & 0.1019 \\
        \midrule
16 & 128 & 0.1416 & 0.001 & 0.1154 & 0.0942 \\
& 256 & 0.1992 & 0.0022 & 0.1212 & 0.0988 \\
& 512 & 0.5485 & 0.0062 & 0.1216 & 0.0927 \\
& 1024 & 1.8994 & 0.0222 & 0.1184 & 0.0982 \\
& 2048 & 7.1677 & 0.0916 & 0.1286 & 0.1112 \\
& 4096 & 27.5237 & 0.3399 & 0.1365 & 0.114 \\
& 8192 & -- & 2.2858 & 0.1599 & 0.1367 \\
& 16384 & -- & 15.0028 & 0.2036 & 0.1869 \\
\midrule
32 & 128 & 5.9593 & 0.0019 & 10.3478 & 1.6544 \\
& 256 & 7.0358 & 0.0039 & 10.4735 & 1.8021 \\
& 512 & 8.1757 & 0.0118 & 10.3156 & 1.7045 \\
& 1024 & 13.6273 & 0.0474 & 10.3379 & 1.6802 \\
& 2048 & 31.4996 & 0.1739 & 10.4088 & 1.7193 \\
& 4096 & -- & 1.129 & 10.2585 & 1.6732 \\
& 8192 & -- & 7.7335 & 10.4856 & 1.7505 \\
& 16384 & -- & 29.7196 & 10.5815 & 1.8613 \\
\midrule
64 & 128 & 966.1175 & 0.0058 & 1240.6471 & 29.6798 \\
& 256 & 982.5172 & 0.0102 & 1243.1648 & 29.6453 \\
& 512 & 970.6558 & 0.0262 & 1240.7365 & 29.3725 \\
& 1024 & 985.4097 & 0.09 & 1239.8602 & 29.5546 \\
& 2048 & -- & 0.6687 & 1238.6558 & 29.4367 \\
& 4096 & -- & 3.6422 & 1243.2635 & 29.7029 \\
& 8192 & -- & 14.5137 & 1244.8347 & 29.8926 \\
& 16384 & -- & 67.0037 & 1241.5569 & 29.9143 \\
        \bottomrule
    \end{tabular}
\end{table}

\paragraph{Losses.}
We list the losses of all the Kronecker regression algorithms in~\Cref{tab:appendex_losses}.
Entries where the algorithm ran out of memory by a dash are denoted by a dash.
We note that \texttt{DJSSW19} begins to have numerical stability problems for $n \ge 2048$
for $d \in \{16, 32, 64\}$, though it is solving the same sketched ridge regression problem
as~\Cref{alg:fast_kronecker_regression}.

\begin{table}[h]
    \caption{Losses of the Kronecker regression algorithms with a random $n^2 \times d^2$ design matrix.}
    \label{tab:appendex_losses}
    \centering
    \begin{tabular}{rrrrrrrrrrr}
        \toprule
        $d$ & $n$ & \texttt{Baseline} & \texttt{KronMatMul} & \texttt{DJSSW19} & \Cref{alg:fast_kronecker_regression}  &  \\
        \midrule
8 & 128 & 0.0033 & 0.0033 & 0.0035 & 0.0035 \\
& 256 & 0.0151 & 0.0151 & 0.0161 & 0.0161 \\
& 512 & 0.0606 & 0.0606 & 0.0635 & 0.0635 \\
& 1024 & 0.2531 & 0.2531 & 0.2636 & 0.2636 \\
& 2048 & 1.032 & 1.032 & 1.0599 & 1.0598 \\
& 4096 & 4.2118 & 4.2118 & 4.4741 & 4.4525 \\
& 8192 & 16.9013 & 16.9009 & 17.6642 & 17.5533 \\
& 16384 & -- & 66.6618 & 96.9055 & 72.6006 \\
        \midrule
16 & 128 & 0.0019 & 0.0019 & 0.0021 & 0.0021 \\
& 256 & 0.0077 & 0.0077 & 0.0081 & 0.0081 \\
& 512 & 0.0313 & 0.0313 & 0.0324 & 0.0324 \\
& 1024 & 0.1306 & 0.1306 & 0.1328 & 0.1328 \\
& 2048 & 0.5244 & 0.5244 & 0.5457 & 0.5375 \\
& 4096 & 2.0917 & 2.0915 & 2.2278 & 2.1381 \\
& 8192 & -- & 8.2601 & 9.1014 & 8.5251 \\
& 16384 & -- & 33.3611 & 42.8783 & 34.4764 \\
\midrule
32 & 128 & 0.0007 & 0.0007 & 0.0009 & 0.0009 \\
& 256 & 0.0037 & 0.0037 & 0.004 & 0.004 \\
& 512 & 0.0163 & 0.0163 & 0.0172 & 0.0172 \\
& 1024 & 0.0671 & 0.0671 & 0.0693 & 0.0692 \\
& 2048 & 0.2661 & 0.266 & 0.388 & 0.2701 \\
& 4096 & -- & 1.0434 & 2.9236 & 1.0589 \\
& 8192 & -- & 4.1777 & 4.7172 & 4.2537 \\
& 16384 & -- & 16.874 & 24149.8906 & 17.4612 \\
\midrule
64 & 128 & 0.0002 & 0.0002 & 0.0004 & 0.0004 \\
& 256 & 0.0015 & 0.0015 & 0.0019 & 0.0019 \\
& 512 & 0.0078 & 0.0078 & 0.0087 & 0.0087 \\
& 1024 & 0.0307 & 0.0307 & 0.035 & 0.0323 \\
& 2048 & -- & 0.1233 & 1.5768 & 0.1264 \\
& 4096 & -- & 0.5068 & 275.5661 & 0.5198 \\
& 8192 & -- & 2.0735 & 333.4299 & 2.1358 \\
& 16384 & -- & 8.2375 & 546391.7285 & 8.6082 \\
        \bottomrule
    \end{tabular}
\end{table}

\subsection{Low-rank Tucker Decomposition of Image Tensors}
\label{app:tensor_decomposition_experiments}

Here we compare different Kronecker regression algorithms
in the core update of the alternating least squares (ALS) algorithm for Tucker decompositions.
For the sketching-based algorithms, we increase the number of row samples to study
how this affects the quality of the tensor decomposition.
We record the quality of the tensor decomposition using the relative reconstruction
error
$\norm{\widehat{\tensor{X}} - \tensor{X}}^2_{\frobenius} / \norm{\tensor{X}}^2_{\frobenius}$.
The number of row samples used is $m \in \{1024, 4096, 16384\}$,
as in the experiments of~\cite{diao2018sketching, diao2019optimal}.

We compare against higher-order orthogonal iteration (HOOI) and ALS as baseline algorithms.
We use the Tensorly~\cite{kossaifi2019tensorly} implementation of HOOI,
which is an industry standard.
We do not use L2 regularization so that we can compare against HOOI.
We compare our Kronecker regression algorithm with \cite[Algorithm 1]{diao2019optimal},
denoted by \texttt{DJSSW19}.
The running times reported are the mean iteration times, where an iteration
includes all factor matrix updates and the core tensor update.
Trials that ran out of memory or failed to converge are denoted by a dash.
All algorithms are run for five iterations.

\paragraph{Cardiac MRI.}
This dataset is a $256 \times 256 \times 14 \times 20$ tensor whose elements are
MRI measurements indexed by $(x,y,z,t)$
where $(x,y,z)$ is a point in space and $t$ corresponds to time.

\begin{table}[H]
    \caption{Relative reconstruction errors for cardiac MRI tensor with different multilinear ranks.}
    \label{tab:mri_rre}
    \centering
    \begin{tabular}{cccccccccccccccc}
        \toprule
        & & & \multicolumn{3}{c}{\FastKroneckerRegression} & \multicolumn{3}{c}{\texttt{DJSSW19}} \\
        rank & HOOI & ALS & 1024 & 4096 & 16384 & 1024 & 4096 & 16384 \\
        \midrule
1,1,1,1 & 0.648 & 0.648 & 0.649 & 0.648 & 0.648 & 0.648 & 0.648 & 0.648 \\
4,2,2,1 & 0.569 & 0.570 & 0.574 & 0.571 & 0.570 & 0.573 & 0.571 & 0.570 \\
4,4,2,2 & 0.511 & 0.511 & 0.533 & 0.514 & 0.512 & 0.561 & 0.515 & 0.513 \\
8,2,2,1 & 0.569 & 0.577 & 0.584 & 0.579 & 0.577 & 0.589 & 0.581 & 0.577 \\
8,4,4,1 & 0.448 & 0.452 & 0.491 & 0.459 & 0.453 & 0.488 & 0.458 & 0.454 \\
8,4,4,2 & 0.448 & 0.451 & 0.492 & 0.475 & 0.455 & 0.585 & 0.466 & 0.455 \\
8,8,2,2 & 0.465 & 0.467 & 0.498 & 0.485 & 0.471 & 0.556 & 0.480 & 0.470 \\
8,8,4,4 & 0.350 & 0.351 & -- & 0.371 & 0.356 & 0.679 & 0.476 & 0.409 \\
        \bottomrule
    \end{tabular}
\end{table}

\begin{table}[H]
    \caption{Average iteration time of ALS with sketching-based Kronecker regression
    for cardiac MRI tensor with different multilinear ranks (seconds).}
    \label{tab:mri_time}
    \centering
    \begin{tabular}{cccccccccccccccc}
        \toprule
        & & & \multicolumn{3}{c}{\FastKroneckerRegression} & \multicolumn{3}{c}{\texttt{DJSSW19}} \\
        rank & HOOI & ALS & 1024 & 4096 & 16384 & 1024 & 4096 & 16384 \\
        \midrule
1,1,1,1 & 1.187 & 1.307 & 1.328 & 1.334 & 1.321 & 1.397 & 1.313 & 1.346 \\
4,2,2,1 & 1.429 & 1.368 & 1.345 & 1.326 & 1.349 & 1.395 & 1.350 & 1.383 \\
4,4,2,2 & 1.458 & 1.463 & 1.539 & 1.511 & 1.536 & 1.413 & 1.497 & 1.719 \\
8,2,2,1 & 2.401 & 1.339 & 1.421 & 1.415 & 1.347 & 1.368 & 1.300 & 1.425 \\
8,4,4,1 & 1.664 & 1.435 & 1.575 & 1.562 & 1.676 & 1.573 & 1.693 & 2.737 \\
8,4,4,2 & 1.745 & 1.614 & 1.782 & 1.754 & 2.137 & 1.820 & 2.667 & 6.496 \\
8,8,2,2 & 1.741 & 1.466 & 1.621 & 1.810 & 2.079 & 1.751 & 2.525 & 6.133 \\
8,8,4,4 & 1.784 & 1.835 & 2.131 & 2.745 & 5.210 & 9.199 & 35.977 & 128.538 \\
        \bottomrule
    \end{tabular}
\end{table}

We also investigate how sensitive the convergence rate of sketching-based ALS is to the choice of the error parameter $\varepsilon$.
First, we reduce the number of sampled rows by $\alpha=0.001$ to compensate
for the large constant coefficient in Line~8 in \Cref{alg:fast_kronecker_regression};
otherwise, we do not see any quality degradation even for $\varepsilon = 0.99$.
Then in \Cref{tab:mri_convergence_1} and \Cref{tab:mri_convergence_2}, we compare the RRE
at each step of ALS (without sampling) and when using \FastKroneckerRegression as a subroutine
for decreasing values of $\varepsilon$.

\begin{table}[H]
    \caption{Relative reconstruction errors for cardiac MRI tensor with multilinear rank $(4,4,2,2)$
    during ALS with and without using \FastKroneckerRegression as a subroutine.}
    \label{tab:mri_convergence_1}
    \centering
    \begin{tabular}{cccccccccccccccc}
        \toprule
        & & \multicolumn{5}{c}{\FastKroneckerRegression}\\
        Step & ALS & $\varepsilon = 0.8$ & $\varepsilon = 0.4$ & $\varepsilon = 0.2$ & $\varepsilon = 0.1$ & $\varepsilon = 0.05$ \\
        \midrule
1 & 0.55883 & 0.56270 & 0.56044 & 0.55957 & 0.55942 & 0.55899 \\
2 & 0.51292 & 0.51609 & 0.51511 & 0.51443 & 0.51377 & 0.51316 \\
3 & 0.51096 & 0.51466 & 0.51206 & 0.51167 & 0.51139 & 0.51120 \\
4 & 0.51081 & 0.51338 & 0.51287 & 0.51171 & 0.51127 & 0.51102 \\
5 & 0.51079 & 0.51361 & 0.51286 & 0.51150 & 0.51126 & 0.51105 \\
        \bottomrule
    \end{tabular}
\end{table}

\begin{table}[H]
    \caption{Relative reconstruction errors for cardiac MRI tensor with multilinear rank $(8,8,4,4)$
    during ALS with and without using \FastKroneckerRegression as a subroutine.}
    \label{tab:mri_convergence_2}
    \centering
    \begin{tabular}{cccccccccccccccc}
        \toprule
        & & \multicolumn{5}{c}{\FastKroneckerRegression}\\
        Step & ALS & $\varepsilon = 0.8$ & $\varepsilon = 0.4$ & $\varepsilon = 0.2$ & $\varepsilon = 0.1$ & $\varepsilon = 0.05$ \\
        \midrule
1 & 0.44961 & 0.45165 & 0.45084 & 0.45021 & 0.44987 & 0.44975 \\
2 & 0.36573 & 0.36707 & 0.36650 & 0.36612 & 0.36609 & 0.36579 \\
3 & 0.35488 & 0.35549 & 0.35571 & 0.35508 & 0.35516 & 0.35504 \\
4 & 0.35162 & 0.35293 & 0.35238 & 0.35201 & 0.35184 & 0.35177 \\
5 & 0.35081 & 0.35193 & 0.35149 & 0.35124 & 0.35100 & 0.35092 \\
        \bottomrule
    \end{tabular}
\end{table}

\paragraph{Hyperspectral.}
This dataset is a $1024 \times 1344 \times 33$ tensor of
time-lapse hyperspectral radiance images
capturing a 1-hour interval of a nature scene undergoing illumination changes~\cite{nascimento2016spatial}.
These hyperspectral images and the COIL-100 dataset have both
been used recently as benchmark tasks for low-rank tensor decomposition~\cite{ma2021fast,battaglino2018practical,zhou2014decomposition}.

\begin{table}[H]
    \caption{Relative reconstruction errors for hyperspectral tensor with different multilinear ranks.}
    \label{tab:hyperspectral_rre}
    \centering
    \begin{tabular}{cccccccccccccccc}
        \toprule
        & & & \multicolumn{3}{c}{\FastKroneckerRegression} & \multicolumn{3}{c}{\texttt{DJSSW19}} \\
        rank & HOOI & ALS & 1024 & 4096 & 16384 & 1024 & 4096 & 16384 \\
        \midrule
1,1,1 & 0.271 & 0.271 & 0.271 & 0.271 & 0.271 & 0.271 & 0.271 & 0.271 \\
2,2,2 & 0.235 & 0.235 & 0.236 & 0.236 & 0.235 & 0.236 & 0.236 & 0.235 \\
4,4,4 & 0.203 & 0.208 & 0.213 & 0.211 & 0.211 & 0.213 & 0.208 & 0.208 \\
8,8,4 & 0.169 & 0.170 & 0.189 & 0.176 & 0.171 & 0.201 & 0.175 & 0.171 \\
8,8,8 & 0.169 & 0.169 & 0.213 & 0.177 & 0.171 & 0.261 & 0.180 & 0.171 \\
16,16,4 & 0.133 & 0.134 & -- & 0.155 & 0.139 & 0.465 & 0.156 & 0.139 \\
        \bottomrule
    \end{tabular}
\end{table}

\begin{table}[H]
        \caption{Average iteration time of ALS with sketching-based Kronecker regression
    for the hyperspcetral image tensor with different multilinear ranks (seconds).}
    \label{tab:hyperspectral_time}
    \centering
    \begin{tabular}{cccccccccccccccc}
        \toprule
        & & & \multicolumn{3}{c}{\FastKroneckerRegression} & \multicolumn{3}{c}{\texttt{DJSSW19}} \\
        rank & HOOI & ALS & 1024 & 4096 & 16384 & 1024 & 4096 & 16384 \\
        \midrule
1,1,1 & 1.873 & 2.377 & 2.222 & 2.241 & 2.312 & 2.224 & 2.241 & 2.203 \\
2,2,2 & 2.019 & 2.491 & 2.449 & 2.413 & 2.476 & 2.450 & 2.470 & 2.452 \\
4,4,4 & 2.255 & 2.965 & 2.798 & 2.838 & 2.922 & 2.747 & 2.902 & 3.022 \\
8,8,4 & 2.845 & 3.282 & 3.150 & 3.305 & 3.481 & 3.324 & 4.240 & 7.783 \\
8,8,8 & 2.888 & 4.043 & 3.700 & 4.168 & 5.682 & 5.559 & 11.774 & 34.878 \\
16,16,4 & 3.997 & 3.880 & 3.969 & 4.802 & 7.781 & 11.104 & 38.066 & 129.878 \\
        \bottomrule
    \end{tabular}
\end{table}

\paragraph{COIL-100.}
This dataset is a $7200 \times 120 \times 120 \times 3$ tensor
that contains 7200 colored images of 100 objects (72 images per object).
These objects have a wide variety of geometric characteristics and reflective properties.
To construct this dataset,
these objects were placed on a rotating table
and pictures were taken at pose intervals of 5 degrees,
hence 72 images per object~\cite{nene1996columbia}.

\begin{table}[H]
     \caption{Relative reconstruction errors for the COIL-100 tensor with different multilinear ranks.}
    \label{tab:coil_rre}
    \centering
    \begin{tabular}{cccccccccccccccc}
        \toprule
        & & & \multicolumn{3}{c}{\FastKroneckerRegression} & \multicolumn{3}{c}{\texttt{DJSSW19}} \\
        rank & HOOI & ALS & 1024 & 4096 & 16384 & 1024 & 4096 & 16384 \\
        \midrule
1,1,1,1 & 0.528 & 0.528 & 0.528 & 0.528 & 0.528 & 0.528 & 0.528 & 0.528 \\
4,2,2,1 & 0.460 & 0.460 & 0.463 & 0.461 & 0.460 & 0.464 & 0.461 & 0.461 \\
8,2,2,1 & 0.460 & 0.460 & 0.472 & 0.462 & 0.461 & 0.466 & 0.462 & 0.461 \\
8,4,4,1 & 0.414 & 0.414 & 0.447 & 0.421 & 0.416 & 0.443 & 0.420 & 0.416 \\
8,4,4,2 & 0.379 & 0.386 & 0.454 & 0.400 & 0.388 & 0.438 & 0.399 & 0.388 \\
16,4,4,2 & 0.349 & 0.349 & 0.499 & 0.377 & 0.355 & 0.517 & 0.376 & 0.356 \\
        \bottomrule
    \end{tabular}
\end{table}

\begin{table}[H]
    \caption{Average iteration time of ALS with sketching-based Kronecker regression
    for the COIL-100 tensor with different multilinear ranks (seconds).}
    \label{tab:coil_time}
    \centering
    \begin{tabular}{cccccccccccccccc}
        \toprule
        & & & \multicolumn{3}{c}{\FastKroneckerRegression} & \multicolumn{3}{c}{\texttt{DJSSW19}} \\
        rank & HOOI & ALS & 1024 & 4096 & 16384 & 1024 & 4096 & 16384 \\
        \midrule
1,1,1,1 & 2.455 & 10.975 & 10.128 & 10.147 & 10.138 & 10.131 & 10.264 & 10.195 \\
4,2,2,1 & 6.100 & 11.718 & 11.122 & 11.153 & 11.164 & 11.104 & 10.941 & 11.080 \\
8,2,2,1 & 12.092 & 11.727 & 11.137 & 11.120 & 11.164 & 11.111 & 11.069 & 11.120 \\
8,4,4,1 & 10.877 & 13.873 & 13.096 & 12.954 & 13.051 & 12.924 & 13.067 & 13.986 \\
8,4,4,2 & 10.906 & 19.614 & 17.862 & 17.784 & 17.984 & 17.957 & 18.575 & 22.107 \\
16,4,4,2 & 19.225 & 19.921 & 18.367 & 18.299 & 18.525 & 19.922 & 25.639 & 48.483 \\
        \bottomrule
    \end{tabular}
\end{table}

\end{document}